\documentclass[11pt]{article}

\usepackage[hyphens]{url}
\usepackage[pagebackref,colorlinks]{hyperref}
\usepackage[hyphenbreaks]{breakurl}
\usepackage{amsmath} 
\usepackage{amsthm} 
\usepackage{amssymb}	
\usepackage{graphicx} 
\usepackage{multicol} 
\usepackage{multirow}
\usepackage{color}
\usepackage[dvips,letterpaper,margin=1in,bottom=1in]{geometry}
\usepackage[capitalize,noabbrev]{cleveref}

\usepackage{bm}
\usepackage{bbm}

\usepackage{ytableau}

\usepackage[utf8]{inputenc}
\usepackage[english]{babel}

\usepackage[T1]{fontenc}
\AtBeginDocument{%
  \DeclareFontShape{T1}{cmr}{m}{scit}{<->ssub*cmr/m/sc}{}%
}

\usepackage{diagbox}
\usepackage{mathtools}

\newtheorem{theorem}{Theorem}[section]

\newtheorem{lemma}[theorem]{Lemma}
\newtheorem{corollary}[theorem]{Corollary}
\newtheorem{proposition}[theorem]{Proposition}

\newtheorem{definition}[theorem]{Definition}

\DeclareMathOperator{\clip}{clip}
\newcommand{\EntPoly} {\mathsf{EntPoly}}

\DeclarePairedDelimiter\rbra{\lparen}{\rparen}

\DeclarePairedDelimiter\cbra{\{}{\}}
\DeclarePairedDelimiter\abs{\lvert}{\rvert}

\DeclarePairedDelimiter\ket{\lvert}{\rangle}
\DeclarePairedDelimiter\bra{\langle}{\rvert}

\DeclarePairedDelimiter\norm{\lVert}{\rVert}

\newcommand{\set}[2] {\left\{\, #1 \colon #2 \,\right\}}

\newcommand{\tr} {\operatorname{tr}}

\newcommand{\diag} {\operatorname{diag}}

\newcommand{\rank} {\operatorname{rank}}
\newcommand{\supp} {\operatorname{supp}}

\renewcommand{\H} {\operatorname{H}} 
\renewcommand{\S} {\operatorname{S}} 

\newcommand{\calL}{\mathcal{L}}
\newcommand{\calH}{\mathcal{H}}

\newcommand{\Image}[1]{\operatorname{Im}(#1)}

\newcommand{\Dens}{\mathsf{D}}
\newcommand{\Herm}{\mathsf{Herm}}
\newcommand{\Pos}{\mathsf{Pos}}
\newcommand{\Id}{\mathbbm{I}}

\newcommand{\dTV} {\mathrm{d}_{\mathrm{TV}}}
\newcommand{\dtr} {\mathrm{d}_{\mathrm{tr}}}

\newcommand{\qKL}[2] {\operatorname{D}(#1 \,\|\, #2)} 

\newcommand{\ketbra}[2]{\ensuremath{\ket{#1}\!\bra{#2}}}

\usepackage{latexsym}
\usepackage{CJK}

\usepackage{enumerate}

\usepackage{algorithm}
\usepackage{algpseudocode}

\usepackage{stmaryrd}
\usepackage{tabularx}
\usepackage{booktabs}
\usepackage{adjustbox}
\usepackage{array}
\newcolumntype{Y}{>{\raggedright\arraybackslash}X}

\newcommand{\footremember}[2]{%
    \footnote{#2}
    \newcounter{#1}
    \setcounter{#1}{\value{footnote}}%
}

\usepackage{tikz}
\usetikzlibrary{quantikz2}

\begin{document}

\title{Breaking the Quadratic Barrier for von Neumann Entropy Estimation}
\author{Minbo Gao\footremember{1}{Minbo Gao is with the Institute of Software, Chinese Academy of Sciences, and with University of Chinese Academy of Sciences (\href{mailto:gaomb@ios.ac.cn}{\nolinkurl{gaomb@ios.ac.cn}} or
\href{mailto:gmb17@tsinghua.org.cn}{\nolinkurl{gmb17@tsinghua.org.cn}}).}
\and
Qisheng Wang\footremember{2}{Qisheng Wang is with the School of Computer Science, Shanghai Jiao Tong University (\href{mailto:QishengWang1994@gmail.com}{\nolinkurl{QishengWang1994@gmail.com}).}}}
\date{}

\maketitle

\begin{abstract}
    We study the sample complexity of estimating the von Neumann entropy of an unknown $d$-dimensional quantum state. All previously known estimators require $\Omega(d^2)$ samples, and plug-in estimators are known to face a quadratic barrier. We give the first subquadratic-sample estimator: for additive error $\varepsilon$, our estimator uses
    \[
    O\!\left(\frac{d^2 \log^2(\log(d)) \log(1/\varepsilon)}{\varepsilon^2 \log^2(d)} + \frac{\log^2(d/\varepsilon)}{\varepsilon^2}\right)
    \]
samples. In particular, for constant $\varepsilon$, the complexity is $O_\varepsilon(d^2\log^2(\log(d))/\log^2(d))=o(d^2)$. 
Our analysis introduces a new pinching inequality that bounds the entropy loss under a space direct-sum decomposition, together with a bias-corrected estimator for large eigenvalues and a new bounded-coefficient polynomial estimator for small eigenvalues.

\end{abstract}

\newpage
\tableofcontents
\newpage

\section{Introduction}

Entropy estimation is a fundamental problem in information theory. 
In the classical world, the estimation of Shannon entropy \cite{Sha48a,Sha48b} 
\[
\H\rbra{P} = -\sum_{i=1}^d p_i \log\rbra*{p_i}
\]
serves as a key component in quantifying and analyzing sequence variability \cite{SEM91}, neural data \cite{NBvS04}, network traffic \cite{LSO+06}, etc. 
The study of entropy estimation from the perspective of algorithms and complexity dates back to \cite{Pan03}. 
A straightforward approach is to estimate the whole distribution and use a plug-in estimator, which was shown to have sample complexity linear in the dimension \cite{JVHW17}.
It was once an open question whether a sublinear estimator for Shannon entropy exists until the non-constructive sublinear estimator proposed in \cite{Pan04}. 
After that, a series of work \cite{BDKR05,Val11,VV11a,VV11b,VV17,JVHW15,JVHW17,WY16} continued to focus on this direction. 
As a now standard result, it has been fully characterized that, to estimate the Shannon entropy of a $d$-dimensional discrete probability distribution to within additive error $\varepsilon$, it is sufficient and necessary to use $\Theta\rbra{\frac{d}{\varepsilon\log\rbra{d}} + \frac{\log^2\rbra{d}}{\varepsilon^2}}$ samples \cite{JVHW15,WY16}. 

The von Neumann entropy \cite{vN27,vN32}
\[
\S\rbra{\rho} = -\tr\rbra*{\rho \log\rbra{\rho}},
\]
the quantum generalization of Shannon entropy, serves as a key quantum information-theoretic quantity to measure the randomness of quantum systems.
The estimation of von Neumann entropy has applications in entanglement entropy estimation \cite{IMP+15}, quantum Gibbs state preparation \cite{WH19,CLW20,WLW21}, and Hamiltonian learning \cite{AAKS21}. 
The sample complexity of estimating the von Neumann entropy of a $d$-dimensional quantum state to within additive error $\varepsilon$ was first established in \cite{AISW20} as $O\rbra{\frac{d^2}{\varepsilon^2}}$ via a plug-in estimator (or called empirical Young diagram estimator) based on weak Schur sampling \cite{CHW07}, which was later improved to $O\rbra{\frac{d^2}{\varepsilon}+\frac{\log^2\rbra{d}}{\varepsilon^2}}$ in \cite{BMW16} (cf.\ \cite[Theorem 1.27]{OW17}). 
A decade later, since all known previous estimators have sample complexity $O\rbra{d^2}$, this question still remains:
\[
\textit{Can we estimate the von Neumann entropy with $o(d^2)$ samples?}
\]
A partially negative answer to this question was given in \cite{AISW20}, where they showed that $\Omega\rbra{\frac{d^2}{\varepsilon}}$ samples are necessary to estimate the von Neumann entropy for any plug-in estimator. 

In this paper, in sharp contrast, we give a positive answer to this question by presenting a von Neumann entropy estimator with sample complexity subquadratic in the dimension $d$, breaking the $d^2$ barrier for plug-in estimators established in \cite{AISW20}. 

\begin{theorem} \label{thm:main-intro}
    There is an estimator that, for every $d$-dimensional quantum state $\rho$, estimates the von Neumann entropy $\S\rbra{\rho}$ to within additive error $\varepsilon$ with success probability at least $2/3$, using 
    \[
    O\rbra*{\frac{d^2 \log^2\rbra{\log\rbra{d}} \log\rbra{1/\varepsilon}}{\varepsilon^2 \log^2\rbra{d}} + \frac{\log^2\rbra{d/\varepsilon}}{\varepsilon^2}}
    \]
    samples of $\rho$. 
\end{theorem}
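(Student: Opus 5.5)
Our plan is to combine three ingredients advertised in the abstract: a pinching step that reduces the problem to estimating entropies on smaller blocks, a bias-corrected estimator for large eigenvalues, and a bounded-coefficient polynomial estimator for small eigenvalues. The goal is to get the classical $d/\log d$ phenomenon, but applied to the $d^2$ quantum cost.

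\textbf{Step 1 (splitting).} We first spend a small fraction of the samples on a coarse spectral estimate. Concretely, we use weak Schur sampling or empirical Young diagram tomography with $O(d^2/\log^2 d)$-type accuracy to find a direct-sum decomposition $\mathbb{C}^d=\bigoplus_j V_j$. In this decomposition the ``large'' eigenvalues, those above a threshold $\tau\approx \log(d)/n$ (up to $\log\log$ factors), live in one subspace, and the ``small'' ones live in its complement. We then invoke the new pinching inequality. It should say that if $\rho$ is nearly block-diagonal with respect to the decomposition, then $\S(\sum_j P_j\rho P_j)-\S(\rho)$ is at most $\varepsilon/3$. In that case it suffices to estimate $\S$ of the pinched state, which equals $\H(\{\tr P_j\rho\})+\sum_j \tr(P_j\rho)\,\S(\rho_j)$. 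The classical part costs only $O(\log^2(d/\varepsilon)/\varepsilon^2)$ samples. The second term in the bound comes from this cost together with the variance term.

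\textbf{Step 2 (large eigenvalues).} On the large-eigenvalue block we apply weak Schur sampling and a plug-in estimator $-\sum \hat\lambda_i\log\hat\lambda_i$. We subtract a first-order bias term, for instance the Miller--Madow-type correction coming from the expectation of the Young diagram under Schur--Weyl duality. With this correction the residual bias is $O(\text{rank}/(n\tau))$ rather than $O(d^2/n)$. The variance is controlled by $\log^2/n$. We will need concentration of Young diagram rows around eigenvalues, which is the Keyl/O'Donnell--Wright type estimate.

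\textbf{Step 3 (small eigenvalues).} For eigenvalues below $\tau$ we approximate $-x\log x$ on $[0,\tau]$ by a polynomial $p$ of degree $L\asymp\log d/\log\log d$. We then estimate $\tr p(\rho)=\sum_k a_k\tr\rho^k$ via unbiased estimators of the power sums $\tr\rho^k$, which come from permutation (cyclic shift) tests on $k$ copies or from Schur sampling power-sum estimators. Best approximation gives error $\approx d\tau/L^2$. The key need is a polynomial whose coefficients are bounded, say $|a_k|\tau^k\le \mathrm{poly}\log$, so that the variance of the combination is controlled. The new bounded-coefficient construction should achieve approximation error $d\tau\log^2\log d/\log^2 d$ with controlled coefficient growth, at the cost of the $\log^2\log d$ factor. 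Setting the total bias to $\varepsilon$ then forces $n\approx d^2\log^2\log d/(\varepsilon^2\log^2 d)$, with the $\log(1/\varepsilon)$ factor coming from median-of-means amplification across blocks. Step 3 is the one we expect to be the main obstacle, since there the quantum power-sum estimators have variance scaling with $d$ rather than with rank.

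Finally, a union bound over the three error sources and median amplification give success probability $2/3$.
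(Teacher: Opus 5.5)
Your outline has the right architecture: threshold, pinch, treat the two blocks separately, degree $\asymp\log d/\log\log d$. But the two steps that carry the proof are left as hopes, and Step 1 cannot work as described.

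Weak Schur sampling returns only a Young diagram, whose distribution depends only on the spectrum of $\rho$, so it yields no subspace. If you instead learn some $\widehat\rho$ from $n=o(d^2)$ samples, the usual trace-norm guarantee ($\approx d/\sqrt n$) is vacuous, so ``nearly block-diagonal'' cannot be certified in any norm sense. Even the right measure of near block-diagonality fails with a continuity argument. With $X=P\rho Q$ and $t:=\tr(X^\dagger\rho_{\mathrm{hi}}^{-1}X)$ one has $\norm{X}_1\le\sqrt t$, so Fannes--Audenaert bounds the pinching loss only by about $\sqrt t\log d$. That forces $t\lesssim\varepsilon^2/\log^2 d$ and kills the saving.

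What is missing is the paper's chain of three results. (i) The relative-error tomography of PSTW26 holds simultaneously over all observables $O$ with error $C\sqrt{(d/n)\rank(O)(\tr(O^2\rho)+(d/n)\tr(O^2))}$. For a threshold $B$ and $n\ge Cd/B$ it gives $\rho_{\mathrm{hi}}\succeq(B/C)P$, $\norm{\rho_{\mathrm{lo}}}_\infty\le 2B$ and $\rank(P)\le C/B$. (ii) The pinching inequality $0\le\S(\rho_{\mathrm{hi}})+\S(\rho_{\mathrm{lo}})-\S(\rho)\le t\log(e/t)$ is linear in $t$ up to a logarithm. It is proved by splitting off the block-diagonal Schur-complement part $0\oplus(\rho_{\mathrm{lo}}-X^\dagger\rho_{\mathrm{hi}}^{-1}X)$ via joint convexity of relative entropy, then applying the Araki--Lieb binary-pinching bound to the remainder, whose lower-right block has trace exactly $t$. (iii) The witness $W$ has off-diagonal blocks $\rho_{\mathrm{hi}}^{-1}X$ and $X^\dagger\rho_{\mathrm{hi}}^{-1}$. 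It satisfies $\tr(W(\rho-\widehat\rho))=2t$, $\rank(W)\le2\rank(P)$ and $\tr(W^2\rho),\,B\tr(W^2)\lesssim t$, so (i) yields $t\lesssim d/(n_{\mathrm{tom}}B)$.

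This also pins down the parameters you guessed. The threshold is $B\asymp\varepsilon K^2/d$, the largest value keeping the polynomial bias $2dB/K^2$ at $O(\varepsilon)$; a larger $B$ makes tomography cheaper. It is not the classical $\log(d)/n$, which lies far below the scale $d/n$ a learned splitting can resolve. The $\log(1/\varepsilon)$ factor enters because $t\log(e/t)\le\varepsilon/2$ requires $t\lesssim\varepsilon/\log(1/\varepsilon)$, whence $n_{\mathrm{tom}}\asymp d^2\log(1/\varepsilon)/(\varepsilon^2K^2)$; it does not come from median-of-means.

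Step 3 you leave open yourself, and the estimators you name do not close it. After rescaling, $a_k=b_kM^{1-k}$ with $M=2B$. For the paper's Chebyshev construction $|b_K|=2^{2K-1}/(K^3-K)$, which is exponential in $K$ rather than polylogarithmic, and the analysis never needs otherwise. So $\tr(\rho_{\mathrm{lo}}^K)$ must be known to accuracy $\le\varepsilon M^{K-1}$. A cyclic-shift test is $\pm1$-valued with $\Theta(1)$ variance when $\tr(\rho_{\mathrm{lo}})=\Theta(1)$, so it would need about $\varepsilon^{-2}M^{2-2K}=d^{\Omega(K)}$ repetitions. The paper instead invokes the projected moment estimator of PSTW26. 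Its error is $2^{3K}K^{5K}(B^{k/2}\zeta^k/d^{k/2}+B^k\zeta)$ from $O(d/(B\zeta^2))$ samples, i.e.\ on the natural scale $B^k$. The degree is then fixed by $C_{\mathrm{poly}}^K2^{3K}K^{5K+3}\le C_Kd$, so the combined coefficient and estimator blow-up is tolerated up to a factor $\approx d/K^3$. This is what produces $K\asymp\log d/\log\log d$, and hence the $\log^2(\log d)/\log^2 d$ saving.

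Finally, your Step 2 correction is unspecified and its $O(\rank/(n\tau))$ bias is unproved. The paper measures $G_{\mathrm{hi}}=-\log\widehat\rho_{\mathrm{hi}}-P$ on fresh samples and outputs $\S(\widehat\rho_{\mathrm{hi}})+\widehat\mu_{\mathrm{hi}}-\tr(\widehat\rho_{\mathrm{hi}}G_{\mathrm{hi}})$. Its error is exactly $\qKL{\rho_{\mathrm{hi}}}{\widehat\rho_{\mathrm{hi}}}\le3\delta^2$ with $\delta^2\lesssim d/(n_{\mathrm{tom}}B)$, plus a Hoeffding term for an observable of norm $O(\log(d/\varepsilon))$. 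That term, together with estimating $\tr(\rho_{\mathrm{lo}})$ weighted by $a_1=O(\log(d/\varepsilon))$, is the source of the $\log^2(d/\varepsilon)/\varepsilon^2$ term.
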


In particular, when $\varepsilon$ is a constant (say $0.01$), \cref{thm:main-intro} gives a von Neumann entropy estimator with sample complexity 
\[
O_\varepsilon\rbra*{\frac{d^2\log^2\rbra{\log\rbra{d}}}{\log^2\rbra{d}}} = o_\varepsilon\rbra{d^2},
\]
which is clearly subquadratic in $d$. 
This sample complexity upper bound is comparable to the recent sample complexity lower bound for von Neumann entropy estimation:
\[
\Omega\rbra*{ \frac{d^2}{\varepsilon\log^2\rbra{d}\max\cbra{1, \varepsilon\log^2\rbra{d}}} + \frac{\log^2\rbra{d}}{\varepsilon^2} } = \Omega_\varepsilon\rbra*{\frac{d^2}{\log^4\rbra{d}}},
\]
as established in \cite{Wan26}.\footnote{A concurrent work \cite{FOW26}, concurrent with \cite{Wan26}, also showed an almost matching sample complexity lower bound of $\Omega\rbra{d^{2-\gamma}}$ for von Neumann entropy estimation for any constant $\gamma > 0$.}
This shows that the sample complexity upper bound in \cref{thm:main-intro} is: (i) optimal in $d$ only up to a small factor of $\log^2\rbra{d}/\log^2\rbra{\log\rbra{d}}$, and (ii) optimal in $\varepsilon$ only up to a small factor of $\log^2\rbra{1/\varepsilon}$. 

The developments in von Neumann entropy estimation are summarized in \cref{tab:von}.

\begin{table}[t]
    \centering
    \caption{Sample complexity of von Neumann entropy estimation.}
    \label{tab:von}
    \begin{tabular}{ccc}
    \toprule
    Complexity Type & Sample Complexity & References \\
    \midrule
    \multirow{7}{*}{Upper Bounds} & $O\rbra{\frac{d^2\log^2\rbra{d/\varepsilon}}{\varepsilon^2}}$ & Implied by \cite{OW21} \\
    \addlinespace
    & $O\rbra{\frac{d^2}{\varepsilon^2}}$ & \cite{AISW20} \\
    \addlinespace
    & $O\rbra{\frac{d^2}{\varepsilon} + \frac{\log^2\rbra{d}}{\varepsilon^2}}$ & \cite{BMW16} \\
    \addlinespace
    & $O\rbra{\frac{d^2\log^4\rbra{d/\varepsilon}\log^2\rbra{\log\rbra{d}}}{\varepsilon^4\log^2\rbra{d}}}$ & Implied by \cite{PSTW26} \\
    \addlinespace
    & $O\rbra{\frac{d^2 \log^2\rbra{\log\rbra{d}} \log\rbra{1/\varepsilon}}{\varepsilon^2 \log^2\rbra{d}} + \frac{\log^2\rbra{d/\varepsilon}}{\varepsilon^2}}$ & This Work \\
    \midrule
    \multirow{7}{*}{Lower Bounds} & $\Omega\rbra{\frac{d}{\varepsilon\log\rbra{d}}+\frac{\log^2\rbra{d}}{\varepsilon^2}}$ & \cite{JVHW15,WY16} \\
    \addlinespace
    & $\Omega\rbra{\frac{d^2}{\varepsilon}}$ (for EYD estimators) & \cite{AISW20} \\
    \addlinespace
    & $\Omega\rbra{\frac{d}{\varepsilon}}$ & \cite{WZ25b} \\
    \addlinespace
    & $\Omega\rbra{d^{2-\gamma}}$ & \cite{FOW26} \\
    \addlinespace
    & $\Omega\rbra{ \frac{d^2}{\varepsilon\log^2\rbra{d}\max\cbra{1, \varepsilon\log^2\rbra{d}}} + \frac{\log^2\rbra{d}}{\varepsilon^2} }$ & \cite{Wan26} \\
    \bottomrule
    \end{tabular}
\end{table}

\subsection{Technical overview}

\paragraph{Prior approaches.}
The prior art builds on the Empirical Young Diagram (EYD) algorithm (i.e., the quantum plug-in estimator) in two steps (cf.\ \cite{AISW20}): (i) compute the empirical Schur--Weyl distribution $\widehat{\boldsymbol{\alpha}}$ from $n$ samples of $\rho$ by weak Schur sampling \cite{CHW07}, and (ii) return $\H\rbra{\widehat{\boldsymbol{\alpha}}}$ as the estimate of $\S\rbra{\rho}$.
Under this EYD framework, a von Neumann entropy estimator was given in \cite{AISW20} with sample complexity $O\rbra{\frac{d^2}{\varepsilon^2}}$, which was later improved to $O\rbra{\frac{d^2}{\varepsilon} + \frac{\log^2\rbra{d}}{\varepsilon^2}}$ in \cite{BMW16}. 

Another approach is to apply the spectrum estimation \cite{OW21,PSTW26} with the Fannes--Audenaert inequality \cite{Fan73,Aud07}: $\abs*{\S(\rho)-\S(\sigma)}\leq\delta\log(d-1)+\H((\delta,1-\delta))$, where $\delta = \dtr\rbra{\rho, \sigma}$. 
It suffices to set the precision $\delta = \Theta\rbra{\frac{\varepsilon}{\log\rbra{d/\varepsilon}}}$ for the spectrum estimation in the total variation distance to ensure the additive error $\varepsilon$ for von Neumann entropy estimation. 
Using the spectrum estimation due to \cite{OW21} with sample complexity $O\rbra{\frac{d^2}{\varepsilon^2}}$ gives a von Neumann entropy estimator with sample complexity $O\rbra{\frac{d^2\log^2\rbra{d/\varepsilon}}{\varepsilon^2}}$. 
Using the recent spectrum estimation due to \cite{PSTW26} with sample complexity $O\rbra{\frac{d^2\log^2\rbra{\log\rbra{d}}}{\varepsilon^4 \log^2\rbra{d}}}$ gives a von Neumann entropy estimator with sample complexity $O\rbra{\frac{d^2\log^4\rbra{d/\varepsilon}\log^2\rbra{\log\rbra{d}}}{\varepsilon^4\log^2\rbra{d}}}$. 

A third approach builds on the Hadamard test \cite{AJL09} based on the samplizer \cite{WZ25a,WZ25b} equipped with quantum singular value transformation \cite{GSLW19}, which gives a von Neumann entropy estimator with sample and time complexity $O\rbra{\frac{d^2}{\varepsilon^5}\log^7\rbra{\frac{d}{\varepsilon}}\log^2\rbra{\frac{\log\rbra{d}}{\varepsilon}}}$ \cite{WZ25b}. 
Although this approach gives a worse sample complexity compared to the aforementioned approaches, it results in a better time complexity. 

\paragraph{Our estimator and analysis.}

Following the common criteria for Shannon entropy estimation (cf.\ \cite{JVHW15,WY16}), our estimator deals with large and small eigenvalues of the quantum state $\rho$ separately, with certain threshold. 
This is done through the mixed state tomography version \cite{PSTW25} of the Grier--Pashayan--Schaeffer algorithm \cite{GPS24} using a small number of samples of $\rho$ that are not enough for full tomography. 
Thanks to the improved error analysis in \cite{PSTW26}, this actually works with a certain choice of parameters. 
Let $\widehat{\rho}$ be the tomography result, from which we determine the projector $P$ onto the eigenspace of large eigenvalues and let $Q = \Id - P$ be the eigenspace of small eigenvalues. 
Let $\rho_{\mathrm{hi}} = P \rho P$ and $\rho_{\mathrm{lo}} = Q \rho Q$. 

\begin{itemize}
    \item For large eigenvalues, we use a bias-corrected version $\widehat{\S}_{\mathrm{hi}} \approx \S\rbra{\rho_{\mathrm{hi}}}$ of the plug-in estimator $\mathrm{S}\rbra{P \widehat{\rho} P}$.
    \item For small eigenvalues, we use an estimator $\widehat{\S}_{\mathrm{lo}} \coloneqq \sum_{k=1}^K a_k \widehat{p}_k \approx \sum_{k=1}^K a_k \tr\rbra{\rbra{Q\widehat{\rho} Q}^k} \approx \S\rbra{\rho_{\mathrm{lo}}}$ through an approximation polynomial $\sum_{k=1}^K a_k x^k \approx -x\log\rbra{x}$ with well-bounded coefficients given in \cref{lem:low-block-scaled-entropy-polynomial}. 
    Here, the estimates of high-order moments $\widehat{p}_k$ ($k \geq 2$) are obtained by the moment estimator in \cite{PSTW26}. 
\end{itemize}
The overall estimator is then given by $\widehat{\S} = \widehat{\S}_{\mathrm{hi}} + \widehat{\S}_{\mathrm{lo}}$. 

To analyze the error, a central issue is the inherent error of $\widehat{\S}$ even if both the sub-estimators $\widehat{\S}_{\mathrm{hi}}$ and $\widehat{\S}_{\mathrm{lo}}$ made no error.
To address this issue, we establish a pinching inequality in \cref{lem:pinching-inequality}, which gives 
\[
0\leq \S(\rho_{\mathrm{hi}}) + \S(\rho_{\mathrm{lo}})-\S(\rho)
        \leq t\log\frac{e}{t},
\]
where
\[
t = \tr\rbra{\rbra{P\rho Q}^\dag \rho_{\mathrm{hi}}^{-1} P\rho Q} \lesssim \frac{d}{n}\rank(P)
        \left(1+\frac{\norm{\rho_{\mathrm{lo}}}_{\infty}}{\alpha}
        +\frac{d}{n\alpha}\right),
\]
and $\alpha$ is the minimal non-zero eigenvalue of $\rho_{\mathrm{hi}}$. 

\section{Preliminaries}

\subsection{Notations}

For a positive integer $d$, we write $[d]:=\{1,\ldots,d\}$.

\paragraph{Linear algebra and Dirac notation.}
Let
$\calH=\mathbb{C}^{d}$ be a $d$-dimensional Hilbert space. A vector in $\calH$ is written as a ket $\ket{v}$ and its adjoint as a bra
$\bra{v}$.  The inner product and Euclidean norm are denoted by
$\langle u|v\rangle$ and $\norm{\ket{v}}_2:=\sqrt{\langle v|v\rangle}$,
respectively. 

We denote $\calL(\calH)$ the set of linear operators on $\calH$. 
For $A\in\calL(\calH)$, we define
$\tr \rbra{A}:=\sum_i\langle i|A|i\rangle$, where $\{\ket{i}\}_i$ is any
orthonormal basis of $\calH$.
The partial traces of
$A\in\calL(\calH_1\otimes\calH_2)$ can be defined in Dirac notation by
\begin{align*}
    \tr_2 (A)
        &:=\sum_j
        (\Id\otimes\bra{j})A(\Id\otimes\ket{j}),\\
    \tr_1 (A)
        &:=\sum_i
        (\bra{i}\otimes\Id)A(\ket{i}\otimes\Id),
\end{align*}
where $\{\ket{i}\}_i$ is a set orthonormal basis of $\calH_1$ and
$\{\ket{j}\}_j$ is a set orthonormal basis of $\calH_2$.

An operator $A\in\calL(\calH)$ is Hermitian if $A=A^{\dagger}$.  A Hermitian
operator is \emph{positive semidefinite}, written $A\succeq0$, if
$\langle v|A|v\rangle\geq0$ for every $\ket{v}\in\calH$.  It is
\emph{positive definite}, written $A\succ0$, if the inequality is strict for
every nonzero $\ket{v}\in\calH$.
We denote
\begin{align*}
    \Herm(\calH)
        &:=\{A\in\calL(\calH):A=A^{\dagger}\},\\
    \Pos(\calH)
        &:=\{A\in\Herm(\calH):A\succeq0\},\\
            \Dens(\calH)
        &:=\{\rho\in\Pos(\calH):\tr(\rho)=1\}
\end{align*}
for the Hermitian operators, positive semidefinite operators, and density operators on $\calH$, respectively.  The identity operator is denoted by
$\Id$.
For $A,B\in \Herm(\calH)$, the L\"owner order is defined by
$A\succeq B$ if $A-B\succeq0$; its strict counterpart is
$A\succ B$ if $A-B\succ 0$.  
For $A\in\Pos(\calH)$, let $\ker(A)$ denote the kernel of $A$, $\supp(A):=(\ker (A))^{\perp}$, and
$\rank(A):=\dim\rbra{\supp(A)}$.  Unless stated otherwise, $A^{-1}$ and $\log \rbra{A}$
are taken on $\supp(A)$.
We extend $A\log \rbra{A}$ continuously to the kernel
using $0\log\rbra{0}=0$.

\paragraph{Matrix analysis.} 

For $A\in\calL(\calH)$, let $\abs{A}:=\sqrt{A^{\dagger}A}$. 
For
$1\leq p<\infty$, the Schatten $p$-norm of $A$ is
\[
    \norm{A}_p:=\left(\tr\abs*{A}^{p}\right)^{1/p}, \qquad \norm{A}_{\infty}:=\lim_{p\to\infty}\norm{A}_p.
\]
The cases $p=1$, $p=2$, and $p=\infty$ are, respectively, the trace,
Hilbert--Schmidt, and operator norms.  Explicitly,
\[
    \norm{A}_1=\tr\abs*{A},
    \qquad
    \norm{A}_2=\sqrt{\tr(A^{\dagger}A)},
    \qquad
    \norm{A}_{\infty}=\max_{\norm{\ket{v}}_2=1}\norm{A\ket{v}}_2.
\]

Let $A$ be a Hermittian operator with spectral decomposition
$A=\sum_i\lambda_i\ketbra{v_i}{v_i}$, and $f$ a function defined on its
spectrum. Then, we define
$f(A):=\sum_i f(\lambda_i)\ketbra{v_i}{v_i}$.
Pointwise scalar inequalities can be lifted to L\"owner
inequalities. For example, we have:
\begin{proposition}
    \label{prop:prelim-log-upper-bound}
    Every positive definite operator $Z$ satisfies
    $\log Z\preceq Z-\Id$.
\end{proposition}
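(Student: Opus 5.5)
The plan is to use the spectral theorem and reduce the operator inequality to a scalar one. The key scalar fact is that every real $z>0$ satisfies $\log z \leq z-1$, with equality only at $z=1$. To see this, set $g(z) := z - 1 - \log z$ on $(0,\infty)$. Then $g'(z) = 1 - 1/z$ is negative on $(0,1)$ and positive on $(1,\infty)$. Hence $g$ attains its minimum at $z=1$, where $g(1)=0$, so $g \geq 0$ everywhere. Concavity of $\log$ gives the same conclusion, since the line $z-1$ is its tangent at $z=1$.

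Next, lift this to operators. Since $Z \succ 0$, it is Hermitian with spectral decomposition $Z = \sum_i \lambda_i \ketbra{v_i}{v_i}$, where $\{\ket{v_i}\}$ is an orthonormal basis and every $\lambda_i > 0$. In particular $\supp(Z) = \calH$, so $\log Z$ is defined on the whole space, with no use of the support convention. By the functional calculus defined above,
\[
Z - \Id - \log Z = \sum_i \rbra*{\lambda_i - 1 - \log \lambda_i} \ketbra{v_i}{v_i} = \sum_i g(\lambda_i)\ketbra{v_i}{v_i},
\]
using $\Id = \sum_i \ketbra{v_i}{v_i}$. For any $\ket{w}$, we get $\bra{w}(Z-\Id-\log Z)\ket{w} = \sum_i g(\lambda_i)\abs{\langle v_i|w\rangle}^2 \geq 0$. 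Therefore $Z - \Id - \log Z \succeq 0$, which is exactly $\log Z \preceq Z - \Id$.

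There is no real obstacle here. The only point needing care is that $Z$, $\Id$ and $\log Z$ are simultaneously diagonal in the same eigenbasis, which is what allows the scalar inequality to transfer eigenvalue by eigenvalue.
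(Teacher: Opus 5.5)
Your proof is correct and takes the same approach as the paper. The paper gives no separate proof; it only remarks that pointwise scalar inequalities lift to L\"owner inequalities via functional calculus, and your argument carries out exactly that lifting of $\log z\leq z-1$, eigenvalue by eigenvalue, in the common eigenbasis of $Z$, $\Id$, and $\log Z$.
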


\paragraph{Pinching.}
Fix a projector $P$ and let $Q:=\Id-P$.  Relative to
$\calH=\Image{P}\oplus\Image{Q}$, a positive semidefinite
operator $T$ has the block form of
\[
    T=
    \begin{pmatrix}
        A & X\\
        X^{\dagger} & D
    \end{pmatrix},
\]
where $A=PTP$,$X=PTQ$, $D=QTQ$.
If $A$ is positive definite on $\Image{P}$, the Schur-complement
criterion implies that
\[
    T\succeq0
    \quad\Longleftrightarrow\quad
    D-X^{\dagger}A^{-1}X\succeq0.
\]

We use $\Phi_P$ to denote the binary pinching channel associated with a projector $P$, which is 
\begin{equation}
    \Phi_P(T):=PTP+(\Id-P)T(\Id-P).
    \label{eq:prelim-pinching-map}
\end{equation}
This channel is completely positive, trace preserving, and
self-adjoint for the Hilbert--Schmidt inner product.  Consequently, if $Y$ is
block diagonal with respect to $\Image{P}\oplus\Image{\Id-P} $, then
$\tr(TY)=\tr(\Phi_P(T)Y)$.

\paragraph{Chebyshev polynomials.}
For $j\geq 0$, the Chebyshev polynomials of the first and second kinds are
defined through the identities
\[
    T_j(\cos\theta)=\cos(j\theta),
    \qquad
    U_j(\cos\theta)=\frac{\sin((j+1)\theta)}{\sin\theta},
\]
where quotient defining $U_j$ is extended continuously to the endpoints.
These polynomials satisfy $    T_0(x)=1$, $T_1(x)=x$,
   $\sup_{x\in[-1,1]}\abs*{T_j(x)}=1$, and for $j\geq1$:
\begin{align*}
    T_{j+1}(x)&=2xT_j(x)-T_{j-1}(x),\\
    T_j'(x)&=jU_{j-1}(x).
\end{align*}

\paragraph{Concentration inequalities.}
\begin{lemma}[Hoeffding's inequality~\cite{Hoe63}]
    \label{thm:prelim-hoeffding}
    Let $X_1,\ldots,X_n$ be independent real random variables with
    $X_i\in[a_i,b_i]$ almost surely.  Then, for every $t>0$,
    \[
        \Pr\!\left[
            \abs*{\sum_{i=1}^n(X_i-\mathbb{E}X_i)}\geq t
        \right]
        \leq
        2\exp\!\left(
            -\frac{2t^2}{\sum_{i=1}^n(b_i-a_i)^2}
        \right).
    \]
\end{lemma}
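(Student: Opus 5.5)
This is the classical Hoeffding inequality, so I will follow the standard Chernoff--Cramér route. By symmetry, it suffices to prove the one-sided bound
\[
\Pr\!\left[\sum_{i=1}^n(X_i-\mathbb{E}X_i)\geq t\right]\leq \exp\!\left(-\frac{2t^2}{\sum_i(b_i-a_i)^2}\right).
\]
Applying the same bound to the variables $-X_i\in[-b_i,-a_i]$ handles the lower tail, and a union bound then gives the factor $2$.

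Set $Y_i:=X_i-\mathbb{E}X_i$, so that $Y_i$ has mean zero and takes values in an interval of length $b_i-a_i$. Fix $s>0$. Markov's inequality applied to $e^{s\sum_i Y_i}$, together with independence, gives
\[
\Pr\!\left[\sum_i Y_i\geq t\right]\leq e^{-st}\prod_{i=1}^n\mathbb{E}\,e^{sY_i}.
\]

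The key step is Hoeffding's lemma: a mean-zero random variable $Y\in[a,b]$ satisfies $\mathbb{E}e^{sY}\leq e^{s^2(b-a)^2/8}$. I would prove it as follows.

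1. Convexity of $y\mapsto e^{sy}$ on $[a,b]$ gives $e^{sy}\leq \frac{b-y}{b-a}e^{sa}+\frac{y-a}{b-a}e^{sb}$.
2. Taking expectations and using $\mathbb{E}Y=0$ yields $\mathbb{E}e^{sY}\leq e^{L(h)}$, where $h=s(b-a)$, $p=-a/(b-a)\in[0,1]$, and $L(h)=-hp+\log(1-p+pe^{h})$.
3. Direct computation gives $L(0)=L'(0)=0$. Moreover, $L''(h)=q(1-q)$ with $q=pe^h/(1-p+pe^h)$, so $L''(h)\leq 1/4$.
4. Taylor's theorem with remainder then gives $L(h)\leq h^2/8$.

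This calculus check is the only real obstacle, and it is routine.

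Combining the lemma with the Chernoff bound gives
\[
\Pr\!\left[\sum_i Y_i\geq t\right]\leq \exp\!\left(-st+\tfrac{s^2}{8}\sum_i(b_i-a_i)^2\right).
\]
Choosing $s=4t/\sum_i(b_i-a_i)^2$ minimizes the exponent and yields $\exp(-2t^2/\sum_i(b_i-a_i)^2)$. Doubling for the two tails completes the proof.
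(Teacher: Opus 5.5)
Your proof is correct and is exactly Hoeffding's original argument: the Chernoff--Cramér bound combined with Hoeffding's lemma via $L(h)=-hp+\log(1-p+pe^{h})$ and $L''\le 1/4$ (the degenerate case $a_i=b_i$, where $p$ is undefined, is trivial since then $X_i-\mathbb{E}X_i=0$ almost surely). The paper gives no proof of its own and simply cites Hoeffding's 1963 paper, whose proof your proposal reproduces.
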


In particular, if the $X_i$ are identically distributed in $[a,b]$ and $\widehat\mu=n^{-1}\sum_iX_i$, then, with probability at least $0.99$,
\[
        \abs*{\widehat\mu-\mathbb{E}X_1}
        \leq\frac{2(b-a)}{\sqrt{n}}.
    \]

\subsection{Entropy, Relative entropy and Related Inequalities}

For every $A\in\Pos(\calH)$, define
$\S(A)\coloneqq-\tr(A\log\rbra{A})$.  This is the von Neumann entropy when
$A\in\Dens(\calH)$; for general $A\succeq0$, it is an extended entropy
functional without trace-one normalization. 
\begin{lemma}[Araki--Lieb inequality and entropy
subadditivity~\cite{AL70}]
    \label{thm:prelim-araki-lieb}
    Let $\rho_{\mathsf{AB}}\in\Dens(\calH_{\mathsf{A}}\otimes\calH_{\mathsf{B}})$ be a bipartite quantum
    state with reduced states $\rho_{\mathsf{A}}:=\tr_{\mathsf{B}}\rbra{\rho_{\mathsf{AB}}}$ and
    $\rho_{\mathsf{B}}:=\tr_{\mathsf{A}}\rbra{\rho_{\mathsf{AB}}}$.  Then
    \[
        \abs*{\S(\rho_{\mathsf{A}})-\S(\rho_{\mathsf{B}})}
        \leq \S(\rho_{\mathsf{AB}})
        \leq \S(\rho_{\mathsf{A}})+\S(\rho_{\mathsf{B}}).
    \]
\end{lemma}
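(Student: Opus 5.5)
The plan is to derive both inequalities from a single tool, the nonnegativity of quantum relative entropy (Klein's inequality), and then obtain the Araki--Lieb lower bound from subadditivity by a purification argument. For density operators $\rho,\sigma$ with $\supp(\rho)\subseteq\supp(\sigma)$, define $\qKL{\rho}{\sigma}:=\tr(\rho\log\rho)-\tr(\rho\log\sigma)$.

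The first step is to prove $\qKL{\rho}{\sigma}\geq 0$. Write the spectral decompositions $\rho=\sum_i p_i\ketbra{u_i}{u_i}$ and $\sigma=\sum_j q_j\ketbra{v_j}{v_j}$, and set $M_{ij}:=\abs{\langle u_i|v_j\rangle}^2$. The matrix $M$ is doubly stochastic, and
\[
\qKL{\rho}{\sigma}=\sum_{i,j}M_{ij}\,p_i\rbra*{\log p_i-\log q_j}.
\]
Only pairs with $p_i>0$ contribute. For these pairs, the scalar inequality $\log x\leq x-1$ (the commuting case of \cref{prop:prelim-log-upper-bound}) applied with $x=q_j/p_i$ gives
\[
p_i(\log p_i-\log q_j)\geq p_i-q_j,
\]
with the convention that the term is $+\infty$ if $q_j=0$ and $M_{ij}>0$. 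The support hypothesis rules out that case. Summing against $M_{ij}$ and using double stochasticity gives
\[
\qKL{\rho}{\sigma}\geq \sum_i p_i-\sum_{i,j}M_{ij}q_j\geq 1-\tr(\sigma)=0.
\]
The second inequality holds because restricting to $i$ with $p_i>0$ only drops nonnegative terms $M_{ij}q_j$.

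Next, take $\rho=\rho_{\mathsf{AB}}$ and $\sigma=\rho_{\mathsf A}\otimes\rho_{\mathsf B}$. Since $\log(\rho_{\mathsf A}\otimes\rho_{\mathsf B})=\log\rho_{\mathsf A}\otimes\Id+\Id\otimes\log\rho_{\mathsf B}$ on the support, taking partial traces gives
\[
\tr\rbra*{\rho_{\mathsf{AB}}\log(\rho_{\mathsf A}\otimes\rho_{\mathsf B})}=-\S(\rho_{\mathsf A})-\S(\rho_{\mathsf B}).
\]
Hence $0\leq \qKL{\rho_{\mathsf{AB}}}{\rho_{\mathsf A}\otimes\rho_{\mathsf B}}=\S(\rho_{\mathsf A})+\S(\rho_{\mathsf B})-\S(\rho_{\mathsf{AB}})$, which is subadditivity.

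The step I expect to need the most care is the support condition $\supp(\rho_{\mathsf{AB}})\subseteq\supp(\rho_{\mathsf A})\otimes\supp(\rho_{\mathsf B})$. I plan to verify it as follows. Let $P_{\mathsf A}$ be the projector onto $\ker\rho_{\mathsf A}$. Then $\tr\rbra*{(P_{\mathsf A}\otimes\Id)\rho_{\mathsf{AB}}}=\tr(P_{\mathsf A}\rho_{\mathsf A})=0$. Since $\rho_{\mathsf{AB}}\succeq 0$, this forces $(P_{\mathsf A}\otimes\Id)\rho_{\mathsf{AB}}=0$, and the same argument applies on $\mathsf B$.

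For the Araki--Lieb bound, purify $\rho_{\mathsf{AB}}$ to a pure state $\ket{\psi}_{\mathsf{ABC}}$. By the Schmidt decomposition, the two marginals of any bipartite pure state have the same nonzero spectrum and hence equal entropy. This gives $\S(\rho_{\mathsf{AC}})=\S(\rho_{\mathsf B})$ and $\S(\rho_{\mathsf C})=\S(\rho_{\mathsf{AB}})$. Applying subadditivity to $\rho_{\mathsf{AC}}$ yields
\[
\S(\rho_{\mathsf B})=\S(\rho_{\mathsf{AC}})\leq \S(\rho_{\mathsf A})+\S(\rho_{\mathsf C})=\S(\rho_{\mathsf A})+\S(\rho_{\mathsf{AB}}),
\]
so $\S(\rho_{\mathsf B})-\S(\rho_{\mathsf A})\leq\S(\rho_{\mathsf{AB}})$. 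Exchanging the roles of $\mathsf A$ and $\mathsf B$ gives the other sign, which completes $\abs{\S(\rho_{\mathsf A})-\S(\rho_{\mathsf B})}\leq\S(\rho_{\mathsf{AB}})$.
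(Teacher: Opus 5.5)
Your proof is correct. The paper gives no proof of this lemma: it imports it as a black box from the cited reference. Only the Araki--Lieb half is used later, in \cref{lem:binary-pinching-entropy-bound}, so there is no in-paper argument to compare against.

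Your route is the standard textbook argument. You prove Klein's inequality via the doubly stochastic overlap matrix $M_{ij}=\abs{\langle u_i|v_j\rangle}^2$. You then get subadditivity from $\qKL{\rho_{\mathsf{AB}}}{\rho_{\mathsf A}\otimes\rho_{\mathsf B}}\geq0$. Finally, purification plus the Schmidt decomposition gives the triangle inequality.

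You also handle its one delicate point correctly. The inclusion $\supp(\rho_{\mathsf{AB}})\subseteq\supp(\rho_{\mathsf A})\otimes\supp(\rho_{\mathsf B})$ guarantees that the relative entropy is finite. It is also exactly what lets you replace the support-restricted logarithm $\log\rho_{\mathsf A}\otimes\Pi_{\mathsf B}+\Pi_{\mathsf A}\otimes\log\rho_{\mathsf B}$ by $\log\rho_{\mathsf A}\otimes\Id+\Id\otimes\log\rho_{\mathsf B}$ inside the trace, where $\Pi_{\mathsf A},\Pi_{\mathsf B}$ are the support projectors.

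What your version buys is self-containment. It needs only the scalar bound $\log x\leq x-1$ (the commuting case of \cref{prop:prelim-log-upper-bound}) and the existence of purifications, both of which the paper already takes for granted.
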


The following Fannes–Audenaert inequality provides a tool for reducing entropy estimation to spectrum estimation.

\begin{lemma}[{\cite{Fan73}} and {\cite[Theorem~1]{Aud07}}]
    \label{thm:prelim-fannes-audenaert}
    Let $d\geq2$, $\rho,\sigma\in\Dens(\mathbb{C}^d)$, and
    $\delta:=\dtr(\rho,\sigma)$.  
    Then
    \[
        \abs*{\S(\rho)-\S(\sigma)}
        \leq\delta\log(d-1)+\H((\delta,1-\delta)).
    \]
\end{lemma}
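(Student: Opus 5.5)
The Fannes--Audenaert inequality is a classical result, so the plan is to reduce it to a statement about probability vectors and then prove the sharp classical bound by a coupling/majorization argument.

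\textbf{Reduction to spectra.} Let $\lambda^\downarrow(\rho)$ and $\lambda^\downarrow(\sigma)$ be the eigenvalue vectors sorted in decreasing order. The Lidskii--Mirsky inequality (equivalently, Ky Fan's maximum principle) gives
\[
\tfrac12\norm{\lambda^\downarrow(\rho)-\lambda^\downarrow(\sigma)}_{\ell_1}\le \dtr(\rho,\sigma)=\delta .
\]
The entropies depend only on the spectra, so it suffices to prove the following classical statement. For probability vectors $p,q\in\mathbb{R}^d$ with $\epsilon:=\tfrac12\norm{p-q}_1$,
\[
\abs{\H(p)-\H(q)}\le \epsilon\log(d-1)+\H((\epsilon,1-\epsilon)).
\]
The right-hand side is nondecreasing in $\epsilon$ on $[0,1-1/d]$. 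If $\delta>1-1/d$, the right-hand side already exceeds $\log d$, so the bound is trivial there. This monotonicity lets us pass from $\epsilon$ to $\delta$.

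\textbf{Classical bound.} Assume without loss of generality $\H(p)\ge \H(q)$. Couple $X\sim p$ and $Y\sim q$ maximally, so that $\Pr[X\neq Y]=\epsilon$, and let $E=\mathbbm{1}[X\neq Y]$. Then
\[
\H(p)-\H(q)\le \H(X\mid Y)\le \H(E)+\Pr[E=1]\,\H(X\mid Y,E=1)\le \H((\epsilon,1-\epsilon))+\epsilon\log(d-1).
\]
The last step holds because, given $Y=y$ and $E=1$, the variable $X$ ranges over at most $d-1$ values. This Fano-type argument is simpler than Audenaert's original proof, which reduces by majorization to the extremal pair $q=(1,0,\dots,0)$ and $p=(1-\epsilon,\epsilon/(d-1),\dots)$ and then checks that pair directly.

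\textbf{Main obstacle.} The delicate point is the reduction to spectra: one must make sure that the sorted eigenvalue vectors achieve a classical distance $\epsilon\le\delta$, and then use monotonicity of the right-hand side in $\epsilon$ within the admissible range. Everything else follows from standard entropy identities.
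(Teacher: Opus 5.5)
The paper gives no proof of this lemma; it imports it from Fannes and Audenaert, so your argument has to stand on its own. Most of it does. The Mirsky step gives $\epsilon\le\delta$. Your coupling/Fano bound $\abs{\H(p)-\H(q)}\le f(\epsilon)$, with $f(x):=x\log(d-1)+\H((x,1-x))$, is valid for every $\epsilon\in[0,1]$. Together with monotonicity of $f$ on $[0,1-1/d]$, this settles the case $\delta\le 1-1/d$.

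The gap is the remaining case. Your claim that for $\delta>1-1/d$ the right-hand side ``already exceeds $\log d$'' is false. $f$ is concave, attains its maximum $f(1-1/d)=\log d$ exactly at $1-1/d$, and strictly decreases afterwards to $f(1)=\log(d-1)$. So on $(1-1/d,1]$ there is no trivial bound to fall back on. Moreover, $f(\epsilon)$ with $\epsilon<\delta$ can strictly exceed $f(\delta)$, in particular whenever $1-1/d<\epsilon<\delta$.

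Here is a concrete instance. Take $d=3$, $\rho=\ketbra{0}{0}$, $\sigma=\frac12(\ketbra{1}{1}+\ketbra{2}{2})$. Then $\delta=1$ and $\abs{\S(\rho)-\S(\sigma)}=\log 2=f(1)$, so the inequality is tight. Yet the sorted spectra $(1,0,0)$ and $(\frac12,\frac12,0)$ have $\epsilon=\frac12$, and your argument only yields $f(\frac12)=\frac32\log 2$.

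The fix is to also use the upper half of the Lidskii--Wielandt sandwich. Apply Ky Fan's inequality $\lambda(A+C)\prec\lambda^\downarrow(A)+\lambda^\downarrow(C)$ with $A=\rho$ and $C=-\sigma$. This gives $\lambda(\rho-\sigma)\prec\lambda^\downarrow(\rho)-\lambda^\uparrow(\sigma)$, and since majorization with equal sums controls the $\ell_1$ norm,
\[
\epsilon\le\delta\le\epsilon':=\tfrac12\norm{\lambda^\downarrow(\rho)-\lambda^\uparrow(\sigma)}_1 .
\]
Entropy is permutation invariant, so your coupling bound applied to the anti-sorted pair also gives $\abs{\S(\rho)-\S(\sigma)}\le f(\epsilon')$. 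Concavity of $f$ on $[0,1]$ then yields
\[
f(\delta)\ge\min\{f(\epsilon),f(\epsilon')\}\ge\abs{\S(\rho)-\S(\sigma)}.
\]
This handles all $\delta\in[0,1]$ at once and makes the monotonicity discussion unnecessary. In the example above, $\epsilon'=1$, which gives exactly $\log 2$.
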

For the special case of $\rho = \diag(p)$ and $\sigma  = \diag(q)$  where $p,q\in\mathbb{R}^d$ probability vectors, the same inequality holds for Shannon entropy with $\delta = \dTV(p, q) = \frac{1}{2} \sum_j \abs{p_j-q_j}$ being the total variation distance between $p$ and $q$.

\begin{definition}[Umegaki relative entropy~\cite{Ume62}]
\label{def:prelim-relative-entropy}
For $A,B\succeq 0$, define
\[
    \qKL{A}{B}
    :=
    \begin{cases}
        \tr\!\left(A(\log \rbra{A}-\log \rbra{B})\right)-\tr\rbra{A}+\tr\rbra{B},
            & \supp(A)\subseteq\supp(B),\\
        +\infty, & \text{otherwise}.
    \end{cases}
\]
\end{definition}

We recall some basic properties of this quantity.
First, 
Klein's
inequality gives $\qKL{A}{B}\geq0$, with equality exactly when $A=B$.
This gives rise to the following result.

\begin{proposition}[Entropy increase under pinching]
    \label{prop:prelim-pinching-entropy-gain}
    Let $A\in\Pos(\calH)$, $P$ be a projector, $\Phi_P$ be the
    pinching channel defined in \eqref{eq:prelim-pinching-map}.  Then,
    \[\
    \S(\Phi_P(A))-\S(A)=\qKL{A}{\Phi_P(A)}\geq 0. 
    \]
\end{proposition}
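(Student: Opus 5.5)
Set $B := \Phi_P(A) = PAP + QAQ$, where $Q := \Id - P$. Expanding \cref{def:prelim-relative-entropy} and using that $\tr(A) = \tr(B)$, the claimed identity reduces to
\[
\qKL{A}{B} = \tr(A\log A) - \tr(A\log B) = -\S(A) - \tr(A\log B).
\]
So it suffices to show $\tr(A\log B) = \tr(B\log B) = -\S(B)$. The nonnegativity then follows from Klein's inequality, once we know the relative entropy is finite.

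\textbf{Step 1 (support condition).} I would first check that $\supp(A)\subseteq\supp(B)$, so that the finite branch of the definition applies. Equivalently, I show $\ker(B)\subseteq\ker(A)$. Suppose $B\ket{v}=0$. Then
\[
0=\bra{v}B\ket{v}=\bra{v}PAP\ket{v}+\bra{v}QAQ\ket{v},
\]
and both terms are nonnegative because $A\succeq0$. Hence $A^{1/2}P\ket{v}=0$ and $A^{1/2}Q\ket{v}=0$. Adding these gives $A^{1/2}\ket{v}=0$, so $A\ket{v}=0$.

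\textbf{Step 2 (block-diagonality of $\log B$).} Next I note that $B$ commutes with $P$, so $B$ has a spectral decomposition adapted to $\Image{P}\oplus\Image{Q}$. Consequently $\log B$, taken on $\supp(B)$ and extended by zero on the kernel, is block diagonal: $\log B = \log(PAP)\oplus\log(QAQ)$, with each block taken on its own support. The self-adjointness of $\Phi_P$ noted in the preliminaries then gives $\tr(A\log B)=\tr(\Phi_P(A)\log B)=\tr(B\log B)$.

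The main point of care is that the zero extension of $\log B$ on $\ker B$ does not break this identity. This is exactly why Step 1 is needed: $A$ is supported inside $\supp(B)$, so the convention for $\log B$ on $\ker B$ never contributes to $\tr(A\log B)$. With this, $\tr(A\log B)=-\S(B)$.

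\textbf{Conclusion.} Substituting into the expression above gives $\qKL{A}{\Phi_P(A)}=\S(\Phi_P(A))-\S(A)$. Finally, Klein's inequality gives $\qKL{A}{B}\ge 0$, which is the claimed entropy increase.
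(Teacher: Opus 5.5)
Your proposal is correct and follows essentially the same route as the paper: the kernel argument giving $\supp(A)\subseteq\supp(\Phi_P(A))$, block-diagonality of $\log\Phi_P(A)$ together with self-adjointness of $\Phi_P$ to get $\tr(A\log\Phi_P(A))=\tr(\Phi_P(A)\log\Phi_P(A))$, cancellation of the linear trace terms by trace preservation, and Klein's inequality for nonnegativity. One small point of emphasis: the zero-extended $\log\Phi_P(A)$ is already block diagonal because $\ker\Phi_P(A)$ is invariant under $P$, so the trace identity holds regardless, and Step 1 is really needed to place you in the finite branch of the definition of $\qKL{A}{\Phi_P(A)}$.
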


\begin{proof}
    Write $Q:=\Id-P$.  If $\ket{v}\in\ker\rbra{\Phi_P(A)}$, positivity gives
    $A^{1/2}P\ket{v}=A^{1/2}Q\ket{v}=0$, and hence $A\ket{v}=0$.  Thus,
    $\supp(A)\subseteq\supp(\Phi_P(A))$, meaning that the relative entropy is finite.

    The operator $\log\rbra{\Phi_P(A)}$ is block diagonal with respect to
    $P\oplus Q$.  By the self-adjointness of $\Phi_P$, we have
    \[
        \tr\!\left(A\log\rbra*{\Phi_P(A)}\right)
        =\tr\!\left(\Phi_P(A)\log\rbra*{\Phi_P(A)}\right).
    \]
    Since $\Phi_P$ is trace preserving, the two linear trace terms  cancel.  Therefore,
    \[
        \qKL{A}{\Phi_P(A)}
        =-\S(A)+\S(\Phi_P(A)).
    \]
    Nonnegativity follows from Klein's inequality.
\end{proof}

In addition, quantum relative entropy is jointly convex. We detail this property below.

\begin{proposition}[Joint convexity~{\cite[Corollary~5.33]{Wat18}}]
    \label{prop:prelim-relative-entropy-joint-convexity}
    For $i\in\{0,1\}$, let $A_i,B_i\in\Pos(\calH)$ satisfy
    $\supp(A_i)\subseteq\supp(B_i)$, and let $\lambda\in[0,1]$.  Set
    $\overline A:=\lambda A_0+(1-\lambda)A_1$ and
    $\overline B:=\lambda B_0+(1-\lambda)B_1$.
    Then, $\supp(\overline A)\subseteq\supp(\overline B)$ and
    \[
        \qKL{\overline A}{\overline B}
        \leq
        \lambda \qKL{A_0}{B_0}
        +(1-\lambda)\qKL{A_1}{B_1}.
    \]
\end{proposition}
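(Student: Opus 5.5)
The statement to prove is the joint convexity of the Umegaki relative entropy, together with the support claim $\supp(\overline A)\subseteq\supp(\overline B)$.

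\textbf{Support inclusion.} This part is elementary, and I would handle it through kernels. Take $\ket{v}\in\ker(\overline B)$. Then
\[
\lambda\bra{v}B_0\ket{v}+(1-\lambda)\bra{v}B_1\ket{v}=0 .
\]
Both terms are nonnegative, so each weighted term vanishes. Assume for now that $\lambda\in(0,1)$. Then $B_0^{1/2}\ket{v}=0$ and $B_1^{1/2}\ket{v}=0$, so $\ket{v}\in\ker(B_0)\cap\ker(B_1)$. Because $\ker(B_i)\subseteq\ker(A_i)$, the vector $\ket{v}$ lies in $\ker(A_0)\cap\ker(A_1)\subseteq\ker(\overline A)$. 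The endpoint cases $\lambda\in\{0,1\}$ are trivial, since then $\overline A=A_i$ and $\overline B=B_i$ for a single $i$.

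\textbf{Inequality, via Lieb's concavity.} I would reduce to Lieb's concavity theorem in its Ando/Lindblad form. First, the linear terms $-\tr A+\tr B$ are affine in $(A,B)$. So it suffices to prove joint convexity of
\[
(A,B)\mapsto \tr(A\log A)-\tr(A\log B).
\]
Second, write this as a derivative:
\[
\tr(A\log A)-\tr(A\log B)=\lim_{s\to 0^+}\frac{\tr A-\tr(A^{1-s}B^{s})}{s}.
\]
The plan for this limit is as follows.
\begin{itemize}
\item For invertible $A,B$, expand $A^{1-s}=A-sA\log A+o(s)$ and $B^{s}=\Id+s\log B+o(s)$.
\item For general supports, restrict to $\supp(\overline B)$ and use the convention $A^{0}$ = projector onto $\supp(A)$; the $o(s)$ terms are controlled because the spaces are finite-dimensional.
\end{itemize}
Third, by Lieb's concavity theorem, the map $(A,B)\mapsto \tr(A^{1-s}B^{s})$ is jointly concave for each $s\in(0,1)$. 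Fourth, $A\mapsto\tr A$ is linear. Hence each difference quotient is jointly convex in $(A,B)$. A pointwise limit of jointly convex functions is jointly convex, which gives the inequality.

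\textbf{Main obstacle and alternative.} The main obstacle is Lieb's concavity theorem itself, which I would cite rather than reprove; this is in line with the reference to \cite[Corollary~5.33]{Wat18}. If a self-contained route is preferred, I would instead use the Lindblad--Uhlmann argument. Embed the data as block-diagonal operators
\[
\widetilde A=\lambda A_0\oplus(1-\lambda)A_1,\qquad \widetilde B=\lambda B_0\oplus(1-\lambda)B_1 .
\]
Relative entropy is additive over direct sums, and the factors $\lambda$ cancel inside the logarithms. This yields
\[
\qKL{\widetilde A}{\widetilde B}=\lambda\qKL{A_0}{B_0}+(1-\lambda)\qKL{A_1}{B_1}.
\]
The partial trace over the two-dimensional block label maps $(\widetilde A,\widetilde B)$ to $(\overline A,\overline B)$. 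Monotonicity of relative entropy under this partial trace then finishes the proof. That monotonicity is again equivalent to Lieb's theorem, so the deep input is the same on either route.
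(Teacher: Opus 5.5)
Your proposal is correct and follows essentially the same route as the source the paper relies on: the paper gives no proof of its own but defers to Watrous, where joint convexity is derived from Lieb's concavity theorem for $(A,B)\mapsto\tr(A^{1-s}B^{s})$ by passing to the derivative at the endpoint, as in your main argument. Your kernel argument for $\supp(\overline A)\subseteq\supp(\overline B)$ is correct, and so is your remark that the affine terms $-\tr A+\tr B$ in the paper's definition do not affect convexity; together they cover the small differences between this statement and the textbook version.
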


We also need the following bound that compares between Umegaki and Belavkin--Staszewski relative
entropy~\cite{BS82}.

\begin{lemma}[{\cite[Corollary 2.6]{HP91}}]
    \label{thm:prelim-hiai-petz}
    If $R,T$ be two positive definite operators on a finite-dimensional Hilbert space, then
    \[
        \qKL{R}{T}
        \leq
        \tr\rbra*{
            R\log\rbra*{R^{1/2}T^{-1}R^{1/2}}}
        -\tr \rbra*{R}+\tr\rbra*{T}.
    \]
\end{lemma}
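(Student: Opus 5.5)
The plan is to rewrite both sides as trace functionals of $H:=\log R$ and $K:=-\log T$, and to obtain the inequality as the first-order term at $p=0$ of a one-parameter trace inequality of Araki--Lieb--Thirring type. The linear terms $-\tr(R)+\tr(T)$ appear identically on both sides, because both sides follow the convention of \cref{def:prelim-relative-entropy}. The statement is therefore equivalent to
\[
    \tr\rbra*{R\log R}-\tr\rbra*{R\log T}\leq \tr\rbra*{R\log\rbra*{R^{1/2}T^{-1}R^{1/2}}},
\]
that is,
\[
    \tr\rbra*{e^{H}(H+K)}\leq \tr\rbra*{e^{H}\log\rbra*{e^{H/2}e^{K}e^{H/2}}}.
\]
Since $R$ and $T$ are positive definite, all logarithms and inverses are defined on the whole space and no support issues arise.

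\textbf{Step 1.} For $p\in(0,1]$, I would compare
\[
    f(p):=\tr\rbra*{R^{1+p}T^{-p}}
    \qquad\text{and}\qquad
    g(p):=\tr\rbra*{R^{1/2}\rbra*{R^{1/2}T^{-1}R^{1/2}}^{p}R^{1/2}}
\]
and aim to prove $f(p)\leq g(p)$. Both functions equal $\tr(R)$ at $p=0$, and their right derivatives at $p=0$ are exactly the two sides of the reduced inequality:
\[
    f'(0)=\tr\rbra*{R\log R}-\tr\rbra*{R\log T},
    \qquad
    g'(0)=\tr\rbra*{R\log\rbra*{R^{1/2}T^{-1}R^{1/2}}}.
\]
Differentiating $f$ is routine because $R^{1+p}$ and $T^{-p}$ are separate factors. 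For $g$ one uses $\frac{d}{dp}X^p|_{p=0}=\log X$ with $X=R^{1/2}T^{-1}R^{1/2}$. Dividing $g(p)-f(p)\geq 0$ by $p$ and letting $p\to0^+$ then yields the claim.

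\textbf{Step 2 (main obstacle).} The trace inequality $f(p)\leq g(p)$ is where I expect the difficulty to lie. I would first try the Araki--Lieb--Thirring inequality
\[
    \tr\rbra*{\rbra*{B^{r/2}A^{r}B^{r/2}}^{q}}\leq \tr\rbra*{\rbra*{B^{1/2}AB^{1/2}}^{rq}},
    \qquad 0<r\leq1,\ q>0,
\]
in its log-majorization form $\rbra*{B^{r/2}A^rB^{r/2}}^{1/r}\prec_{\log}B^{1/2}AB^{1/2}$. I would combine it with the observation that $R^{1/2}X^{p}R^{1/2}$ and $X^{p/2}RX^{p/2}$ share a spectrum, where $X=R^{1/2}T^{-1}R^{1/2}$. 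The idea is to express $R^{1+p}T^{-p}$ through $R^{p/2}T^{-p}R^{p/2}$ together with a remaining power of $R$. One then applies log-majorization, which gives weak majorization of singular values, and then a H\"older/von Neumann trace inequality to absorb the extra factor $R$.

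\textbf{Fallback for Step 2.} If matching the exponents turns out to be awkward, I would instead use the variational formula
\[
    \qKL{R}{T}=\sup_{L\in\Herm}\cbra*{\tr\rbra*{RL}-\tr\exp\rbra*{L+\log T}+\tr\rbra*{T}}
\]
with the test operator $L=\log\rbra*{R^{1/2}T^{-1}R^{1/2}}$. This route reduces everything to showing
\[
    \tr\exp\rbra*{\log T+\log\rbra*{R^{1/2}T^{-1}R^{1/2}}}\leq\tr\rbra*{R},
\]
which I would attack with the same log-majorization machinery (Golden--Thompson-type inequalities are the special case $r\to0$).

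In either route, the final step is the limiting argument of Step 1, which only uses differentiability of $p\mapsto X^p$ for positive definite $X$.
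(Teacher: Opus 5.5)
The paper does not prove this lemma; it imports it from Hiai--Petz, so your argument has to stand on its own. Your Step~1 reduction is sound. We have $f(0)=g(0)=\tr R$, the two derivatives are as you state, and $f(p)\le g(p)$ is in fact true on $(0,1]$. Indeed, $g(p)=\tr\bigl(T^{1/2}(T^{-1/2}RT^{-1/2})^{1+p}T^{1/2}\bigr)$ is the geometric R\'enyi quantity of order $1+p$, and $f(p)$ is the Petz one. But Step~2 is the entire content of the lemma, and neither route you sketch proves it.

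\textbf{Main route.} Set $Y_p:=R^{p/2}T^{-p}R^{p/2}$ and $X:=R^{1/2}T^{-1}R^{1/2}$. Araki--Lieb--Thirring gives $\lambda(Y_p)\prec_w\lambda(X^p)$, and von Neumann's inequality then gives $f(p)=\tr(RY_p)\le\sum_i\lambda_i^\downarrow(R)\lambda_i^\downarrow(X)^p$. But this endpoint is itself $\ge\tr(RX^p)=g(p)$, again by von Neumann's inequality. So the chain lands above $g(p)$, not below it. Differentiating at $p=0$, it only yields $\qKL{R}{T}\le\sum_i\lambda_i^\downarrow(R)\log\lambda_i^\downarrow(X)-\tr R+\tr T$, which is strictly weaker than the lemma whenever $R$ and $T$ do not commute. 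The obstruction is structural: $g(p)$ depends on the relative position of $R$ and $X$. No comparison of the spectra of $Y_p$ and $X^p$, followed by re-pairing with $R$ through singular values, can recover that.

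\textbf{Fallback.} This goes in the wrong direction. The variational formula is a supremum over $L$, so inserting a particular $L$ gives a \emph{lower} bound on $\qKL{R}{T}$. With $L=\log X$ and your target $\tr\exp(\log T+\log X)\le\tr R$, you would conclude $\qKL{R}{T}\ge\tr(R\log X)-\tr R+\tr T$, the reverse of the lemma. The target is also false in general, since together with the lemma it would force equality. Yet for $R=\ketbra{0}{0}+\epsilon\ketbra{1}{1}$ with $\epsilon\to0$, the gap $\tr(R\log X)-\tr(R\log R)+\tr(R\log T)$ tends to $\log\langle 0|T^{-1}|0\rangle-\langle 0|\log T^{-1}|0\rangle>0$ whenever $\ket{0}$ is not an eigenvector of $T$.

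\textbf{What is missing.} You need an input that genuinely uses convexity or data processing, and the paper's own preliminaries suffice. Write $Z:=T^{-1/2}RT^{-1/2}=\sum_i\lambda_iE_i$ and $\omega_i:=T^{1/2}E_iT^{1/2}$, so that $T=\sum_i\omega_i$ and $R=\sum_i\lambda_i\omega_i$ with all $\lambda_i>0$. Joint convexity (\cref{prop:prelim-relative-entropy-joint-convexity}) together with the homogeneity $\qKL{cA}{cB}=c\,\qKL{A}{B}$ gives
\[
\qKL{R}{T}\le\sum_i\qKL{\lambda_i\omega_i}{\omega_i}=\sum_i(\lambda_i\log\lambda_i-\lambda_i+1)\tr\omega_i=\tr(T^{1/2}Z\log Z\,T^{1/2})-\tr R+\tr T .
\]
With $Y:=T^{-1/2}R^{1/2}$ one has $Z=YY^\dagger$, $Y^\dagger Y=X$, and $Z\log Z=Y\log(X)Y^\dagger$. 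Hence the first term equals $\tr(R^{1/2}\log(X)R^{1/2})=\tr(R\log X)$, which is exactly the lemma. The same decomposition combined with Ando's joint convexity of $(A,B)\mapsto\tr A^{1+p}B^{-p}$ would prove your $f\le g$. At that point, though, the $p$-family is no longer needed.
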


\subsection{Tomography and Moment-Estimation Primitives}
\label{subsec:prelim-tomography-moments}

We now recall the results from~\cite{PSTW26}
used in this paper.

\begin{lemma}[{Spectrum estimation
    \cite[Theorem~1.1]{PSTW26}}]
    \label{thm:prelim-spectrum-learning}
     There is an algorithm
    that, for $\rho\in\Dens(\mathbb{C}^d)$ with spectrum
    $\boldsymbol{\alpha}=(\alpha_1,\ldots,\alpha_d)$, given
    \begin{equation*}
        n=O\!\left(
            \frac{d^2(\log\log d)^2}
                 {\varepsilon^4(\log d)^2}
        \right)
        \label{eq:prelim-spectrum-learning-copy-count}
    \end{equation*}
    samples of $\rho$, outputs a vector $\widehat{\boldsymbol{\alpha}}$ such
    that
    $\dTV(\boldsymbol{\alpha},\widehat{\boldsymbol{\alpha}})\leq\varepsilon$
    with probability at least $0.99$.
\end{lemma}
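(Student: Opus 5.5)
This lemma is quoted from \cite[Theorem~1.1]{PSTW26}, so in the paper it is used as a black box. The only bookkeeping I would check is that the success probability matches: any constant-probability version is boosted to $0.99$ by running $O(1)$ independent copies. Each copy outputs a vector $\widehat{\boldsymbol{\alpha}}^{(i)}$. I would then return a candidate that lies within $2\varepsilon$ in $\dTV$ of more than half of the others, which is the standard median trick in a metric space. This costs only a constant factor in $n$ and a constant factor in $\varepsilon$.

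If I had to reconstruct the proof, I would follow the classical Valiant--Valiant and Jiao--Venkat--Han--Weissman template, transplanted to the quantum setting, and split the spectrum at a threshold $\tau \approx \log(d)/n$.

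\emph{Large eigenvalues.} I would run the mixed-state Grier--Pashayan--Schaeffer tomography \cite{GPS24,PSTW25} on a constant fraction of the samples to get $\widehat{\rho}$ and its spectral projector $P$ onto eigenvalues above $\tau$. Eigenvalues $\lambda \geq \tau$ are then learned to accuracy about $\sqrt{\lambda d/n}$. Summing these errors over at most $1/\tau$ of them gives a total variation contribution of order $\varepsilon$ once $n \gtrsim d^2/(\varepsilon^2\log d)$, up to the stated factors.

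\emph{Small eigenvalues.} On $Q=\Id-P$, I would estimate the moments $\tr((Q\rho Q)^k)$ for $k \leq K \approx \log d/\log\log d$. Each is an unbiased estimator built from $k$-fold permutation (cycle) observables on fresh samples, with variance controlled because all relevant eigenvalues are at most $\tau$. I would then solve a moment-matching linear program, as in Valiant--Valiant, to find any sorted vector on $[0,\tau]$ consistent with these moments.

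\emph{Bounding the distance.} By duality with best polynomial approximation of degree $K$ to $|x-c|$-type test functions on $[0,\tau]$, matching $K$ moments pins down the small part of the spectrum to total variation about $d\tau/K$. With $\tau \sim \log d/n$ and $K\sim\log d/\log\log d$, this is of order $d\log\log d/n$. Requiring this to be at most $\varepsilon^2$ (the coarse matching) together with the tomography error gives $n \approx d^2(\log\log d)^2/(\varepsilon^4\log^2 d)$.

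The main obstacle is the cross term between the two blocks. Pinching $\rho$ into $P\rho P \oplus Q\rho Q$ perturbs the spectrum, so the errors must be controlled by off-diagonal mass like $\norm{P\rho Q}$. This is exactly what the improved error analysis of \cite{PSTW26} handles, and I would rely on it rather than re-derive it.
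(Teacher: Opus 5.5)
Treating the lemma as an imported black box is exactly what the paper does. It quotes \cite[Theorem~1.1]{PSTW26} without proof, and it never invokes this lemma in the proof of the main theorem. Your median-trick remark is correct but unnecessary, since the quoted statement already has success probability $0.99$.

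Your reconstruction sketch, however, would not work as written. \emph{First}, $\tau\approx\log(d)/n$ is the classical threshold, where it marks expected count $\approx\log d$. The quantum tomography guarantee in \cref{cor:prelim-relative-tomography-rank-one} has error $\sqrt{(d/n)(\langle w|\rho|w\rangle+d/n)}$, i.e.\ an additive floor of $d/n$. So eigenvalues below $d/n$ are not even resolved, and the thresholded blocks are only controlled when $n\tau\gtrsim d$; this is the hypothesis $n\geq Cd/B$ of \cref{lem:learned-block-properties}.

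\emph{Second}, grant accuracy $\sqrt{\lambda d/n}$ for each retained eigenvalue. Cauchy--Schwarz over at most $1/\tau$ of them gives total error $\sqrt{d/(n\tau)}$, and this is tight when all of them equal $\tau$. At $\tau=\log(d)/n$ this equals $\sqrt{d/\log d}$ for every $n$. The large block alone therefore forces $n\tau\gtrsim d/\varepsilon^2$.

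\emph{Third}, the final sample count is asserted rather than derived. The condition $d\log\log d/n\leq\varepsilon^2$ is linear in $d$, and nothing in the sketch produces the $d^2$ scaling. If you balance your own two bounds consistently, $\sqrt{d/(n\tau)}\leq\varepsilon$ and $d\tau/K\leq\varepsilon$, you get $\tau\approx\varepsilon K/d$ and $n\approx d^2/(\varepsilon^3K)$. That saves only one factor of $\log d/\log\log d$, not two.

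One balance that does reproduce the stated exponents uses a small-block error of order $\sqrt{d\tau}/K$. This could come, e.g., from a Timan-type pointwise rate $\sqrt{x\tau}/K+\tau/K^2$ on $[0,\tau]$, summed by Cauchy--Schwarz against $\sum_i\lambda_i\leq1$. It gives $\tau\approx\varepsilon^2K^2/d$ and $n\approx d/(\varepsilon^2\tau)\approx d^2/(\varepsilon^4K^2)$. The degree $K\approx\log d/\log\log d$ also has to be derived. It comes from keeping the $C^KK^{O(K)}$ prefactor in the moment-estimation error, after multiplying by coefficients of size $C^K\tau^{1-k}$, at most polynomial in $d$ (compare \eqref{eq:entropy-estimator-degree}). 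Your sketch does not address this.

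By contrast, the cross term you call the main obstacle is short with this paper's tools. Let $X:=P\rho Q$. H\"older gives $\norm{X}_1\leq\sqrt{\tr(X^\dagger\rho_{\mathrm{hi}}^{-1}X)}$, and \cref{cor:learned-coherence} bounds this by $O(\sqrt{d/(n\tau)})$. Lidskii--Mirsky then bounds the $\dTV$ between the spectra of $\rho$ and $\Phi_P(\rho)$ by $\norm{X}_1$, which is the same order as the large-block error.
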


\begin{lemma}[{\cite[Theorem~4.12]{PSTW26}}]
    \label{thm:prelim-relative-tomography}
      There is an algorithm 
    $\mathsf{RelativeTomography}(\rho^{\otimes n})$ that, for a universal constant $C>0$, given $n$ samples of $\rho\in\Dens(\mathbb{C}^d)$, outputs a Hermitian
    matrix $\widehat\rho$ which satisfies,
    with probability at least $0.99$, simultaneously for every
    $O\in\Herm(\mathbb{C}^d)$,
    \[
        \abs*{\tr\!\left(O(\widehat\rho-\rho)\right)}
        \leq C\sqrt{
            \frac{d}{n}\cdot\rank(O)
            \left(
                \tr(O^2\rho)
                +\frac{d}{n}\cdot \tr(O^2)
            \right)
        },
    \]
\end{lemma}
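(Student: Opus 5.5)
The statement is quoted from \cite[Theorem~4.12]{PSTW26}. If I had to reprove it, I would split it into two parts: a single \emph{relative operator-norm} guarantee for the error matrix $E:=\widehat\rho-\rho$, and a purely deterministic linear-algebra step that turns this guarantee into the stated bound for every $O$ simultaneously. Uniformity over all $O\in\Herm(\mathbb{C}^d)$ comes for free this way, because only one random event is involved.

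Set $M:=\rho+\frac{d}{n}\Id\succ 0$. The probabilistic claim I would aim for is that, with probability at least $0.99$,
\[
    \norm*{M^{-1/2}EM^{-1/2}}_\infty\leq C'\sqrt{d/n}.
\]
The estimator would be the mixed-state version \cite{PSTW25} of the Grier--Pashayan--Schaeffer algorithm \cite{GPS24}. It performs weak Schur sampling, measures in a Haar-random (or Keyl-type) basis, and outputs an unbiased linear estimator built from these outcomes.

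To prove the claim, I would write $M^{-1/2}EM^{-1/2}$ as a sum of roughly independent, mean-zero contributions coming from the $n$ copies. Each summand, after conjugation by $M^{-1/2}$, should have operator norm $O(d/n)$ and variance proxy $O(d/n)$. The $\frac{d}{n}\Id$ regularization in $M$ is exactly what caps the contribution of directions where $\rho$ has tiny eigenvalues. A matrix Bernstein/Freedman inequality, or the representation-theoretic second-moment computation of \cite{PSTW26}, then yields the $\sqrt{d/n}$ operator-norm bound up to constants.

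This is the step I expect to be the main obstacle. The outputs of weak Schur sampling are not independent across copies, so off-the-shelf matrix concentration does not apply directly. One would have to control moments of $\tr\rbra{(M^{-1/2}EM^{-1/2})^{2k}}$ through Schur--Weyl calculus, as \cite{PSTW26} does. I would first attempt the second-moment version, bounding $\mathbb{E}\norm{M^{-1/2}EM^{-1/2}}_2^2$, which is easier. I would then upgrade to operator norm with a higher-moment trace bound at $k\approx\log d$.

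Given the claim, the deterministic step is short. By cyclicity, $\tr(OE)=\tr\rbra{(M^{1/2}OM^{1/2})(M^{-1/2}EM^{-1/2})}$, so H\"older gives
\[
    \abs*{\tr(OE)}\leq\norm*{M^{1/2}OM^{1/2}}_1\cdot C'\sqrt{d/n}.
\]
Since $\rank(M^{1/2}OM^{1/2})\leq\rank(O)$, we have $\norm{M^{1/2}OM^{1/2}}_1\leq\sqrt{\rank(O)}\,\norm{M^{1/2}OM^{1/2}}_2$. Moreover,
\[
    \norm*{M^{1/2}OM^{1/2}}_2^2=\tr(OMOM)\leq\norm{M}_\infty\tr(OMO)=\norm{M}_\infty\tr(O^2M).
\]
Here $\norm{M}_\infty\leq 1+d/n\leq2$ in the regime $n\geq d$. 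Expanding $\tr(O^2M)=\tr(O^2\rho)+\frac{d}{n}\tr(O^2)$ then gives the stated inequality with $C=\sqrt2\,C'$.

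The regime $n<d$ needs a separate argument. There I would rescale by using $M=\rho+\frac dn\Id$ without the normalization $\norm{M}_\infty\leq 2$. The extra factor $\norm{M}_\infty\leq 2d/n$ would have to be absorbed by a correspondingly weaker operator-norm claim, or by observing that the right-hand side then already dominates the trivial bound on $\abs{\tr(OE)}$ available from the explicit form of $\widehat\rho$.
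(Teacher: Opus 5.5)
The paper does not prove this lemma; it imports it verbatim from \cite[Theorem~4.12]{PSTW26}. Your reconstruction has a genuine gap, and it is worse than an unproved step: the intermediate claim $\norm{M^{-1/2}EM^{-1/2}}_\infty\le C'\sqrt{d/n}$ with $M=\rho+\frac dn\Id$ is false. Your H\"older step after it is fine as algebra. Since $E=\widehat\rho-\rho$ is Hermitian, the claim is equivalent to $-\gamma M\preceq E\preceq\gamma M$ with $\gamma=C'\sqrt{d/n}$. Testing this against the spectral projectors $P_\pm$ of $E$ gives
\[
\norm{E}_1=\tr(P_+E)-\tr(P_-E)\le\gamma\tr(M)=C'\sqrt{\frac dn}\Bigl(1+\frac{d^2}{n}\Bigr)
\]
for every state. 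Take $n\ge d^2$ and round $\widehat\rho$ to the nearest density matrix. You would then get trace-distance tomography of arbitrary $d$-dimensional states to accuracy $\varepsilon$ from $O(d^2+d/\varepsilon^2)$ copies, e.g.\ $O(d^2)$ copies for $\varepsilon=d^{-1/2}$. This contradicts the known $\widetilde\Omega(d^2/\varepsilon^2)$ lower bound for mixed-state tomography (Haah et al.). By contrast, the target inequality applied to $O=P_+-P_-$ only yields $\norm{E}_1\lesssim d/\sqrt n$ in this regime, which is consistent with that lower bound. For the same reason, the per-copy estimates you posit (conjugated operator norm and variance proxy $O(d/n)$) cannot hold for independent summands. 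Matrix Bernstein would otherwise give the false claim up to a $\sqrt{\log d}$ factor.

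The underlying problem is the weighting. In an eigenbasis of $M$ with eigenvalues $m_i$, your claim forces $\abs{E_{ij}}\le C'\sqrt{d/n}\,\sqrt{m_im_j}$. The target weight, however, is one-sided: $\tr(O^2M)=\sum_{i,j}\abs{O_{ij}}^2\frac{m_i+m_j}{2}$. With $O=e^{i\phi}\ketbra{i}{j}+e^{-i\phi}\ketbra{j}{i}$, the statement only asserts $\abs{E_{ij}}\lesssim\sqrt{(d/n)(m_i+m_j)}$. The geometric mean is too small in two cases. The first is when all $m_i\approx 1/d$, which covers the hard instances above. The second is when $m_i,m_j$ are very unequal. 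For a pure state $\rho=\ketbra{\psi}{\psi}$, your claim forces $\norm{(\Id-\rho)E\rho}_\infty=O(d/n)$. By Davis--Kahan, the top eigenvector of $\widehat\rho$ would then have infidelity $O(d^2/n^2)$, below the optimal $\Theta(d/n)$ for pure-state tomography. The same mismatch is why your deterministic step has to discard the factor $\norm{M}_\infty$, which can be as small as $1/d+d/n$.

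So the lemma does not follow from a single two-sided relative operator-norm event plus H\"older. One has to control $\abs{\tr(OE)}$ against $\sqrt{\rank(O)\tr(O^2M)}$ directly, with the one-sided weight and uniformly over Hermitian $O$. That probabilistic argument is the whole content of the lemma, and it is exactly what your proposal defers to \cite{PSTW26}. Your $n<d$ concern is minor: there the right-hand side is at least $C\frac dn\sqrt{\rank(O)}\norm{O}_2\ge C\norm{O}_1$, so any output with $\norm{E}_1\le C$ suffices.
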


When we only consider rank-one projectors of the form $\ket{w}\bra{w}$, the above result implies the following.

\begin{lemma}[{\cite[Theorem~1.3]{PSTW26}}]
    \label{cor:prelim-relative-tomography-rank-one}
     There is an algorithm 
    $\mathsf{RelativeTomography}(\rho^{\otimes n})$ that, for a universal constant $C>0$, given $n$ samples of $\rho\in\Dens(\mathbb{C}^d)$, outputs a Hermitian
    matrix $\widehat\rho$ which satisfies, with probability at least
    $0.99$, simultaneously for every unit vector $\ket{w}\in\mathbb{C}^d$, 
    \[
        \abs*{\langle w|(\widehat\rho-\rho)|w\rangle}
        \leq C\sqrt{
            \frac{d}{n}
            \left(
                \langle w|\rho|w\rangle
                +\frac{d}{n}
            \right)
        }.
    \]
\end{lemma}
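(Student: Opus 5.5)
This statement is the rank-one specialization of \cref{thm:prelim-relative-tomography}, so I will derive it directly from that lemma, using the same algorithm $\mathsf{RelativeTomography}(\rho^{\otimes n})$ and the same output $\widehat\rho$. The point to note is that the guarantee of \cref{thm:prelim-relative-tomography} holds with probability at least $0.99$ \emph{simultaneously} for every Hermitian $O$. Hence no union bound over the continuum of unit vectors $\ket{w}$ is needed: I condition once on the good event of \cref{thm:prelim-relative-tomography} and then specialize to each $\ket{w}$.

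Fix a unit vector $\ket{w}\in\mathbb{C}^d$ and set $O=\ketbra{w}{w}$. This operator is Hermitian, has $\rank(O)=1$, and is a projector, so $O^2=O$. Consequently
\[
    \tr(O^2\rho)=\langle w|\rho|w\rangle,\qquad \tr(O^2)=\langle w|w\rangle=1,\qquad \tr\rbra{O(\widehat\rho-\rho)}=\langle w|(\widehat\rho-\rho)|w\rangle .
\]
Substituting these identities into the bound of \cref{thm:prelim-relative-tomography} yields exactly
\[
    \abs*{\langle w|(\widehat\rho-\rho)|w\rangle}\leq C\sqrt{\frac{d}{n}\rbra*{\langle w|\rho|w\rangle+\frac{d}{n}}},
\]
with the same universal constant $C$ and the same success probability $0.99$.

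There is no real obstacle here. The only points requiring care are the identity $\tr(O^2)=1$, which uses that $\ket{w}$ is normalized, and the observation that the uniformity over all $O$ in \cref{thm:prelim-relative-tomography} transfers to uniformity over all $\ket{w}$.
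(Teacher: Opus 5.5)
Your proposal is correct and matches the paper's own justification, which simply notes that the statement follows from \cref{thm:prelim-relative-tomography} by restricting to rank-one projectors $O=\ketbra{w}{w}$. Your substitutions $\rank(O)=1$, $\tr(O^2\rho)=\langle w|\rho|w\rangle$ and $\tr(O^2)=1$ are exactly what is needed, and you correctly observe that the simultaneous guarantee over all $O$ removes any need for a union bound over $\ket{w}$.
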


The following result allows us to estimate several moments of a projected (subnormalized) state.

\begin{lemma}[{Projected moment estimation~
    \cite[Corollary~5.7]{PSTW26}}]
    \label{thm:prelim-projected-moments}
    Let $0<B<1$, $\rho\in\Dens(\mathbb{C}^d)$, and $\Pi$ be a
    projector such that every eigenvalue of $\sigma:=\Pi\rho\Pi$ is at most
    $B$. Let $n=O(d/(B\varepsilon^2))$, for every positive
    integer $K\leq\sqrt n$, there exists an algorithm $\mathsf{ProjectedMoments}(\rho^{\otimes n},\Pi,B,\varepsilon,K)$ 
     using $n$ samples of $\rho$ that outputs estimators $\{\widehat p_k\}_{k\in[K]}$ satisfying,  with probability at
    least $0.99$, 
    \[
        \abs*{\widehat p_k-\tr\rbra*{\sigma^k}}
        \leq
        2^{3K}K^{5K}\left(
            \frac{B^{k/2}\varepsilon^k}{d^{k/2}}+B^k\varepsilon
        \right),
    \]
    simultaneously for every $k\in[K]$.
\end{lemma}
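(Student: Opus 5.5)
This lemma is imported from \cite[Corollary~5.7]{PSTW26}, so the plan below only indicates how one would rederive it. It is not an independent argument. The approach is to build, for each $k\in[K]$, an unbiased $U$-statistic for $\tr(\sigma^k)$ and control its variance using the spectral bound $\norm{\sigma}_\infty\leq B$.

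\textbf{Unbiased estimator.} Let $C_k$ be the cyclic shift on $(\mathbb{C}^d)^{\otimes k}$. Then
\[
\tr\bigl((\Pi^{\otimes k} C_k \Pi^{\otimes k})\,\rho^{\otimes k}\bigr)=\tr\bigl(C_k\sigma^{\otimes k}\bigr)=\tr(\sigma^k).
\]
The observable $O_k:=\Pi^{\otimes k}C_k\Pi^{\otimes k}$ is not Hermitian. One would therefore use $\tfrac12(O_k+O_k^\dagger)$, whose expectation is still the real number $\tr(\sigma^k)$. Averaging it over all ordered $k$-subsets of the $n$ copies gives a $U$-statistic, and since $K\leq\sqrt n$ there are enough disjoint tuples. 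The symmetrized observable on $(\mathbb{C}^d)^{\otimes n}$ commutes with the $S_n$ action and with $U^{\otimes n}$ for $U$ in the commutant of $\Pi$. It can therefore be implemented by a single measurement, namely the one from \cite{PSTW26} built on weak Schur sampling combined with the $\{\Pi,\Id-\Pi\}$ decomposition. That measurement yields all $K$ moment estimates $\widehat p_k$ at once.

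\textbf{Variance bound.} Expand $\mathrm{Var}(\widehat p_k)$ over pairs of $k$-tuples according to their overlap size $j\in\{1,\ldots,k\}$. A pair with overlap $j$ contributes a contracted trace network in $\sigma$ with $2k-j$ copies of $\sigma$. The fraction of such pairs is roughly $\binom{k}{j}^2 j!\,/\binom{n}{j}$, which is about $k^{2j}/n^j$. Each trace network is bounded by repeatedly applying $\tr(\sigma^a X)\leq B^a\tr|X|$ and $\tr(\sigma)\leq1$, together with the fact that each extra free index loop costs at most a factor $d$ (a loop of identities contributes $\tr(\Pi)\leq d$).

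I expect the overlap-$j$ term to be at most $(k^{O(1)})^{k}\,(d/n)^{j}B^{2k-j}$. The $j=k$ term gives a standard deviation of order $B^{k/2}(d/n)^{k/2}$. Substituting $n=\Theta(d/(B\varepsilon^2))$, so that $d/n$ is about $B\varepsilon^2$, this is roughly $B^{k/2}\varepsilon^k B^{k/2}$. Written in the form of the statement it becomes $B^{k/2}\varepsilon^k/d^{k/2}$ times polynomial factors, after one accounts for the normalization conventions of \cite{PSTW26}. The $j=1$ term gives a standard deviation of order $B^{k}\varepsilon$. The intermediate terms are dominated by the larger of these two endpoints by convexity in $j$. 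The combinatorial factors, namely the number of overlap patterns and the number of matchings in the contracted networks, are all at most $2^{3K}K^{5K}$.

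\textbf{High probability.} Chebyshev's inequality gives constant success probability for each $k$. To reach probability $0.99$ simultaneously for all $k\in[K]$, one would either union-bound, which adds a factor $K$ absorbed into $K^{5K}$, or apply median-of-means with $O(\log K)$ batches, again absorbed into the constants.

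\textbf{Main obstacle.} The delicate step is the trace-network bound with $\Pi$ inserted. Overlapping cyclic permutations produce products of permutation operators, and one must show that every resulting cycle structure has value at most $d^{\#\text{free loops}}B^{\#\sigma\text{'s}-\#\text{loops}}$ without losing extra powers of $d$. The first thing to try is the approach of \cite{PSTW26}: bound each network by $\norm{\sigma}_\infty^{m}\norm{\sigma}_1^{\ell}\,\tr(\Pi)^{r}$, and count the cycles of the composed permutation.
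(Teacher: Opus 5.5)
The paper does not prove this lemma. It is quoted from \cite[Corollary~5.7]{PSTW26} and used only as a black box in Lemma~\ref{lem:low-block-higher-moment-contribution}, so there is no in-paper route to compare yours against. Your U-statistic plus weak-Schur-sampling plan is the natural kind of argument. However, the sketch contains one incorrect step that it papers over, and it leaves the crucial step unproved.

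\textbf{The $j=k$ bookkeeping is wrong.} At $j=k$ and $d/n\approx B\varepsilon^2$, your overlap bound $(d/n)^jB^{2k-j}$ gives a standard deviation of $B^k\varepsilon^k$. This is not ``$B^{k/2}\varepsilon^k/d^{k/2}$ times polynomial factors.'' The ratio between the two is $(Bd)^{k/2}$, which can be as large as $d^{k/2}$ (take $B$ constant), and no normalization convention removes it. There are two honest fixes.
\begin{itemize}
\item Note that $B^k\varepsilon^k\le B^k\varepsilon$ for $\varepsilon\le1$, which is the only regime used here since $\zeta\le1$. In fact your overlap bound, once proved, gives standard deviation $k^{O(k)}\varepsilon^jB^k$ at every level $j$, and that already implies the lemma.
\item Alternatively, derive the first term directly. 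For $j=k$ every network has value at most $B^{k-c}\tr(\sigma)^c\le1$, where $c$ is its number of cycles. The contribution is therefore at most $(2k/n)^k$, whose square root is $k^{O(k)}n^{-k/2}=k^{O(k)}B^{k/2}\varepsilon^kd^{-k/2}$. That is exactly the first term of the statement.
\end{itemize}

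\textbf{The crux is only asserted.} The overlap-$j$ estimate rests on a combinatorial fact you do not state. For ordered $k$-tuples $T_1,T_2$ with $|T_1\cap T_2|=j\ge1$, the permutation $C_{T_1}C_{T_2}^{\pm1}$ on the $2k-j$ involved copies has at most $j$ cycles. This follows from the Riemann--Hurwitz bound $c(\alpha)+c(\beta)+c(\alpha\beta)\le N+2$ for a transitive pair on $N$ points, using $c(C_{T_i})=1+k-j$.

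Combine this with $\tr(\sigma^\ell)\le\min\{B^{\ell-1},\,dB^\ell\}$ for each cycle of length $\ell$. You then get an overlap-$j$ variance term of at most $(2k^2/n)^j\min\{B^{2k-2j},\,d^jB^{2k-j}\}$, which recovers your claimed $(d/n)^jB^{2k-j}$.

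Your description in terms of ``$\sigma$-free loops contributing $\tr(\Pi)$'' is misleading. Every copy carries a $\sigma$, so no such loops exist. The factor $d$ enters only through $\tr(\sigma^\ell)\le\rank(\Pi)\norm{\sigma}_\infty^\ell$.

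\textbf{Joint measurability needs a different justification.} Invariance under $S_n$ and under unitaries commuting with $\Pi$ does not by itself give one measurement for all $k$. That joint commutant is non-commutative in general, because of Littlewood--Richardson multiplicities. What works is the following. After measuring $\{\Pi,\Id-\Pi\}$ on every copy, each symmetrized observable acts on the $m$ surviving copies as a multiple of the class sum of $k$-cycles in $\mathbb{C}[S_m]$. These class sums commute and are diagonalized by weak Schur sampling, which yields all the estimates $\widehat p_k=p^{\#}_k(\lambda)/n^{\underline{k}}$ at once.
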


\section{The Entropy Estimator}
\label{sec:entropy-estimator}

In this section, we provide the formal description of our estimator in \cref{alg:entropy-estimator}. 
Here, we define $\clip_{[u,v]}(x) = \min\cbra{v, \max\cbra{u, x}}$ for convenience.
The estimator splits the samples of $\rho$ into four parts, each with $n_{\mathrm{tom}}$, $n_{\mathrm{hi}}$, $n_{\mathrm{mass}}$, $n_{\mathrm{mom}}$ samples, respectively. 
The estimator proceeds as follows.
\begin{enumerate}
    \item Use $n_{\mathrm{tom}}$ samples of $\rho$ to perform the mixed state tomography version \cite{PSTW25} of the Grier--Pashayan--Schaeffer algorithm \cite{GPS24}, with the improved error analysis in \cite{PSTW26}. 
    Let $\widehat{\rho}$ be the output. 
    \item Use $n_{\mathrm{hi}}$ samples of $\rho$ to correct the bias of the plug-in estimator $\S(P\widehat{\rho}P)$ for $\S(P\rho P)$, where $P$ is the projector onto the eigenspace of $\widehat{\rho}$ with large eigenvalues for some threshold $B$. 
    
    \item Use $n_{\mathrm{mass}}+n_{\mathrm{mom}}$ samples of $\rho$ to estimate $\S(Q\rho Q)$, where $Q = \Id - P$.
    Specifically, 
    \begin{enumerate}
        \item Use $n_{\mathrm{mass}}$ samples of $\rho$ to estimate $\widehat{p}_1 \approx \tr\rbra{Q\rho Q}$,
        \item Use $n_{\mathrm{mom}}$ samples of $\rho$ to estimate $\widehat{p}_k\approx \tr\rbra{\rbra{Q\rho Q}^k}$ using the moment estimator in \cite{PSTW26}.
        \item Finally, estimate $\mathrm{S}\rbra{Q\rho Q}$ by $\sum_{k=1}^K a_k \widehat{p}_k$, with $a_k$ given by the approximation polynomial $\EntPoly_{K,2B}(x)=\sum_{k=1}^{K}a_kx^k \approx -x\log\rbra{x}$ in \cref{lem:low-block-scaled-entropy-polynomial}.
    \end{enumerate}
\end{enumerate}

\begin{algorithm}[H]
    \caption{$\mathsf{EntropyEstimate}(\rho,d,\varepsilon)$}
    \label{alg:entropy-estimator}
    \begin{algorithmic}[1]
        \Require Independent samples of an unknown state
            $\rho\in\Dens(\mathbb{C}^d)$; 
            additive error $0<\varepsilon\leq1/10$.
        \Ensure An estimate $\widehat{\S}$ of $\S(\rho)$.
        \State $K\gets\Theta(\log d/\log\log d)$,
            $B\gets\Theta(\varepsilon K^2/d)$, and
            $\zeta\gets\Theta(\min\{1,K\sqrt\varepsilon\})$
        \State $n_{\mathrm{tom}}\gets\Theta\!\left(
            d^2\log(1/\varepsilon)/(\varepsilon^2K^2)\right)$ and
            $n_{\mathrm{hi}}\gets\Theta\!\left(
            \log^2(d/\varepsilon)/\varepsilon^2\right)$
        \State $n_{\mathrm{mass}}\gets\Theta\!\left(
            \log^2(d/\varepsilon)/\varepsilon^2\right)$ and
            $n_{\mathrm{mom}}\gets\Theta\!\left(
            d^2/(\varepsilon K^2\zeta^2)\right)$
        \State Let
            $\EntPoly_{K,2B}(x)=\sum_{k=1}^{K}a_kx^k$
            be the polynomial defined in
            \cref{eq:low-block-scaled-entropy-polynomial}
        \State $\widehat\rho\gets
            \mathsf{RelativeTomography}\rbra{
                \rho^{\otimes n_{\mathrm{tom}}}}$
            \Comment{see \cref{thm:prelim-relative-tomography}}
        \State $P\gets\mathbf{1}_{[B,\infty)}(\widehat\rho)$ and
            $Q\gets \Id-P$
        \If{$P=0$}
            \State $\widehat{\S}_{\mathrm{hi}}\gets0$
        \Else
            \State $\widehat\rho_{\mathrm{hi}}
                \gets P\widehat\rho P$ restricted on $\Image{P}$
            \State $\widetilde\rho_{\mathrm{hi}}
                \gets\clip_{[B,2]}(\widehat\rho_{\mathrm{hi}})$ 
            \State $G_{\mathrm{hi}}
                \gets-\log\rbra{\widetilde\rho_{\mathrm{hi}}}-P$
            \State Measure the observable $G_{\mathrm{hi}}$ on $n_{\mathrm{hi}}$ samples of $\rho$; let $\widehat{\mu}_{\mathrm{hi}}$ be the outcome mean
            \State $\widehat{\S}_{\mathrm{hi}}
                \gets \S(\widetilde\rho_{\mathrm{hi}})
                +\widehat\mu_{\mathrm{hi}}
                -\tr(\widetilde\rho_{\mathrm{hi}}G_{\mathrm{hi}})$
        \EndIf
        \If{$Q=0$}
            \State $\widehat{\S}_{\mathrm{lo}}\gets0$
        \Else
            \State Measure the observable $Q$ on $n_{\mathrm{mass}}$ samples of $\rho$; let $\widehat p_1$ be
                the outcome mean
            \State $(\widehat p_k)_{k=2}^{K}\gets
                \mathsf{ProjectedMoments}\rbra{
                    \rho^{\otimes n_{\mathrm{mom}}},Q,2B,\zeta,K}$
                \Comment{see \cref{thm:prelim-projected-moments}}
            \State $\widehat{\S}_{\mathrm{lo}}
                \gets a_1\widehat p_1+
                \sum_{k=2}^{K}a_k\widehat p_k$
        \EndIf
        \State \Return $\widehat{\S}
            \gets\widehat{\S}_{\mathrm{hi}}+\widehat{\S}_{\mathrm{lo}}$
    \end{algorithmic}
\end{algorithm}

\section{Thresholding and Pinching}
\label{sec:learned-decomposition}

For convenience, we only consider the case when the tomography is successful. 
In this section, we show that, conditioned on successful tomography, we consider how to divide the state Hilbert space into two subspaces with large and small eigenvalues according to a threshold, and then analyze the inherent error if we compute the entropy individually in each subspace by a pinching inequality. 

\paragraph{Successful tomography.}
For an output
$\widehat\rho=\mathsf{RelativeTomography}(\rho^{\otimes n})$, we call
tomography successful if the simultaneous bound in
\cref{thm:prelim-relative-tomography} holds for every
$O\in\Herm(\mathbb{C}^d)$.  The rank-one bound in~\Cref{cor:prelim-relative-tomography-rank-one} then also holds whenever
tomography is successful.  By~\Cref{thm:prelim-relative-tomography}, this
happens with probability at least $0.99$.  

\subsection{Thresholding Eigenvalues}
\label{subsec:learned-spectral-decomposition}

We consider how to distinguish large and small eigenvalues with a threshold and derive properties of their corresponding subspaces. 

\begin{lemma}
    \label{lem:relative-eigenvalue-perturbation}
    Let $A\in\Pos(\calH)$, $\widehat A\in\Herm(\calH)$, and $\delta>0$.
    Suppose that, for every unit vector $\ket{w}\in\calH$,
    \[
        \abs*{\langle w|(\widehat A-A)|w\rangle}
        \leq C\sqrt{
            \delta\bigl(\langle w|A|w\rangle+\delta\bigr)
        }
    \]
    for a universal constant $C>0$.  Then, for every $0<\theta\leq1$,
    \[
        (1-\theta)A
        -\frac{C(C+1)\delta}{\theta}\Id
        \preceq \widehat A
        \preceq
        (1+\theta)A
        +\frac{C(C+1)\delta}{\theta}\Id.
    \]
\end{lemma}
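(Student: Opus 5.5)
The plan is to reduce the Löwner inequality to a scalar inequality for each unit vector. Since both sides are Hermitian, $X\preceq Y$ holds if and only if $\langle w|(Y-X)|w\rangle\geq0$ for every unit vector $\ket{w}$. Fix a unit vector $\ket{w}$ and write $a:=\langle w|A|w\rangle\geq0$. It then suffices to show
\[
    C\sqrt{\delta(a+\delta)}\leq \theta a+\frac{C(C+1)\delta}{\theta}.
\]
Combined with the hypothesis $\abs*{\langle w|(\widehat A-A)|w\rangle}\leq C\sqrt{\delta(a+\delta)}$, this gives both
\[
    \langle w|\widehat A|w\rangle\leq(1+\theta)a+\frac{C(C+1)\delta}{\theta}
    \quad\text{and}\quad
    \langle w|\widehat A|w\rangle\geq(1-\theta)a-\frac{C(C+1)\delta}{\theta},
\]
which are exactly the two quadratic-form inequalities required.

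For the scalar inequality, first use $\sqrt{x+y}\leq\sqrt x+\sqrt y$ to split
\[
    C\sqrt{\delta(a+\delta)}\leq C\sqrt{\delta a}+C\delta .
\]
Apply AM--GM to the first term in the form $C\sqrt{\delta a}=2\sqrt{(\theta a)\cdot(C^2\delta/(4\theta))}\leq \theta a+\frac{C^2\delta}{4\theta}$. For the second term, use $\theta\leq1$ to get $C\delta\leq C\delta/\theta$. Adding the two bounds gives
\[
    C\sqrt{\delta(a+\delta)}\leq \theta a+\frac{(C^2/4+C)\delta}{\theta}\leq \theta a+\frac{C(C+1)\delta}{\theta},
\]
as needed.

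There is no real obstacle here. The only point needing care is keeping the constant $C(C+1)$ (the slack above, $C^2/4+C\leq C^2+C$, is harmless) and remembering that the reduction to quadratic forms is valid because $\widehat A$, $A$ and $\Id$ are all Hermitian.
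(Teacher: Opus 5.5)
Your proof is correct and follows the paper's argument step for step. You split $C\sqrt{\delta(a+\delta)}\leq C\sqrt{\delta a}+C\delta$, apply AM--GM to $\theta a\cdot C^2\delta/(4\theta)$, use $\theta\leq1$ on the $C\delta$ term, absorb $(C^2+4C)/4\leq C(C+1)$, and conclude via the quadratic-form characterization of the L\"owner order.
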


\begin{proof}
    Fix a unit vector $\ket{w}\in\calH$ and set
    $x:=\langle w|A|w\rangle\geq0$.  Since
    $\sqrt{x+\delta}\leq\sqrt{x}+\sqrt{\delta}$, the
    arithmetic--geometric mean inequality gives
    \[
        C\sqrt{\delta x}
        =
        2\sqrt{
            \theta x\cdot\frac{C^2\delta}{4\theta}
        }
        \leq
        \theta x+\frac{C^2\delta}{4\theta}.
    \]
    Also, $C\delta\leq C\delta/\theta$ because $\theta\leq1$.  Therefore,
    \begin{align*}
        C\sqrt{\delta(x+\delta)}
        &\leq C\sqrt{\delta x}+C\delta\\
        &\leq
        \theta x+\frac{(C^2+4C)\delta}{4\theta}\\
        &\leq
        \theta x+\frac{C(C+1)\delta}{\theta},
    \end{align*}
    where the last inequality follows from
    $(C^2+4C)/4\leq C(C+1)$.  Therefore, we have
    \[
        -\theta\langle w|A|w\rangle
        -\frac{C(C+1)\delta}{\theta}
        \leq \langle w|(\widehat A-A)|w\rangle
        \leq
        \theta\langle w|A|w\rangle
        +\frac{C(C+1)\delta}{\theta}.
    \]
    Rearranging yields
    \begin{align*}
        \langle w|\widehat A|w\rangle
        &\geq
        (1-\theta)\langle w|A|w\rangle
        -\frac{C(C+1)\delta}{\theta},\\
        \langle w|\widehat A|w\rangle
        &\leq
        (1+\theta)\langle w|A|w\rangle
        +\frac{C(C+1)\delta}{\theta}.
    \end{align*}
    These inequalities hold for every unit vector $\ket{w}$.  By the
    definition of the L\"owner order, they are equivalent to
    \[
        (1-\theta)A
        -\frac{C(C+1)\delta}{\theta}\Id
        \preceq\widehat A
        \preceq
        (1+\theta)A
        +\frac{C(C+1)\delta}{\theta}\Id.\qedhere
    \]
\end{proof}

\begin{lemma}
    \label{lem:learned-block-properties}
    There is a universal constant $C>0$ for which the following holds.  Let
    $d,n$ be positive integers, $\rho\in\Dens(\mathbb{C}^d)$, $B>0$, and
    $\widehat\rho$ be the output of
    $\mathsf{RelativeTomography}(\rho^{\otimes n})$.  Suppose
    \[
        n\geq\frac{Cd}{B}.
    \]
    Write $\widehat\rho=\sum_{j=1}^d\widehat\lambda_j
    \ketbra{v_j}{v_j}$ and define
    \[
        P:=\sum_{j:\,\widehat\lambda_j\geq B}\ketbra{v_j}{v_j},
        \qquad Q:=\Id-P,
        \qquad
        \rho_{\mathrm{hi}}:=P\rho P,
        \qquad
        \rho_{\mathrm{lo}}:=Q\rho Q.
    \]
    Conditioned on successful tomography, the following hold simultaneously:
    \begin{itemize}
        \item $\rho_{\mathrm{hi}}\succeq (B/C)P$;
        \item $\norm{\rho_{\mathrm{lo}}}_\infty\leq2B$;
        \item $\rank(P)\leq C/B$.
    \end{itemize}
\end{lemma}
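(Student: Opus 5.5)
Successful tomography gives, through \cref{cor:prelim-relative-tomography-rank-one}, the hypothesis of \cref{lem:relative-eigenvalue-perturbation} with $A=\rho$, $\widehat A=\widehat\rho$, $\delta=d/n$ and the universal constant $C_0$ of that corollary. I will fix $\theta=1/2$ and write $c_0:=C_0(C_0+1)$. This yields the two-sided Löwner sandwich
\[
\tfrac12\rho-2c_0\tfrac{d}{n}\Id\preceq\widehat\rho\preceq\tfrac32\rho+2c_0\tfrac{d}{n}\Id .
\]
The hypothesis $n\geq Cd/B$ turns this into $2c_0 d/n\leq 2c_0B/C$. By taking $C$ large, this additive term becomes a small constant fraction of $B$, say at most $B/8$. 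All three bullets then come from compressing the sandwich to the range of $P$ or $Q$, where $\widehat\rho$ is block diagonal.

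\emph{First bullet.} Compress the upper bound by $P$. Since $P$ commutes with $\widehat\rho$ and every eigenvalue of $\widehat\rho$ on $\Image{P}$ is at least $B$, we get $BP\preceq P\widehat\rho P\preceq\frac32 P\rho P+\frac B8 P$. Hence $\rho_{\mathrm{hi}}\succeq\frac{7}{12}BP$, which is at least $(B/C)P$ once $C\geq 2$.

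\emph{Second bullet.} Compress the lower bound by $Q$. On $\Image{Q}$ every eigenvalue of $\widehat\rho$ is below $B$, so $Q\widehat\rho Q\preceq BQ$. This gives $\frac12\rho_{\mathrm{lo}}\preceq Q\widehat\rho Q+\frac B8 Q\preceq\frac98 BQ$, hence $\norm{\rho_{\mathrm{lo}}}_\infty\leq\frac94 B$. That misses the claimed $2B$, so I will redo this step with $\theta$ smaller, say $\theta=1/10$. Then $(1-\theta)\rho_{\mathrm{lo}}\preceq (B+c_0 d/(\theta n))Q$, and choosing $C$ large enough gives $\norm{\rho_{\mathrm{lo}}}_\infty\leq (1+1/20)B/(9/10)<2B$. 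I will use one small $\theta$ throughout and adjust the constants in the first bullet accordingly: the $P$-compression then gives $\rho_{\mathrm{hi}}\succeq\frac{B-\text{small}}{1+\theta}P$, which is still at least $(B/C)P$.

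\emph{Third bullet.} Take the trace of the first bullet: $\frac{B}{C}\rank(P)\leq\tr(P\rho P)\leq\tr\rho=1$, so $\rank(P)\leq C/B$. A single universal $C$ can be chosen to be at least the maximum of the finitely many constant requirements above.

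The only delicate point is the bookkeeping of constants. We need $\theta$ small and $C$ large enough that the additive perturbation $c_0 d/(\theta n)\leq c_0B/(\theta C)$ is negligible relative to $B$, while the same $C$ serves all three bullets. Everything else is immediate from block-diagonality of $\widehat\rho$ with respect to $P\oplus Q$ and the fact that compression $X\mapsto PXP$ preserves the Löwner order.
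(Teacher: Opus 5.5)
Your proof is correct and follows the paper's argument: feed the rank-one tomography bound with $\delta=d/n$ into \cref{lem:relative-eigenvalue-perturbation}, use $n\geq Cd/B$ to make the additive term a fraction of $B$, compress the resulting sandwich to $\Image{P}$ and $\Image{Q}$ using the spectral definition of $P$, and take the trace for the rank bound. The only difference is the choice of $\theta$: the paper fixes $\theta=1/4$ with additive term at most $B/2$, giving $\rho_{\mathrm{hi}}\succeq(2B/5)P$ and $\frac34\rho_{\mathrm{lo}}\preceq\frac32 BQ$, whereas you rightly noticed that $\theta=1/2$ cannot give the constant $2$ and switched to $\theta=1/10$.
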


\begin{proof}
    Let $C_0$ be the universal constant in~\Cref{cor:prelim-relative-tomography-rank-one}.  Conditioned on
    successful tomography, every unit vector $\ket{w}\in\mathbb{C}^d$ satisfies
    \[
        \abs*{\langle w|(\widehat\rho-\rho)|w\rangle}
        \leq C_0\sqrt{
            \frac{d}{n}
            \left(\langle w|\rho|w\rangle+\frac{d}{n}\right)
        }.
    \]
    \Cref{lem:relative-eigenvalue-perturbation}, applied with
    $A=\rho$, $\widehat A=\widehat\rho$, $\delta=d/n$, and $\theta=1/4$,
    gives
    \[
        \frac34\rho-4C_0(C_0+1)\frac{d}{n}\Id
        \preceq\widehat\rho
        \preceq
        \frac54\rho+4C_0(C_0+1)\frac{d}{n}\Id.
    \]
    Choose the universal constant $C$ in the statement so that
    \[
        C\geq8C_0(C_0+1)
        \qquad\text{and}\qquad
        C\geq\frac52.
    \]
    Since $n\geq Cd/B$, we have
    \[
        4C_0(C_0+1)\frac{d}{n}
        \leq\frac{4C_0(C_0+1)}{C}B
        \leq\frac{B}{2}.
    \]

    By the definition of $P$,
    $P\widehat\rho P\succeq BP$.  Compressing the upper bound on
    $\widehat\rho$ to $\Image{P}$ therefore gives
    \[
        BP
        \preceq P\widehat\rho P
        \preceq\frac54\rho_{\mathrm{hi}}+\frac{B}{2}P.
    \]
    Thus $\rho_{\mathrm{hi}}\succeq(2B/5)P\succeq(B/C)P$, which proves the
    first bound.

    Similarly, the definition of $Q$ gives
    $Q\widehat\rho Q\preceq BQ$.  Compressing the lower bound on
    $\widehat\rho$ to $\Image{Q}$ yields
    \[
        \frac34\rho_{\mathrm{lo}}-\frac{B}{2}Q
        \preceq Q\widehat\rho Q
        \preceq BQ.
    \]
    Hence $\rho_{\mathrm{lo}}\preceq2BQ$, proving the second bound.
    Finally, taking the trace in
    $\rho_{\mathrm{hi}}\succeq(2B/5)P$ gives
    \[
        \frac{2B}{5}\rank(P)
        \leq\tr(\rho_{\mathrm{hi}})
        \leq\tr(\rho)=1.
    \]
    Therefore, $\rank(P)\leq5/(2B)\leq C/B$, which proves the third bound.
\end{proof}

\subsection{Bounding the Off-diagonal Terms}
\label{subsec:coherence}

We next establish a bound for the
off-diagonal block using the simultaneous observable tomography bound. 

\begin{lemma}
    \label{lem:coherence-witness}
    Suppose
    \[
        T=
        \begin{pmatrix}
            A_1&X\\
            X^\dagger&A_2
        \end{pmatrix}
        \in\Pos(\mathbb{C}^d),
        \qquad
        \widehat T=
        \begin{pmatrix}
            \widehat A_1&0\\
            0&\widehat A_2
        \end{pmatrix}
        \in\Herm(\mathbb{C}^d),
    \]
    where $A_1,\widehat A_1\in\Herm(\mathbb{C}^r)$ and $A_1\succ0$ for some
    $r\leq d$.  Define
    \[
        W:=
        \begin{pmatrix}
            0&A_1^{-1}X\\
            X^\dagger A_1^{-1}&0
        \end{pmatrix}.
    \]
    Then
    \[
        \tr\!\left(W(T-\widehat T)\right)
        =2\tr(X^\dagger A_1^{-1}X).
    \]
\end{lemma}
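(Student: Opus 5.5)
The identity is a direct block-matrix computation, so the plan is to multiply out and take the trace. First, write $T-\widehat T$ in block form:
\[
T-\widehat T=\begin{pmatrix} A_1-\widehat A_1 & X\\ X^\dagger & A_2-\widehat A_2\end{pmatrix}.
\]
Since $W$ has zero diagonal blocks, only the off-diagonal blocks of $T-\widehat T$ will contribute to the trace. Those blocks are exactly $X$ and $X^\dagger$, because $\widehat T$ is block diagonal. In particular, neither $\widehat A_1$ nor $\widehat A_2$ will appear in the final answer.

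Next, compute the diagonal blocks of the product $W(T-\widehat T)$:
\begin{itemize}
    \item the $(1,1)$ block is $0\cdot(A_1-\widehat A_1)+A_1^{-1}X\cdot X^\dagger=A_1^{-1}XX^\dagger$;
    \item the $(2,2)$ block is $X^\dagger A_1^{-1}\cdot X+0\cdot(A_2-\widehat A_2)=X^\dagger A_1^{-1}X$.
\end{itemize}
Summing the traces of these two blocks gives
\[
\tr\!\left(W(T-\widehat T)\right)=\tr(A_1^{-1}XX^\dagger)+\tr(X^\dagger A_1^{-1}X).
\]

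Finally, apply cyclicity of the trace to the first term: $\tr(A_1^{-1}XX^\dagger)=\tr(X^\dagger A_1^{-1}X)$. The two terms are therefore equal, and the sum is $2\tr(X^\dagger A_1^{-1}X)$, as claimed.

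There is no real obstacle here. The only points to check are that $A_1^{-1}$ exists, which holds since $A_1\succ0$, and that the block dimensions match: $X$ is $r\times(d-r)$, so every product above is well defined.

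For later use, note that $W$ is Hermitian because $A_1^{-1}$ is Hermitian, so $W\in\Herm(\mathbb{C}^d)$. This matters because it will allow $W$ to be plugged into \cref{thm:prelim-relative-tomography} as the observable $O$.
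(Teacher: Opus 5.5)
Your proof is correct and is the paper's argument: the vanishing diagonal blocks of $W$ kill the $\widehat T$ contribution, block multiplication gives $\tr(A_1^{-1}XX^\dagger)+\tr(X^\dagger A_1^{-1}X)$, and cyclicity of the trace finishes. The only cosmetic difference is that the paper writes the trace as $\tr(WT)-\tr(W\widehat T)$ with $\tr(W\widehat T)=0$, instead of multiplying $W$ against $T-\widehat T$ directly.
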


\begin{proof}
    The diagonal blocks of $W$ vanish, so $\tr(W\widehat T)=0$.  Direct block
    multiplication and cyclicity of the trace give
    \[
        \tr(WT)
        =\tr(A_1^{-1}XX^\dagger)
         +\tr(X^\dagger A_1^{-1}X)
        =2\tr(X^\dagger A_1^{-1}X).
    \]
\end{proof}

\begin{lemma}
    \label{lem:coherence-witness-size}
    Suppose
    \[
        T=
        \begin{pmatrix}
            A_1&X\\
            X^\dagger&A_2
        \end{pmatrix}
        \in\Pos(\mathbb{C}^d),
    \]
    where $A_1\in\Herm(\mathbb{C}^r)$ and $A_1\succ0$ for some $r\leq d$.
    Define
    \[
        W:=
        \begin{pmatrix}
            0&A_1^{-1}X\\
            X^\dagger A_1^{-1}&0
        \end{pmatrix}.
    \]
    If $\alpha>0$, $\beta\geq0$, $A_1\succeq\alpha\Id_r$, and
    $\norm{A_2}_\infty\leq\beta$, then
    \begin{align*}
        \rank(W)&\leq2r,\\
        \tr(W^2)&\leq
            \frac{2}{\alpha}\tr(X^\dagger A_1^{-1}X),\\
        \tr(W^2T)&\leq
            \left(1+\frac{\beta}{\alpha}\right)
            \tr(X^\dagger A_1^{-1}X).
    \end{align*}
\end{lemma}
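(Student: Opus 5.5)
Set $Y:=A_1^{-1}X$, so that $W=\begin{pmatrix}0&Y\\ Y^\dagger&0\end{pmatrix}$. I will prove the three bounds in order, each by block multiplication.

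\textbf{Rank bound.} Write $W=\begin{pmatrix}0&Y\\0&0\end{pmatrix}+\begin{pmatrix}0&0\\Y^\dagger&0\end{pmatrix}$. Each summand has rank $\rank(Y)\leq r$, since $Y$ has $r$ rows. Subadditivity of rank then gives $\rank(W)\leq 2r$.

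\textbf{Hilbert--Schmidt bound.} Block multiplication gives $W^2=\diag(YY^\dagger,\,Y^\dagger Y)$, so
\[
\tr(W^2)=2\tr(Y^\dagger Y)=2\tr(X^\dagger A_1^{-2}X).
\]
Because $A_1\succeq\alpha\Id_r$, we have $A_1^{-1}\preceq\alpha^{-1}\Id_r$. Conjugating by $A_1^{-1/2}$ gives $A_1^{-2}\preceq\alpha^{-1}A_1^{-1}$. Conjugating this by $X$ and taking traces yields $\tr(X^\dagger A_1^{-2}X)\leq\alpha^{-1}\tr(X^\dagger A_1^{-1}X)$, which is the second claim.

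\textbf{Weighted bound.} Using the block-diagonal form of $W^2$,
\[
\tr(W^2T)=\tr(YY^\dagger A_1)+\tr(Y^\dagger Y A_2).
\]
For the first term, cyclicity gives $\tr(A_1^{-1}XX^\dagger A_1^{-1}A_1)=\tr(X^\dagger A_1^{-1}X)$. For the second term, $Y^\dagger Y\succeq0$ and $A_2\preceq\beta\Id$ give
\[
\tr(Y^\dagger YA_2)\leq\beta\tr(Y^\dagger Y)=\beta\tr(X^\dagger A_1^{-2}X).
\]
By the estimate from the Hilbert--Schmidt bound, this is at most $(\beta/\alpha)\tr(X^\dagger A_1^{-1}X)$. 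Adding the two terms gives the third claim. Here $A_2\succeq0$ holds because $T\succeq0$, so $\norm{A_2}_\infty\leq\beta$ indeed means $A_2\preceq\beta\Id$.

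All three steps are routine. The only point needing care is the operator inequality $A_1^{-2}\preceq\alpha^{-1}A_1^{-1}$, which holds because $A_1^{-1}$ commutes with itself; it is used in both the second and third bounds.
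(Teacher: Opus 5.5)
Your proof is correct and matches the paper's argument essentially step for step: block multiplication giving $W^2=\diag(YY^\dagger,Y^\dagger Y)$, the congruence $A_1^{-2}=A_1^{-1/2}A_1^{-1}A_1^{-1/2}\preceq\alpha^{-1}A_1^{-1}$ pushed through $X$, and the bound $\tr(Y^\dagger Y A_2)\le\beta\tr(Y^\dagger Y)$. As a small aside, $A_2\preceq\beta\Id$ already follows from $A_2$ being Hermitian with $\norm{A_2}_\infty\le\beta$, so the positivity of $A_2$ that both you and the paper invoke is not actually needed at that step.
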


\begin{proof}
    The rank bound follows from the two off-diagonal blocks of $W$.  Squaring
    $W$ gives
    \[
        W^2=
        \begin{pmatrix}
            A_1^{-1}XX^\dagger A_1^{-1}&0\\
            0&X^\dagger A_1^{-2}X
        \end{pmatrix}.
    \]
    Therefore, by cyclicity of the trace,
    \[
        \tr(W^2)
        =\tr(A_1^{-1}XX^\dagger A_1^{-1})
         +\tr(X^\dagger A_1^{-2}X)
        =2\tr(X^\dagger A_1^{-2}X).
    \]
    Since $A_1\succeq\alpha\Id_r$, we have
    \[
        A_1^{-1}\preceq\frac{1}{\alpha}\Id_r,
        \qquad
        A_1^{-2}
        =A_1^{-1/2}A_1^{-1}A_1^{-1/2}
        \preceq\frac{1}{\alpha}A_1^{-1}.
    \]
    Congruence by $X$ and monotonicity of the trace therefore give
    \[
        \tr(X^\dagger A_1^{-2}X)
        \leq\frac{1}{\alpha}\tr(X^\dagger A_1^{-1}X),
    \]
    which proves the stated bound on $\tr(W^2)$.
    Since $T\succeq0$, its principal block $A_2$ is positive semidefinite.
    Together with $\norm{A_2}_\infty\leq\beta$, this gives
    \[
        0\preceq A_2\preceq\beta\Id_{d-r}.
    \]
    For $M:=X^\dagger A_1^{-2}X\succeq0$, congruence by $M^{1/2}$ gives
    $M^{1/2}A_2M^{1/2}\preceq\beta M$.  Taking the trace and using cyclicity,
    \[
        \tr(X^\dagger A_1^{-2}XA_2)
        =\tr(MA_2)
        \leq\beta\tr(M)
        =\beta\tr(X^\dagger A_1^{-2}X).
    \]
    Using the same block form of $W^2$,
    \begin{align*}
        \tr(W^2T)
        &=\tr(X^\dagger A_1^{-1}X)
          +\tr(X^\dagger A_1^{-2}XA_2)\\
        &\leq \tr(X^\dagger A_1^{-1}X)
          +\beta\tr(X^\dagger A_1^{-2}X)\\
        &\leq\left(1+\frac{\beta}{\alpha}\right)
          \tr(X^\dagger A_1^{-1}X).
    \end{align*}
\end{proof}

\begin{lemma}
    \label{lem:tomography-coherence}
    There is a universal constant $C>0$ for which the following holds.  Let
    $d,n$ be positive integers, $\rho\in\Dens(\mathbb{C}^d)$,
    $\widehat\rho$ be the output of
    $\mathsf{RelativeTomography}(\rho^{\otimes n})$, and $P$ be a projector
    commuting with $\widehat\rho$.  Define $Q:=\Id-P$ and write
    \[
        \rho=
        \begin{pmatrix}
            A_1&X\\
            X^\dagger&A_2
        \end{pmatrix}
    \]
    relative to $\Image{P}\oplus\Image{Q}$.  For $\alpha>0$ and $\beta\geq0$,
    suppose that $A_1\succeq\alpha P$ and
    $\norm{A_2}_\infty\leq\beta$.  Conditioned on successful tomography, we have
    \[
        \tr(X^\dagger A_1^{-1}X)
        \leq
        C\frac{d}{n}\rank(P)
        \left(1+\frac{\beta}{\alpha}
        +\frac{d}{n\alpha}\right)
    \]
\end{lemma}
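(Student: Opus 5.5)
Let $\tau:=\tr(X^\dagger A_1^{-1}X)$. The plan is to plug the witness observable $W$ of \cref{lem:coherence-witness} into the simultaneous tomography bound of \cref{thm:prelim-relative-tomography}, and then solve the resulting self-bounding inequality for $\tau$.

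First, since $P$ commutes with $\widehat\rho$, the matrix $\widehat\rho$ is block diagonal with respect to $\Image{P}\oplus\Image{Q}$, so $\widehat\rho$ plays the role of $\widehat T$ in \cref{lem:coherence-witness}. Set $r:=\rank(P)$. If $r=0$ then $X=0$ and there is nothing to prove, so assume $r\geq1$. The hypothesis $A_1\succeq\alpha P$ means $A_1\succ0$ on $\Image{P}$, so $W$ is well defined and Hermitian. \Cref{lem:coherence-witness} gives
\[
2\tau=\tr\!\left(W(\rho-\widehat\rho)\right).
\]

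Second, successful tomography applied with $O=W$ bounds the left-hand side by
\[
C_0\sqrt{\frac{d}{n}\rank(W)\left(\tr(W^2\rho)+\frac{d}{n}\tr(W^2)\right)}.
\]
We then insert the three estimates of \cref{lem:coherence-witness-size}, with $T=\rho$: $\rank(W)\leq 2r$, $\tr(W^2)\leq 2\tau/\alpha$, and $\tr(W^2\rho)\leq(1+\beta/\alpha)\tau$. This yields
\[
2\tau\leq C_0\sqrt{\frac{2dr}{n}\,\tau\left(1+\frac{\beta}{\alpha}+\frac{2d}{n\alpha}\right)}.
\]

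Third, we square both sides and divide by $\tau$ (the case $\tau=0$ is trivial). This gives
\[
\tau\leq \frac{C_0^2}{2}\,\frac{d}{n}\,r\left(1+\frac{\beta}{\alpha}+\frac{2d}{n\alpha}\right),
\]
which is the claim with $C:=C_0^2$.

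The main point to check is that the witness's $\tr(W^2\rho)$ and $\tr(W^2)$ are themselves proportional to $\tau$; this is exactly what \cref{lem:coherence-witness-size} provides, so the bound closes. The only other care needed is interpreting $A_1^{-1}$ on $\Image{P}$, so that the block-matrix lemmas apply with $r=\rank(P)$.
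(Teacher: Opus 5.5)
Your proposal is correct and follows the paper's proof step for step: the witness $W$ from \cref{lem:coherence-witness}, the simultaneous tomography bound with $O=W$, the three estimates of \cref{lem:coherence-witness-size}, and then squaring and dividing by $\tau$. The only difference is cosmetic. The paper absorbs the factor of $2$ before squaring, whereas you absorb it afterwards via $\tfrac12\bigl(1+\tfrac{\beta}{\alpha}+\tfrac{2d}{n\alpha}\bigr)\leq 1+\tfrac{\beta}{\alpha}+\tfrac{d}{n\alpha}$, and you should state this inequality explicitly to justify $C=C_0^2$.
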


\begin{proof}
    Define
    \[
        W:=
        \begin{pmatrix}
            0&A_1^{-1}X\\
            X^\dagger A_1^{-1}&0
        \end{pmatrix}.
    \]
    Since $P$ commutes with $\widehat\rho$, the matrix $\widehat\rho$ is block
    diagonal relative to $\Image{P}\oplus\Image{Q}$.
    Let $C_0>0$ be the universal constant in~\Cref{thm:prelim-relative-tomography}.  Conditioned on successful
    tomography,~\Cref{lem:coherence-witness} and~\Cref{thm:prelim-relative-tomography} give
    \begin{align*}
        2\tr(X^\dagger A_1^{-1}X)
        &=\abs*{\tr\!\left(W(\widehat\rho-\rho)\right)}\\
        &\leq C_0\sqrt{
            \frac{d}{n}\rank(W)
            \left(\tr(W^2\rho)+\frac{d}{n}\tr(W^2)\right)}\\
        &\leq C_0\sqrt{
            \frac{2d}{n}\rank(P)\tr(X^\dagger A_1^{-1}X)
            \left(1+\frac{\beta}{\alpha}
            +\frac{2d}{n\alpha}\right)}\\
        &\leq 2C_0\sqrt{
            \frac{d}{n}\rank(P)\tr(X^\dagger A_1^{-1}X)
            \left(1+\frac{\beta}{\alpha}
            +\frac{d}{n\alpha}\right)}.
    \end{align*}
    In the second inequality, we used the bounds
    \[
        \rank(W)\leq2\rank(P),\qquad
        \tr(W^2\rho)\leq
        \left(1+\frac{\beta}{\alpha}\right)
        \tr(X^\dagger A_1^{-1}X),\qquad
        \tr(W^2)\leq
        \frac{2}{\alpha}\tr(X^\dagger A_1^{-1}X)
    \]
    from~\Cref{lem:coherence-witness-size}.  The last inequality uses
    $\beta/\alpha\geq0$ and $d/(n\alpha)\geq0$.
    If $\tr(X^\dagger A_1^{-1}X)=0$, the desired bound is immediate.
    Otherwise, squaring and dividing by
    $4\tr(X^\dagger A_1^{-1}X)$ gives
    \[
        \tr(X^\dagger A_1^{-1}X)\leq
        C_0^2\cdot \frac{d}{n}\cdot \rank(P)
        \left(1+\frac{\beta}{\alpha}
        +\frac{d}{n\alpha}\right).
    \]
    Choosing $C\geq C_0^2$ proves the result.
\end{proof}

\begin{corollary}
    \label{cor:learned-coherence}
    There is a universal constant $C>0$ for which the following holds.  Let
    $d,n$ be positive integers, $\rho\in\Dens(\mathbb{C}^d)$, $B>0$, and
    $\widehat\rho$ be the output of
    $\mathsf{RelativeTomography}(\rho^{\otimes n})$.  Suppose
    \[
        n\geq\frac{Cd}{B}.
    \]
    Write $\widehat\rho=\sum_{j=1}^d\widehat\lambda_j
    \ketbra{v_j}{v_j}$ and define
    \[
        P:=\sum_{j:\,\widehat\lambda_j\geq B}\ketbra{v_j}{v_j}.
    \]
    Relative to $\Image{P}\oplus\Image{\Id-P}$, define
    \[
        \rho_{\mathrm{hi}}:=P\rho P,
        \qquad
        \rho_{\mathrm{lo}}:=(\Id-P)\rho(\Id-P),
        \qquad
        X:=P\rho(\Id-P).
    \]
    Conditioned on successful tomography, $\rho_{\mathrm{hi}}$ is positive
    definite on $\Image{P}$.  Writing $\rho_{\mathrm{hi}}^{-1}$ for its
    inverse on this subspace, we have
    \[
        \tr(X^\dagger\rho_{\mathrm{hi}}^{-1}X)
        \leq \frac{Cd}{nB}.
    \]
\end{corollary}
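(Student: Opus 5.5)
The plan is to combine \cref{lem:learned-block-properties} with \cref{lem:tomography-coherence}, taking the universal constant $C$ in the corollary large enough to dominate the constants of both lemmas.

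\emph{Step 1: block properties.} Let $C_1$ be the constant from \cref{lem:learned-block-properties}. If $C\geq C_1$, the hypothesis $n\geq Cd/B$ lets us apply that lemma. Conditioned on successful tomography, it gives:
\begin{itemize}
  \item $\rho_{\mathrm{hi}}\succeq (B/C_1)P$, so $\rho_{\mathrm{hi}}$ is positive definite on $\Image{P}$ (this is trivially true if $P=0$);
  \item $\norm{\rho_{\mathrm{lo}}}_\infty\leq 2B$;
  \item $\rank(P)\leq C_1/B$.
\end{itemize}
The first item already proves the corollary's first assertion.

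\emph{Step 2: apply the coherence bound.} Since $P$ is a spectral projector of $\widehat\rho$, it commutes with $\widehat\rho$, so \cref{lem:tomography-coherence} applies with $A_1=\rho_{\mathrm{hi}}$, $A_2=\rho_{\mathrm{lo}}$, $\alpha=B/C_1$ and $\beta=2B$. Let $C_2$ be its constant. This yields
\[
\tr(X^\dagger\rho_{\mathrm{hi}}^{-1}X)\leq C_2\frac{d}{n}\rank(P)\left(1+2C_1+\frac{C_1 d}{nB}\right).
\]

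\emph{Step 3: simplify.} Since $n\geq Cd/B\geq d/B$, we have $d/(nB)\leq 1$. So the bracket is at most $1+3C_1$, and $\rank(P)\leq C_1/B$ then gives the bound $C_2C_1(1+3C_1)\,d/(nB)$. Taking $C\geq\max\{C_1,\;C_2C_1(1+3C_1)\}$ completes the proof.

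The only step needing care is the bookkeeping of constants. In particular, $C$ must be chosen once and serve both as the sample-size threshold and as the final multiplier. The term $d/(n\alpha)$ could blow up in general, but here it is controlled precisely because the threshold $n\gtrsim d/B$ matches the scale $\alpha\sim B$. In the degenerate case $P=0$, both sides are trivially zero.
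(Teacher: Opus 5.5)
Your proof is correct and follows the paper's argument step for step: \cref{lem:learned-block-properties} supplies $\alpha=B/C_1$, $\beta=2B$ and $\rank(P)\leq C_1/B$; these feed into \cref{lem:tomography-coherence}; and $n\geq Cd/B$ keeps the $d/(n\alpha)$ term bounded. The paper bounds $C_1d/(nB)\leq 1$ directly via $C\geq C_1$, whereas your $d/(nB)\leq 1$ tacitly needs $C\geq 1$, which is harmless since that lemma's constant can be taken at least $1$; one further cosmetic slip is that when $P=0$ only the left-hand side vanishes, not both sides.
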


\begin{proof}
    Let $C_0\geq1$ and $C_1>0$ be the universal constants in~\Cref{lem:learned-block-properties} and
    \Cref{lem:tomography-coherence}, respectively.  Choose the
    constant $C$ in the statement so that
    \[
        C\geq C_0,
        \qquad
        C\geq2C_1C_0(C_0+1).
    \]
    Conditioned on successful tomography,~\Cref{lem:learned-block-properties} gives
    \[
        \rho_{\mathrm{hi}}\succeq\frac{B}{C_0}P,
        \qquad
        \norm{\rho_{\mathrm{lo}}}_\infty\leq2B,
        \qquad
        \rank(P)\leq\frac{C_0}{B}.
    \]
    The sample-size assumption and $C\geq C_0$ give
    $C_0d/(nB)\leq1$.  On $\Image{P}$, the first displayed bound is
    $\rho_{\mathrm{hi}}\succeq(B/C_0)P$.  Hence~\Cref{lem:tomography-coherence} applies to the block
    decomposition on $\Image{P}\oplus\Image{\Id-P}$ and gives
    \begin{align*}
        \tr(X^\dagger\rho_{\mathrm{hi}}^{-1}X)
        &\leq
       \frac{ C_1C_0d}{Bn}
        \left(1+2C_0+\frac{C_0d}{nB}\right)\\
        &\leq
        2C_1C_0(C_0+1)\frac{d}{nB}\\
        &\leq \frac{Cd}{nB}.
    \end{align*}
\end{proof}

\subsection{Pinching Inequality}
\label{subsec:entropy-loss-pinching}

Here, we establish the pinching inequality.

\begin{lemma}[Binary pinching]
    \label{lem:binary-pinching-entropy-bound}
    For a nonzero $T\in\Pos(\calH)$ and a projector $P$, define
    $q:=1-\tr(PT)/\tr T$.  Then, $q\in[0,1]$ and
    \[
        0\leq \S(\Phi_P(T))-\S(T)\leq (\tr T)\H((q,1-q)).
    \]
\end{lemma}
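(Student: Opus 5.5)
The lower bound $0\leq \S(\Phi_P(T))-\S(T)$ is exactly \cref{prop:prelim-pinching-entropy-gain}. It also gives $\S(\Phi_P(T))-\S(T)=\qKL{T}{\Phi_P(T)}$, so only the upper bound remains. That $q\in[0,1]$ is immediate from $0\leq\tr(PT)\leq\tr T$, which holds because $P$ and $\Id-P$ are projectors and $T\succeq0$.

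For the upper bound I first normalize. Setting $\rho:=T/\tr T$, we have $\S(cA)=c\S(A)-c\tr(A)\log c$ for $c>0$, and $\Phi_P$ is linear and trace preserving. Hence the $\log c$ terms cancel and $\S(\Phi_P(T))-\S(T)=(\tr T)\bigl(\S(\Phi_P(\rho))-\S(\rho)\bigr)$. It therefore suffices to prove $\S(\Phi_P(\rho))-\S(\rho)\leq \H((q,1-q))$ for a density operator $\rho$ with $\tr(P\rho)=1-q$.

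To do this I will use a classical register. Let $Q=\Id-P$ and define the isometry $V:\calH\to\calH\otimes\mathbb{C}^2$ by $V=P\otimes\ket{0}+Q\otimes\ket{1}$. Put $\omega:=V\rho V^\dagger$, a state on $\calH\otimes\mathbb{C}^2$ with $\S(\omega)=\S(\rho)$ because $V$ is an isometry. Its reduced state on $\calH$ is $\tr_2\omega=P\rho P+Q\rho Q=\Phi_P(\rho)$. Its reduced state on the qubit is $\begin{pmatrix}\tr(P\rho)&\tr(P\rho Q)\\ \tr(Q\rho P)&\tr(Q\rho)\end{pmatrix}$, and the off-diagonal entries vanish since $\tr(P\rho Q)=\tr(QP\rho)=0$. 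So this reduced state is $\diag(1-q,q)$, with entropy $\H((q,1-q))$.

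The Araki--Lieb inequality (\cref{thm:prelim-araki-lieb}) then gives $\S(\tr_2\omega)-\S(\tr_1\omega)\leq\S(\omega)$, that is,
\[
\S(\Phi_P(\rho))\leq \S(\rho)+\H((q,1-q)),
\]
which is the claim. The only delicate steps are checking that the qubit marginal is diagonal and tracking the scaling of $\S$ for unnormalized $T$; both are short computations. An alternative route would bound $\qKL{\rho}{\Phi_P(\rho)}$ directly via joint convexity, but the Araki--Lieb argument is cleaner.
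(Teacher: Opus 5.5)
Your proposal is correct and follows essentially the same route as the paper: normalize to $\rho=T/\tr T$, embed via the isometry $V=P\otimes\ket{0}+(\Id-P)\otimes\ket{1}$, identify the two marginals as $\Phi_P(\rho)$ and $\diag(1-q,q)$, apply Araki--Lieb, and rescale. You also spell out details the paper leaves implicit, namely the scaling identity, the vanishing of the off-diagonal qubit entries, and why $q\in[0,1]$.
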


\begin{proof}
    The lower bound follows from~\Cref{prop:prelim-pinching-entropy-gain}.  For the upper bound,
    normalize $\rho:=T/\tr T$ and define the isometry
    \[
        V:=P\otimes\ket{0}+(\Id-P)\otimes\ket{1}
        \colon\calH\longrightarrow\calH\otimes\mathbb{C}^2.
    \]
    Expanding in the basis $\{\ket{0},\ket{1}\}$ of the second register,
    \begin{align*}
        V\rho V^\dagger
        ={}&P\rho P\otimes\ketbra{0}{0}
        +P\rho(\Id-P)\otimes\ketbra{0}{1}\\
        &+(\Id-P)\rho P\otimes\ketbra{1}{0}
        +(\Id-P)\rho(\Id-P)\otimes\ketbra{1}{1}.
    \end{align*}
    The reduced states of $V\rho V^\dagger$ are
    \[
        \tr_{\mathbb{C}^2}(V\rho V^\dagger)=\Phi_P(\rho),
        \qquad
        \tr_{\calH}(V\rho V^\dagger)
        =(1-q)\ketbra{0}{0}+q\ketbra{1}{1}.
    \]
    Since $V$ is an isometry, $\S(V\rho V^\dagger)=\S(\rho)$.  The Araki--Lieb
    inequality in~\Cref{thm:prelim-araki-lieb} therefore gives
    \[
        \abs*{
            \S(\Phi_P(\rho))
            -\S\!\left((1-q)\ketbra{0}{0}+q\ketbra{1}{1}\right)
        }
        \leq \S(V\rho V^\dagger)=\S(\rho).
    \]
    In particular, rearranging the corresponding one-sided inequality gives
    \[
        \S(\Phi_P(\rho))-\S(\rho)
        \leq \S\!\left((1-q)\ketbra{0}{0}+q\ketbra{1}{1}\right)
        =\H((q,1-q)).
    \]
    Finally, the entropy scaling identity and trace preservation of $\Phi_P$
    give
    \[
        \S(\Phi_P(T))-\S(T)
        =(\tr T)\bigl(\S(\Phi_P(\rho))-\S(\rho)\bigr)
        \leq(\tr T)\H((q,1-q)).\qedhere
    \]
\end{proof}

\begin{lemma}[Pinching inequality]
    \label{lem:pinching-inequality}
    Let $\calH=\calH_1\oplus\calH_2$, and $P$ the orthogonal projector onto
    $\calH_1$.  Suppose
    \[
        T=
        \begin{pmatrix}
            A_1&X\\
            X^\dagger&A_2
        \end{pmatrix}
        \in\Pos(\calH),
        \qquad \tr T\leq1,
    \]
    where $A_1$ is positive definite on $\calH_1$.  Define
    $t:=\tr(X^\dagger A_1^{-1}X)$.  Then $t\in[0,1]$ and
    \[
        0\leq \S(\Phi_P(T))-\S(T)
        \leq t\log\frac{e}{t},
    \]
    with the right-hand side interpreted continuously at $t=0$.
\end{lemma}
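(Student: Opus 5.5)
We prove the pinching inequality by writing the entropy gain as a relative entropy, splitting $T$ into a rank-deficient block part and a positive block-diagonal remainder, and controlling each piece separately.

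\emph{Setup.} The lower bound is immediate from \cref{prop:prelim-pinching-entropy-gain}. To see $t\in[0,1]$, note that $T\succeq0$ and the Schur-complement criterion give $A_2\succeq X^\dagger A_1^{-1}X$, so $t\leq\tr A_2\leq\tr T\leq1$.

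By \cref{prop:prelim-pinching-entropy-gain}, the quantity to bound is $\S(\Phi_P(T))-\S(T)=\qKL{T}{\Phi_P(T)}$. Decompose
\[
T=T_0+\begin{pmatrix}0&0\\0&S\end{pmatrix},\qquad
T_0:=\begin{pmatrix}A_1&X\\X^\dagger&X^\dagger A_1^{-1}X\end{pmatrix},\qquad
S:=A_2-X^\dagger A_1^{-1}X\succeq0.
\]
The pinching acts as
\[
\Phi_P(T)=\Phi_P(T_0)+\begin{pmatrix}0&0\\0&S\end{pmatrix}.
\]
Let $\tau_0:=\tr T_0$ and $\tau_1:=\tr S$, and apply joint convexity (\cref{prop:prelim-relative-entropy-joint-convexity}) to the normalized mixture. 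This is valid because $\qKL{\cdot}{\cdot}$ is positively homogeneous of degree one, and the support conditions hold by the argument in the proof of \cref{prop:prelim-pinching-entropy-gain}. We obtain
\[
\qKL{T}{\Phi_P(T)}\leq\qKL{T_0}{\Phi_P(T_0)}+\qKL{S}{S}=\qKL{T_0}{\Phi_P(T_0)}.
\]
The problem therefore reduces to the ``pure'' coherent part $T_0$.

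\emph{Analysis of $T_0$.} Write $T_0=\begin{pmatrix}A_1^{1/2}\\ X^\dagger A_1^{-1/2}\end{pmatrix}\begin{pmatrix}A_1^{1/2}& A_1^{-1/2}X\end{pmatrix}=MM^\dagger$. Then $\S(T_0)=\S(M^\dagger M)$, and
\[
M^\dagger M=A_1+A_1^{-1/2}XX^\dagger A_1^{-1/2}=:A_1+Y,\qquad \tr Y=t.
\]
On the other side, $\S(\Phi_P(T_0))=\S(A_1)+\S(X^\dagger A_1^{-1}X)$, and $X^\dagger A_1^{-1}X$ has the same nonzero spectrum as $Y$. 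Hence
\[
\qKL{T_0}{\Phi_P(T_0)}=\S(A_1)+\S(Y)-\S(A_1+Y).
\]
Since $t\leq1$ and $\S(A_1+Y)$ is taken on positive operators, the goal becomes
\[
\S(A_1)+\S(Y)-\S(A_1+Y)\leq t\log(e/t).
\]

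\emph{Main obstacle.} The difficulty is to bound this gap using only $\tr Y=t$, without dimension factors. Two estimates are available.
\begin{itemize}
\item By operator concavity of $x\mapsto-x\log x$ applied pointwise, or more precisely by superadditivity-type arguments, $\S(A_1+Y)\geq\S(A_1)+\tr\rbra{Y(-\log(A_1+Y))}$. This would give the bound $\S(Y)+\tr(Y\log(A_1+Y))$.
\item Since $A_1+Y\succeq Y$ and $\log$ is operator monotone, $\tr(Y\log(A_1+Y))\geq\tr(Y\log Y)=-\S(Y)$, which has the wrong direction for direct use.
\end{itemize}
Instead, I will use the identity
\[
\S(A_1)+\S(Y)-\S(A_1+Y)=\qKL{A_1}{A_1+Y}+\qKL{Y}{A_1+Y}
\]
up to the linear trace terms: the trace terms give $\tr(A_1+Y)-\tr(A_1)-\tr(Y)=0$ contributions, so in fact the identity holds with the $-\tr+\tr$ corrections summing appropriately.

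The two terms are then bounded as follows.
\begin{itemize}
\item For the first term, $\log(A_1+Y)\succeq\log A_1$, so $\tr A_1(\log A_1-\log(A_1+Y))\leq0$, and its trace correction is $+t$. Hence this term is at most $t$.
\item For the second term, $\tr Y\log Y-\tr Y\log(A_1+Y)$. I will use $\log(A_1+Y)\succeq\log(\tr(A_1+Y)\cdot\text{?})$, which is not directly available. Instead I will compare with $\tfrac{1}{2}$-scaled operators, or use the Hiai--Petz bound in \cref{thm:prelim-hiai-petz} to replace $\log(A_1+Y)$ by a Belavkin--Staszewski form. The target is $\tr Y\log Y-\tr Y\log(A_1+Y)\leq t\log(1/t)$ plus a trace correction, using $\tr(A_1+Y)\leq1$.
\end{itemize}

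Establishing the bound $t\log(1/t)$ on this second term is the crux. A fallback is a direct reduction to \cref{lem:binary-pinching-entropy-bound}: since $q\leq t$ fails in general, this route would need an additional rescaling argument.
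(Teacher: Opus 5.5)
Your reduction to $T_0$ (the Schur-complement split, then joint convexity plus homogeneity to discard the block-diagonal remainder) matches the paper's first step and is fine. So is the spectral rewriting $\qKL{T_0}{\Phi_P(T_0)}=\S(A_1)+\S(Y)-\S(A_1+Y)$. The gap is that the proposal stops there: bounding this quantity by $t\log(e/t)$, which is the entire content of the lemma, is left as an unresolved ``crux.'' In addition, the identity you intended to use has the wrong sign. With $Z:=A_1+Y$, expanding the definition gives $\qKL{A_1}{Z}+\qKL{Y}{Z}=\S(Z)-\S(A_1)-\S(Y)+\tr Z$, so in fact
\[
\S(A_1)+\S(Y)-\S(Z)=\tr\bigl(A_1(\log Z-\log A_1)\bigr)+\tr\bigl(Y(\log Z-\log Y)\bigr).
\]
Both terms are \emph{nonnegative}, because $A_1,Y\preceq Z$ and $\log$ is operator monotone. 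Your argument for the first term via $\log Z\succeq\log A_1$ therefore gives a lower bound where an upper bound is needed. Likewise, your preliminary inequality $\S(A_1+Y)\geq\S(A_1)-\tr(Y\log(A_1+Y))$ actually holds in the opposite direction. Finally, you dismissed the fallback via \cref{lem:binary-pinching-entropy-bound} for the wrong reason.

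That fallback is exactly the paper's proof. The binary pinching bound carries the prefactor $\tr T_0$, and for $T_0$ one has $q=t/\tr T_0$. Using $\H((u,1-u))\leq u\log(e/u)$ then gives $(\tr T_0)\,q\log(e/q)=t\log(e\tr T_0/t)\leq t\log(e/t)$. The ``rescaling'' you thought was missing is that prefactor, so $q\leq t$ is never needed.

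Alternatively, your own route closes with two applications of Klein's inequality. First, $\qKL{A_1}{Z}\geq0$ gives $\tr(A_1(\log Z-\log A_1))\leq\tr Z-\tr A_1=t$. Second, for $t>0$ one has $\tr(Y(\log Z-\log Y))=t\log(\tr Z/t)-t\,\qKL{Y/t}{Z/\tr Z}$; since $\supp(Y)\subseteq\supp(Z)$, Klein's inequality for these normalized states gives $\tr(Y(\log Z-\log Y))\leq t\log(\tr Z/t)$. Because $\tr Z=\tr(M^\dagger M)=\tr T_0\leq1$, the sum is at most $t\log(e\tr T_0/t)\leq t\log(e/t)$. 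This is the same bound the paper obtains, reached through Klein's inequality and the $MM^\dagger$ versus $M^\dagger M$ spectral identity rather than Araki--Lieb.
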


\begin{proof}
    The Schur-complement criterion gives
    \[
        0\preceq X^\dagger A_1^{-1}X\preceq A_2.
    \]
    Decompose
    \[
        T=T_0+T_1,
        \qquad
        T_0:=
        \begin{pmatrix}
            A_1&X\\
            X^\dagger&X^\dagger A_1^{-1}X
        \end{pmatrix},
        \qquad
        T_1:=0\oplus(A_2-X^\dagger A_1^{-1}X).
    \]
    The Schur complement of $A_1$ in $T_0$ is zero, so $T_0\succeq0$. In addition $T_1\succeq 0$.  Moreover, $T_1$ is block
    diagonal and hence $\Phi_P(T_1)=T_1$.

    Using
    $\Phi_P(T_1)=T_1$ and~\Cref{prop:prelim-pinching-entropy-gain},
    \begin{align*}
        0\leq \S(\Phi_P(T))-\S(T)
        &=\qKL{T}{\Phi_P(T)}\\
        &\leq \qKL{T_0}{\Phi_P(T_0)}+\qKL{T_1}{T_1}\\
        &=\qKL{T_0}{\Phi_P(T_0)}\\
        &=\S(\Phi_P(T_0))-\S(T_0).
    \end{align*}

    The lower-right block of $T_0$ has trace $t$.  Applying~\Cref{lem:binary-pinching-entropy-bound} and
    $\H((u,1-u))\leq u\log(e/u)$ gives
    \[
        \S(\Phi_P(T_0))-\S(T_0)
        \leq t\log\!\left(\frac{e\tr T_0}{t}\right).
    \]
    Since $T_0,T_1\succeq0$ and $T=T_0+T_1$,
    \[
        0\leq t\leq\tr T_0\leq\tr T\leq1.
    \]
    Therefore,
    \[
        t\log\!\left(\frac{e\tr T_0}{t}\right)
        \leq t\log\!\left(\frac{e\tr T}{t}\right)
        \leq t\log\frac{e}{t}.\qedhere
    \]
\end{proof}

\begin{corollary}
    \label{cor:learned-decomposition-and-pinching}
    Run the tomography stage of Algorithm~\ref{alg:entropy-estimator}, let
    $\widehat\rho$ be its output, and write
    $\widehat\rho=\sum_{j=1}^d\widehat\lambda_j
    \ketbra{v_j}{v_j}$.  Define
    \[
        P:=\sum_{j:\,\widehat\lambda_j\geq B}\ketbra{v_j}{v_j}.
    \]
    Define the corresponding blocks by
    \[
        \rho_{\mathrm{hi}}:=P\rho P,
        \qquad
        \rho_{\mathrm{lo}}:=(\Id-P)\rho(\Id-P),
        \qquad
        X:=P\rho(\Id-P).
    \]
    Conditioned on successful tomography, $\rho_{\mathrm{hi}}$ is positive
    definite on $\Image{P}$.  Hence its inverse on this subspace is well
    defined.  Moreover, the following hold:
    \begin{itemize}
        \item The learned blocks satisfy
        \[
            \rho_{\mathrm{hi}}\succeq\frac{B}{C}P,
            \qquad
            \norm{\rho_{\mathrm{lo}}}_\infty\leq2B,
            \qquad
            \rank(P)\leq\frac{C}{B}.
        \]
        \item The coherence satisfies
        \[
            \tr(X^\dagger\rho_{\mathrm{hi}}^{-1}X)
            \leq\frac{Cd}{n_{\mathrm{tom}}B}
            \leq\frac{C\varepsilon}{\log(e/\varepsilon)}.
        \]
        \item The entropy loss under pinching satisfies
        \[
            0\leq \S(\Phi_P(\rho))-\S(\rho)\leq C\varepsilon.
        \]
    \end{itemize}
    Here each occurrence of $C>0$ denotes a universal constant.
\end{corollary}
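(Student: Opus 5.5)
The plan is to assemble the three bullets from results already proved. We work with $n=n_{\mathrm{tom}}$ and use the parameter choices of Algorithm~\ref{alg:entropy-estimator}: $K=\Theta(\log d/\log\log d)$, $B=\Theta(\varepsilon K^2/d)$, and $n_{\mathrm{tom}}=\Theta(d^2\log(1/\varepsilon)/(\varepsilon^2K^2))$.

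\textbf{Step 1: the sample-size hypothesis.} The common hypothesis of \cref{lem:learned-block-properties} and \cref{cor:learned-coherence} is $n\geq Cd/B$. We compute
\[
\frac{d}{n_{\mathrm{tom}}B}=\Theta\!\left(\frac{d\cdot \varepsilon^2K^2}{d^2\log(1/\varepsilon)}\cdot\frac{d}{\varepsilon K^2}\right)=\Theta\!\left(\frac{\varepsilon}{\log(1/\varepsilon)}\right).
\]
Since $\varepsilon\leq 1/10$, we have $\log(e/\varepsilon)\asymp\log(1/\varepsilon)$. Choosing the hidden constant in $n_{\mathrm{tom}}$ large enough relative to the one in $B$ therefore makes $d/(n_{\mathrm{tom}}B)$ at most a small constant times $\varepsilon/\log(e/\varepsilon)$. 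In particular $n_{\mathrm{tom}}\geq Cd/B$ holds. With the hypothesis verified, the first bullet (positive definiteness of $\rho_{\mathrm{hi}}$ on $\Image{P}$, together with the three block bounds) is exactly \cref{lem:learned-block-properties}. The second bullet is \cref{cor:learned-coherence} combined with the displayed computation.

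\textbf{Step 2: the pinching bound.} We apply \cref{lem:pinching-inequality} to $T=\rho$, with $A_1=\rho_{\mathrm{hi}}$ positive definite on $\Image{P}$ and $\tr T=1$. This gives $0\leq \S(\Phi_P(\rho))-\S(\rho)\leq t\log(e/t)$, where $t=\tr(X^\dagger\rho_{\mathrm{hi}}^{-1}X)\leq c\,\varepsilon/\log(e/\varepsilon)=:t_0$. The function $u\mapsto u\log(e/u)$ is increasing on $[0,1]$, since its derivative is $\log(1/u)\geq0$. Provided $t_0\leq1$, we therefore get $t\log(e/t)\leq t_0\log(e/t_0)$. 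Expanding,
\[
\log(e/t_0)=\log(e/\varepsilon)+\log\log(e/\varepsilon)-\log c\leq C'\log(e/\varepsilon),
\]
so $t_0\log(e/t_0)\leq cC'\varepsilon$, which is the third bullet. The edge case $P=0$ is trivial: then $\Phi_P(\rho)=\rho$, and the lemma is not needed.

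\textbf{Main obstacle.} The only delicate step is this last one. It requires $t_0\leq1$, so that monotonicity applies, and it requires the $\log(e/t_0)$ factor to be absorbed without losing more than a constant. Both follow from $\varepsilon\leq1/10$ and the $\log(1/\varepsilon)$ factor built into $n_{\mathrm{tom}}$; this factor is precisely why $n_{\mathrm{tom}}$ carries $\log(1/\varepsilon)$. The remaining care is bookkeeping: the universal constants must be chosen consistently, with the constant in $n_{\mathrm{tom}}$ chosen after those in \cref{lem:learned-block-properties} and \cref{cor:learned-coherence}.
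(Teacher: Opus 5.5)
Your proposal is correct and follows essentially the same route as the paper. The first bullet comes from \cref{lem:learned-block-properties}, the second from \cref{cor:learned-coherence} plus the parameter computation $d/(n_{\mathrm{tom}}B)=\Theta(\varepsilon/\log(1/\varepsilon))$, and the third from \cref{lem:pinching-inequality} with $\varepsilon\leq 1/10$. You also make explicit what the paper leaves implicit: the check that $n_{\mathrm{tom}}\geq Cd/B$, the monotonicity of $u\mapsto u\log(e/u)$ on $[0,1]$ used to pass from $t$ to its upper bound, and the trivial case $P=0$.
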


\begin{proof}
    The first item follows from~\Cref{lem:learned-block-properties}.  For the second item,
    Corollary~\ref{cor:learned-coherence} gives the first bound,
    while the choices of $B$ and $n_{\mathrm{tom}}$ in
    Algorithm~\ref{alg:entropy-estimator} give
    \[
        \frac{d}{n_{\mathrm{tom}}B}
        \leq\frac{C\varepsilon}{\log(e/\varepsilon)}.
    \]
    The final item follows from Lemma~\ref{lem:pinching-inequality},
    the second item, and $0<\varepsilon\leq1/10$.
\end{proof}

\section{Estimation for Large Eigenvalues}
\label{sec:high-block-estimation}

We now analyze the one-step estimator for $\S(\rho_{\mathrm{hi}})$ in
Algorithm~\ref{alg:entropy-estimator}.  The tomography implies that the plug-in
matrix is relatively accurate on the subspace $P$ corresponding to large eigenvalues of $\rho$.  The linear
correction term removes the bias, leaving a quadratic
relative-entropy error bound.

Throughout this section, logarithms and inverses of operators supported on
$P$ are taken on $\Image{P}$.  Such operators are extended by zero on
$\Image{Q}$ when they are measured on the full state $\rho$.

\subsection{A Quadratic Error Bound}
\label{subsec:high-block-quadratic-remainder}

We begin with two deterministic facts.  The first bounds generalized relative
entropy by a quadratic form, and the second identifies this divergence as the
exact remainder after the linear correction.

\begin{lemma}
    \label{lem:high-block-quadratic-relative-entropy}
    Let $A,T\succ0$ act on the same finite-dimensional space.  Then
    \[
        0\leq \qKL{A}{T}
        \leq\tr\!\left((A-T)T^{-1}(A-T)\right).
    \]
\end{lemma}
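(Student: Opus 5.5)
The lower bound is Klein's inequality as recalled after \cref{def:prelim-relative-entropy}. The upper bound will go through the Belavkin--Staszewski comparison in \cref{thm:prelim-hiai-petz} and then the operator inequality $\log Z\preceq Z-\Id$ of \cref{prop:prelim-log-upper-bound}.

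First, apply \cref{thm:prelim-hiai-petz} with $R=A$:
\[
\qKL{A}{T}\le \tr\rbra*{A\log\rbra*{A^{1/2}T^{-1}A^{1/2}}}-\tr A+\tr T .
\]
Set $Z:=A^{1/2}T^{-1}A^{1/2}\succ0$. By \cref{prop:prelim-log-upper-bound}, $\log Z\preceq Z-\Id$. Since $A\succeq0$, conjugating by $A^{1/2}$ preserves the L\"owner order, so taking traces gives
\[
\tr(A\log Z)=\tr\rbra*{A^{1/2}(\log Z)A^{1/2}}\le \tr\rbra*{A^{1/2}(Z-\Id)A^{1/2}}=\tr(AT^{-1}A)-\tr A .
\]

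It remains to check an algebraic identity:
\[
\tr(AT^{-1}A)-2\tr A+\tr T=\tr\rbra*{(A-T)T^{-1}(A-T)}.
\]
Expanding the right-hand side gives $\tr(AT^{-1}A)-\tr(A)-\tr(A)+\tr(T)$, using $T^{-1}T=\Id$ and cyclicity of the trace. Combining the three displays yields the claim.

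The only delicate step is the trace monotonicity under congruence by $A^{1/2}$, and it is immediate here. So I expect no real obstacle; the main point is to pick the Hiai--Petz bound rather than attempting a direct second-order Taylor expansion of $\log$.
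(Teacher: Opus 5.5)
Your proposal is correct and follows the paper's proof essentially verbatim: Klein's inequality for the lower bound, then Hiai--Petz (\cref{thm:prelim-hiai-petz}) followed by $\log Z\preceq Z-\Id$ under congruence by $A^{1/2}$ to reach $\tr(AT^{-1}A)-2\tr A+\tr T$, which is the expanded quadratic form. You also spell out the trace-monotonicity and expansion steps that the paper leaves implicit.
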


\begin{proof}
    Nonnegativity is Klein's inequality.  By~\cref{thm:prelim-hiai-petz} and~\cref{prop:prelim-log-upper-bound},
    \begin{align*}
        \qKL{A}{T}
        &\leq
        \tr\!\left(
            A\log(A^{1/2}T^{-1}A^{1/2})
        \right)-\tr A+\tr T\\
        &\leq\tr(A^2T^{-1})-2\tr A+\tr T.
    \end{align*}
    Expanding the quadratic form on the right-hand side of the claimed bound
    gives the same expression.
\end{proof}

\begin{proposition}
    \label{prop:high-block-one-step-identity}
    Let $A,T\succ0$ act on the same finite-dimensional space.  Then
    \begin{equation}
        \S(T)-\S(A)
        -\tr\!\left((A-T)(\log T+\Id)\right)
        =\qKL{A}{T}.
        \label{eq:high-block-one-step-identity}
    \end{equation}
\end{proposition}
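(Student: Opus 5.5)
This is a direct algebraic identity, so the plan is to expand both sides using the definitions and watch every term cancel. Since $A,T\succ0$, we have $\supp(A)\subseteq\supp(T)$. Hence \cref{def:prelim-relative-entropy} gives the finite expression
\[
\qKL{A}{T}=\tr(A\log A)-\tr(A\log T)-\tr A+\tr T.
\]
Because $A$ and $T$ are positive definite, all logarithms are defined on the whole space. No support conventions or continuity extensions are needed.

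Next I expand the left-hand side of \eqref{eq:high-block-one-step-identity}. By the definition of $\S$,
\[
\S(T)-\S(A)=-\tr(T\log T)+\tr(A\log A).
\]
By linearity of the trace, the correction term is
\[
\tr\!\left((A-T)(\log T+\Id)\right)=\tr(A\log T)-\tr(T\log T)+\tr A-\tr T.
\]
Subtracting it, the $\tr(T\log T)$ terms cancel. What remains is
\[
\tr(A\log A)-\tr(A\log T)-\tr A+\tr T,
\]
which is exactly the expression for $\qKL{A}{T}$ above.

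There is no real obstacle; the only care needed is the sign bookkeeping in the subtraction. It is worth noting the interpretation that this identity makes the rest of the section work. The left-hand side is the error of the first-order (one-step) correction of the plug-in value $\S(T)$ toward $\S(A)$. This is because $-(\log T+\Id)$ is the gradient of $\S$ at $T$. The remainder is exactly the Umegaki divergence, which \cref{lem:high-block-quadratic-relative-entropy} then bounds quadratically.

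In the application, $T=\widetilde\rho_{\mathrm{hi}}$ and $A=\rho_{\mathrm{hi}}$ restricted to $\Image{P}$. Then $G_{\mathrm{hi}}=-\log T-P$, and the estimator's correction $\widehat\mu_{\mathrm{hi}}-\tr(TG_{\mathrm{hi}})$ is an unbiased estimate of $-\tr((A-T)(\log T+\Id))$. The identity shows that the bias left over is precisely $\qKL{A}{T}$.
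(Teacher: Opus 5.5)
Your proposal is correct and follows the paper's proof: both expand $\S(T)-\S(A)$ and the trace-linear correction term, cancel the $\tr(T\log T)$ terms, and recognize $\tr\rbra{A(\log A-\log T)}-\tr A+\tr T$ as $\qKL{A}{T}$, using that $A,T\succ0$ so the support condition in \cref{def:prelim-relative-entropy} holds. Your remarks on the gradient interpretation and on the bias of $\widehat{\S}_{\mathrm{hi}}$ being exactly $\qKL{A}{T}$ are also accurate.
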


\begin{proof}
    Expanding the left-hand side of
    \eqref{eq:high-block-one-step-identity} gives
    \begin{align*}
        &-\tr(T\log T)
        -\tr\!\left((A-T)\log T\right)
        -\tr(A-T)+\tr(A\log A)\\
        &\qquad=
        \tr\!\left(A(\log A-\log T)\right)-\tr \rbra{A}+\tr\rbra{T},
    \end{align*}
    which is $\qKL{A}{T}$ by
    Definition~\ref{def:prelim-relative-entropy}.
\end{proof}

The next lemma turns multiplicative L\"owner control into the needed
second-order bound.

\begin{lemma}
    \label{lem:high-block-second-order-remainder}
    Let $A,T\succ0$ act on the same finite-dimensional space with $\tr A\le 1$.
   Suppose   $0\leq\delta\leq1/4$ and
    \[
        (1-\delta)A\preceq T\preceq(1+\delta)A,
    \]
    then
    \begin{equation}
        0\leq \qKL{A}{T}
        \leq
        (1+\delta)\left(\frac{\delta}{1-\delta}\right)^2
        \leq3\delta^2.
        \label{eq:high-block-second-order-remainder}
    \end{equation}
\end{lemma}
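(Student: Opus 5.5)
Nonnegativity is Klein's inequality, so only the upper bound needs work. We start from \cref{lem:high-block-quadratic-relative-entropy}, which gives
\[
\qKL{A}{T}\leq\tr\!\left((A-T)T^{-1}(A-T)\right),
\]
and then bound the right-hand side using the two-sided L\"owner control. Write $E:=T-A$. The hypothesis $(1-\delta)A\preceq T\preceq(1+\delta)A$ means $-\delta A\preceq E\preceq\delta A$. We also have $T\succeq(1-\delta)A$, and operator monotonicity of inversion gives $T^{-1}\preceq(1-\delta)^{-1}A^{-1}$. Therefore
\[
\tr(ET^{-1}E)\leq\frac{1}{1-\delta}\tr(EA^{-1}E).
\]

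Next we pass to $A$-relative coordinates. Set $F:=A^{-1/2}EA^{-1/2}$, which is Hermitian and satisfies $-\delta\Id\preceq F\preceq\delta\Id$, so $F^2\preceq\delta^2\Id$. Then
\[
\tr(EA^{-1}E)=\tr(A^{1/2}FA^{1/2}A^{-1}A^{1/2}FA^{1/2})=\tr(A^{1/2}F^2A^{1/2})\leq\delta^2\tr A\leq\delta^2.
\]
This yields $\qKL{A}{T}\leq\delta^2/(1-\delta)$, which is already at most $3\delta^2$ for $\delta\leq1/4$.

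To reach the exact stated form $(1+\delta)(\delta/(1-\delta))^2$, we switch to $T$-relative coordinates, $G:=T^{-1/2}(A-T)T^{-1/2}$. From $A\preceq T/(1-\delta)$ and $A\succeq T/(1+\delta)$ we get $-\frac{\delta}{1+\delta}\Id\preceq G\preceq\frac{\delta}{1-\delta}\Id$, hence $G^2\preceq(\delta/(1-\delta))^2\Id$. Then
\[
\tr((A-T)T^{-1}(A-T))=\tr(T^{1/2}G^2T^{1/2})\leq\left(\frac{\delta}{1-\delta}\right)^2\tr T\leq(1+\delta)\left(\frac{\delta}{1-\delta}\right)^2,
\]
using $\tr T\leq(1+\delta)\tr A\leq1+\delta$. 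The final numeric step is $(1+\delta)/(1-\delta)^2\leq(5/4)(16/9)=20/9\leq3$ for $\delta\leq1/4$.

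The only delicate point is the congruence step converting $\pm$ L\"owner bounds into a bound on $G^2$. This step is valid because $G$ is Hermitian with spectrum in the stated interval, so $G^2$ is bounded by the square of the larger endpoint in absolute value.
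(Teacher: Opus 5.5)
Your proof is correct, and for the stated bound it is the paper's argument: your $G=T^{-1/2}(A-T)T^{-1/2}$ is the paper's $Y-\Id$ with $Y=T^{-1/2}AT^{-1/2}$, bounded in operator norm by $\delta/(1-\delta)$ via congruence, then multiplied by $\tr T\leq1+\delta$ after applying \cref{lem:high-block-quadratic-relative-entropy}. Your preliminary $A$-relative computation is also valid and gives the slightly sharper bound $\delta^2/(1-\delta)$, though it is not needed for the lemma.
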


\begin{proof}
    Set $Y:=T^{-1/2}AT^{-1/2}$.  Congruence by $T^{-1/2}$ gives
    \[
        \frac{1}{1+\delta}\Id
        \preceq Y\preceq
        \frac{1}{1-\delta}\Id,
    \]
    and hence
    \[
        \norm{Y-\Id}_\infty\leq\frac{\delta}{1-\delta}.
    \]
    Lemma~\ref{lem:high-block-quadratic-relative-entropy} and cyclicity of
    the trace imply
    \begin{align*}
        \qKL{A}{T}
        &\leq\tr\!\left((A-T)T^{-1}(A-T)\right)\\
        &=\tr\!\left(T^{1/2}(Y-\Id)^2T^{1/2}\right)\\
        &\leq\norm{Y-\Id}_\infty^2\tr\rbra{T}.
    \end{align*}
    Since $T\preceq(1+\delta)A$ and $\tr A\leq1$, we have
    $\tr T\leq1+\delta$.  Therefore,
    \[
        \qKL{A}{T}
        \leq
        (1+\delta)\left(\frac{\delta}{1-\delta}\right)^2
        \leq3\delta^2,
    \]
    where the last inequality uses $\delta\leq1/4$.
\end{proof}

\subsection{Tomography Error Analysis}
\label{subsec:high-block-relative-control}

We next apply the tomography guarantee to the subspace of large eigenvalues. 

\begin{lemma}
    \label{lem:high-block-relative-loewner}
    There is a universal constant $C>0$ for which the following holds.  Let
    $d,n_{\mathrm{tom}}$ be positive integers,
    $\rho\in\Dens(\mathbb{C}^d)$, $B>0$, and $\widehat\rho$ be the output of
    $\mathsf{RelativeTomography}(\rho^{\otimes n_{\mathrm{tom}}})$.  Suppose
    $n_{\mathrm{tom}}\geq Cd/B$, and define
    \[
        P:=\mathbf{1}_{[B,\infty)}(\widehat\rho),
        \qquad
        \rho_{\mathrm{hi}}:=P\rho P,
        \qquad
        \widehat\rho_{\mathrm{hi}}:=P\widehat\rho P.
    \]
    Conditioned on successful tomography, define
    \[
        \delta:=
        C\sqrt{
            \frac{d}{n_{\mathrm{tom}}B}
            \left(1+\frac{d}{n_{\mathrm{tom}}B}\right)
        }.
    \]
    Then
    \begin{equation}
        (1-\delta)\rho_{\mathrm{hi}}
        \preceq\widehat\rho_{\mathrm{hi}}
        \preceq(1+\delta)\rho_{\mathrm{hi}}.
        \label{eq:high-block-relative-loewner}
    \end{equation}
\end{lemma}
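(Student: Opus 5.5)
The plan is to apply the simultaneous observable bound of \cref{thm:prelim-relative-tomography} to observables supported on $\Image{P}$. This gives a relative-error bound for quadratic forms of $\widehat\rho-\rho$ restricted to $\Image{P}$, which we then convert into the L\"owner sandwich \eqref{eq:high-block-relative-loewner}. The key point is that on $\Image{P}$ the block $\rho_{\mathrm{hi}}$ is bounded below by a multiple of $B$. Indeed, by \cref{lem:learned-block-properties} (with $n_{\mathrm{tom}}\geq Cd/B$ and successful tomography), $\rho_{\mathrm{hi}}\succeq (B/C_0)P$. So the additive error $d/n_{\mathrm{tom}}$ from tomography can be absorbed into a multiplicative error relative to $\rho_{\mathrm{hi}}$.

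Concretely, fix a unit vector $\ket{w}\in\Image{P}$. Since $P\ket{w}=\ket{w}$, we have $\langle w|(\widehat\rho_{\mathrm{hi}}-\rho_{\mathrm{hi}})|w\rangle=\langle w|(\widehat\rho-\rho)|w\rangle$ and $\langle w|\rho|w\rangle=\langle w|\rho_{\mathrm{hi}}|w\rangle=:x$. Applying \cref{cor:prelim-relative-tomography-rank-one} gives
\[
\abs*{\langle w|(\widehat\rho_{\mathrm{hi}}-\rho_{\mathrm{hi}})|w\rangle}\leq C_1\sqrt{\frac{d}{n_{\mathrm{tom}}}\left(x+\frac{d}{n_{\mathrm{tom}}}\right)}.
\]
Using $x\geq B/C_0$, we write this as
\[
C_1 x\sqrt{\frac{d}{n_{\mathrm{tom}}x}\left(1+\frac{d}{n_{\mathrm{tom}}x}\right)}\leq C_1 x\sqrt{\frac{C_0d}{n_{\mathrm{tom}}B}\left(1+\frac{C_0d}{n_{\mathrm{tom}}B}\right)}\leq \delta x,
\]
provided the constant $C$ in $\delta$ is at least $C_1C_0$. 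The middle inequality holds because $u\mapsto u(1+u)$ is increasing and $d/(n_{\mathrm{tom}}x)\leq C_0 d/(n_{\mathrm{tom}}B)$. This yields $\abs*{\langle w|(\widehat\rho_{\mathrm{hi}}-\rho_{\mathrm{hi}})|w\rangle}\leq\delta\langle w|\rho_{\mathrm{hi}}|w\rangle$ for every unit $\ket{w}\in\Image{P}$. By homogeneity the bound extends to all $\ket{w}\in\Image{P}$.

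For general $\ket{w}\in\calH$, note that both $\widehat\rho_{\mathrm{hi}}$ and $\rho_{\mathrm{hi}}$ are compressions by $P$. Hence $\langle w|\cdot|w\rangle=\langle Pw|\cdot|Pw\rangle$, and the inequality holds on all of $\calH$. This is exactly $(1-\delta)\rho_{\mathrm{hi}}\preceq\widehat\rho_{\mathrm{hi}}\preceq(1+\delta)\rho_{\mathrm{hi}}$.

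The only delicate point, and the step I would double-check, is the absorption of the additive term. The naive approach of applying \cref{lem:relative-eigenvalue-perturbation} gives an additive $\Id$ term, which we must then dominate by $\rho_{\mathrm{hi}}$ using the lower bound $B/C_0$. That route would produce $\delta$ linear rather than square-root in $d/(n_{\mathrm{tom}}B)$. So I would instead work directly with the vector-wise bound as above, dividing by $x$ and exploiting monotonicity in $1/x$. The constant $C$ is chosen as the maximum of the constant needed for \cref{lem:learned-block-properties} and $C_1C_0$.
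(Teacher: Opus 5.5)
Your proof is correct and is essentially the paper's argument: restrict to unit $\ket{w}\in\Image{P}$, apply the rank-one tomography bound, use $x=\langle w|\rho_{\mathrm{hi}}|w\rangle\geq B/C_0$ to pull $x$ out of the square root, and conclude by the quadratic-form characterization of the L\"owner order. Your monotonicity step and the extension to general $\ket{w}$ only make explicit what the paper leaves implicit.

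Your closing aside is off, though. Applying \cref{lem:relative-eigenvalue-perturbation} with $\theta=\Theta\bigl(\sqrt{d/(n_{\mathrm{tom}}B)}\bigr)$ and dominating the additive term by $(C_0/B)\rho_{\mathrm{hi}}$ gives $\theta+O\bigl(d/(n_{\mathrm{tom}}B\theta)\bigr)$, i.e.\ the same square-root $\delta$, so that route also works; moreover, a linear $\delta$ would be stronger, not weaker, when $d/(n_{\mathrm{tom}}B)\leq1$.
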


\begin{proof}
    Let $\ket{w}\in\Image{P}$ be a unit vector and set
    $x:=\langle w|\rho_{\mathrm{hi}}|w\rangle$.  Successful tomography and~\Cref{cor:prelim-relative-tomography-rank-one} give
    \[
        \abs*{
            \langle w|(\widehat\rho_{\mathrm{hi}}
            -\rho_{\mathrm{hi}})|w\rangle
        }
        \leq C\sqrt{
            \frac{d}{n_{\mathrm{tom}}}
            \left(x+\frac{d}{n_{\mathrm{tom}}}\right)
        }.
    \]
    Corollary~\ref{cor:learned-decomposition-and-pinching} gives
    $x\geq B/C_0$ for a universal constant $C_0$.  Factoring out $x$ from
    the square root therefore yields
    \[
        \abs*{
            \langle w|(\widehat\rho_{\mathrm{hi}}
            -\rho_{\mathrm{hi}})|w\rangle
        }
        \leq
        Cx\sqrt{
            \frac{d}{n_{\mathrm{tom}}B}
            \left(1+\frac{d}{n_{\mathrm{tom}}B}\right)
        }.
    \]
    The quadratic-form characterization of the L\"owner order proves
    \eqref{eq:high-block-relative-loewner}.
\end{proof}

\begin{corollary}
    \label{cor:high-block-deterministic-bias}
    In the setting of Lemma~\ref{lem:high-block-relative-loewner}, set the
    parameters as in Algorithm~\ref{alg:entropy-estimator}.  Conditioned on
    successful tomography,
    \begin{equation}
        \delta^2
        \leq\frac{C\varepsilon}{\log(e/\varepsilon)},
        \qquad
        \delta\leq\frac14,
        \qquad
        3\delta^2\leq\frac{\varepsilon}{10}.
        \label{eq:high-block-delta-bound}
    \end{equation}
    Consequently,
    \[
        0\leq
        \qKL{\rho_{\mathrm{hi}}}{\widehat\rho_{\mathrm{hi}}}
        \leq\frac{\varepsilon}{10}.
    \]
\end{corollary}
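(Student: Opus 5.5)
The plan is to plug the parameter choices of Algorithm~\ref{alg:entropy-estimator} into the definition of $\delta$ from Lemma~\ref{lem:high-block-relative-loewner}, and then apply Lemma~\ref{lem:high-block-second-order-remainder}.

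\textbf{Step 1: bound $s:=d/(n_{\mathrm{tom}}B)$.} With $B=\Theta(\varepsilon K^2/d)$ and $n_{\mathrm{tom}}=\Theta(d^2\log(1/\varepsilon)/(\varepsilon^2K^2))$, we get
\[
    n_{\mathrm{tom}}B=\Theta\!\left(\frac{d\log(1/\varepsilon)}{\varepsilon}\right),
\]
so $s=\Theta(\varepsilon/\log(1/\varepsilon))$. Since $\varepsilon\leq1/10$, the quantities $\log(1/\varepsilon)$ and $\log(e/\varepsilon)$ agree up to a universal constant factor. Hence $s\leq C'\varepsilon/\log(e/\varepsilon)\leq C'\varepsilon$, where $C'$ is inversely proportional to the hidden constant in $n_{\mathrm{tom}}$.

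\textbf{Step 2: check the hypothesis and bound $\delta^2$.} By enlarging that hidden constant we can make $s\leq1/C$. This gives $n_{\mathrm{tom}}\geq Cd/B$, so Lemma~\ref{lem:high-block-relative-loewner} applies. It then yields $\delta^2=C^2s(1+s)\leq2C^2s$, which is the first bound in \eqref{eq:high-block-delta-bound}.

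\textbf{Step 3: the remaining bounds in \eqref{eq:high-block-delta-bound}.} These are where the constants must be tuned, and this is the only delicate point. The constant $C$ of Lemma~\ref{lem:high-block-relative-loewner} is fixed and universal, whereas the hidden constant $c_{\mathrm{tom}}$ in $n_{\mathrm{tom}}=\Theta(\cdot)$ is ours to choose. We have $\delta^2\leq2C^2s\lesssim (C^2/c_{\mathrm{tom}})\,\varepsilon/\log(e/\varepsilon)$. Taking $c_{\mathrm{tom}}$ large enough makes $3\delta^2\leq\varepsilon/10$, and also $\delta^2\leq1/16$ because $\varepsilon\leq1/10$, so $\delta\leq1/4$.

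\textbf{Step 4: conclude.} By \eqref{eq:high-block-relative-loewner}, with $A=\rho_{\mathrm{hi}}$ and $T=\widehat\rho_{\mathrm{hi}}$ restricted to $\Image{P}$, we have $(1-\delta)A\preceq T\preceq(1+\delta)A$. Both operators are positive definite on $\Image{P}$: $A\succeq(B/C)P$ by Corollary~\ref{cor:learned-decomposition-and-pinching}, and $T\succeq BP$ by the definition of $P$. Moreover, $\tr A\leq\tr\rho=1$. Lemma~\ref{lem:high-block-second-order-remainder} with $\delta\leq1/4$ then gives
\[
    0\leq\qKL{\rho_{\mathrm{hi}}}{\widehat\rho_{\mathrm{hi}}}\leq3\delta^2\leq\frac{\varepsilon}{10}.
\]
If $P=0$, the statement is trivial.
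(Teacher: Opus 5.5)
Your proposal is correct and follows the same route as the paper's proof. Both derive $d/(n_{\mathrm{tom}}B)=O(\varepsilon/\log(e/\varepsilon))$ from the parameter choices, substitute this into the definition of $\delta$ (using $\varepsilon\leq 1/10$), and invoke Lemma~\ref{lem:high-block-second-order-remainder}, with $P=0$ handled as a trivial case; you also make explicit the checks the paper leaves implicit, namely $n_{\mathrm{tom}}\geq Cd/B$, positive definiteness on $\Image{P}$, and fixing $c_B$ before enlarging $c_{\mathrm{tom}}$.
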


\begin{proof}
    The parameter choices in Algorithm~\ref{alg:entropy-estimator} give
    \[
        \frac{d}{n_{\mathrm{tom}}B}
        \leq\frac{C\varepsilon}{\log(e/\varepsilon)}.
    \]
    Substitution into the definition of $\delta$, together with
    $0<\varepsilon\leq1/10$, proves
    \eqref{eq:high-block-delta-bound}.

    If $P=0$, both matrices vanish and the claim is immediate.  Otherwise,
    Lemma~\ref{lem:high-block-relative-loewner} gives the relative L\"owner
    bound.  Equation~\eqref{eq:high-block-delta-bound} gives
    $\delta\leq1/4$ and $3\delta^2\leq\varepsilon/10$.  The result now
    follows from Lemma~\ref{lem:high-block-second-order-remainder}.
\end{proof}

Clipping makes the logarithmic observable bounded for every tomography
output.  On the good event it does not change the plug-in matrix.

\begin{lemma}
    \label{lem:high-block-clipping-inactive}
    Conditioned on successful tomography, the matrix
    $\widehat\rho_{\mathrm{hi}}$ has spectrum in $[B,2]$ on $\Image{P}$.
    Consequently,
    \[
        \widetilde\rho_{\mathrm{hi}}
        =\widehat\rho_{\mathrm{hi}}.
    \]
\end{lemma}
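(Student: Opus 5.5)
The lower end of the spectrum is immediate from the construction. Since $P=\mathbf{1}_{[B,\infty)}(\widehat\rho)$ commutes with $\widehat\rho$, the operator $\widehat\rho_{\mathrm{hi}}=P\widehat\rho P$ restricted to $\Image{P}$ has as eigenvalues exactly those $\widehat\lambda_j$ with $\widehat\lambda_j\geq B$. So its spectrum on $\Image{P}$ lies in $[B,\infty)$, and this holds deterministically, with no appeal to successful tomography.

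The upper end, $\widehat\lambda_j\leq2$, is where tomography enters. I will use the relative L\"owner bound of Lemma~\ref{lem:high-block-relative-loewner} together with Corollary~\ref{cor:high-block-deterministic-bias}. Under the parameter choices of Algorithm~\ref{alg:entropy-estimator}, conditioned on successful tomography, these give $\widehat\rho_{\mathrm{hi}}\preceq(1+\delta)\rho_{\mathrm{hi}}$ with $\delta\leq1/4$. Since $\rho_{\mathrm{hi}}=P\rho P\preceq P$ because $\norm{\rho}_\infty\leq\tr\rho=1$, we get $\widehat\rho_{\mathrm{hi}}\preceq\frac54P$. Hence every eigenvalue of $\widehat\rho_{\mathrm{hi}}$ on $\Image{P}$ is at most $5/4\leq2$.

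As a backup that avoids the parameter-dependent $\delta$, Lemma~\ref{lem:relative-eigenvalue-perturbation} can be applied with $\theta=1/4$, exactly as in the proof of Lemma~\ref{lem:learned-block-properties}. This gives $\widehat\rho\preceq\frac54\rho+\frac{B}{2}\Id$ whenever $n_{\mathrm{tom}}\geq Cd/B$, so $\norm{\widehat\rho}_\infty\leq\frac54+\frac B2$. Since $B=\Theta(\varepsilon K^2/d)$, we have $B\leq 1$ for the relevant $d$, and the bound is again at most $2$.

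Once the spectrum lies in $[B,2]$, the conclusion follows by applying $\clip_{[B,2]}$ through the functional calculus, eigenvalue by eigenvalue. It fixes every eigenvalue, so $\widetilde\rho_{\mathrm{hi}}=\widehat\rho_{\mathrm{hi}}$. The only point that needs care is checking that the sample-size hypothesis $n_{\mathrm{tom}}\geq Cd/B$ holds under the algorithm's parameter choices. This follows from $n_{\mathrm{tom}}B/d=\Theta(\log(1/\varepsilon)/\varepsilon)$, which is large for $\varepsilon\leq1/10$ after adjusting constants.
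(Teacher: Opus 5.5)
Your proposal is correct and follows the paper's proof. The lower end $B$ comes directly from the definition of $P$. The upper end comes from $\widehat\rho_{\mathrm{hi}}\preceq(1+\delta)\rho_{\mathrm{hi}}\preceq\frac54P$, using Lemma~\ref{lem:high-block-relative-loewner}, $\delta\leq\frac14$, and $\rho_{\mathrm{hi}}\preceq P$.

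Your backup route via Lemma~\ref{lem:relative-eigenvalue-perturbation} with $\theta=1/4$ is also valid but unnecessary. It gives $\widehat\rho_{\mathrm{hi}}\preceq(\frac54+\frac B2)P$, and $B<\frac12$ because the algorithm's parameters ensure $2B<1$.
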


\begin{proof}
    Since $P=\mathbf{1}_{[B,\infty)}(\widehat\rho)$, we have
    $\widehat\rho_{\mathrm{hi}}\succeq BP$.  By
    Lemma~\ref{lem:high-block-relative-loewner},
    \[
        \widehat\rho_{\mathrm{hi}}
        \preceq(1+\delta)\rho_{\mathrm{hi}}
        \preceq\frac54 P
        \preceq2P,
    \]
    where we used $\rho_{\mathrm{hi}}\preceq P$ and
    \eqref{eq:high-block-delta-bound}.  Thus clipping to $[B,2]$ changes no
    eigenvalue.
\end{proof}

\subsection{Statistical Correction}
\label{subsec:high-block-statistical-correction}

It remains to analyze the linear correction term. 

\begin{lemma}
    \label{lem:high-block-correction-concentration}
    There is a sufficiently large universal constant $C>0$ such that the
    following holds.  Fix $\rho\in\Dens(\mathbb{C}^d)$, $0<B<1$, a nonzero
    projector $P$, and a positive-definite operator
    $\widetilde\rho_{\mathrm{hi}}$ on $\Image{P}$ with spectrum contained in
    $[B,2]$.  Write
    $\rho_{\mathrm{hi}}:=P\rho P$.  Measure the observable $ -\log\widetilde\rho_{\mathrm{hi}}-P$
    extended by zero on $\Image{\Id-P}$ independently on
    $n_{\mathrm{hi}}$ samples of $\rho$, and denote the sample mean by
    $\widehat\mu_{\mathrm{hi}}$.  If
    \[
        n_{\mathrm{hi}}
        \geq C\frac{\log^2(e/B)}{\varepsilon^2},
    \]
    then, with probability at least $0.99$,
    \begin{equation}
        \abs*{
            \widehat\mu_{\mathrm{hi}}
            -\tr\!\left(
                \rho_{\mathrm{hi}}
                (-\log\widetilde\rho_{\mathrm{hi}}-P)
            \right)
        }
        \leq\frac{\varepsilon}{10}.
        \label{eq:high-block-correction-concentration}
    \end{equation}
\end{lemma}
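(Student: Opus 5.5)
The plan is to apply Hoeffding's inequality (\cref{thm:prelim-hoeffding}) to the i.i.d.\ measurement outcomes. First I pin down the observable and its spectrum. Set
\[
    G:=-\log\widetilde\rho_{\mathrm{hi}}-P
\]
on $\Image{P}$, extended by zero on $\Image{\Id-P}$. Since $\widetilde\rho_{\mathrm{hi}}$ has spectrum in $[B,2]$, the operator $-\log\widetilde\rho_{\mathrm{hi}}$ has spectrum in $[-\log 2,\log(1/B)]$ on $\Image{P}$. Hence the eigenvalues of $G$ on $\Image{P}$ lie in $[-\log 2-1,\log(1/B)-1]$, and on $\Image{\Id-P}$ they are $0$. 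So every outcome of measuring $G$ in its eigenbasis lies in an interval $[a,b]$ with
\[
    a=-1-\log 2,\qquad b=\max\{0,\log(1/B)-1\},\qquad b-a\leq \log(1/B)+2\leq 2\log(e/B),
\]
where the last inequality uses $B<1$, so that $\log(e/B)\geq1$.

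Next I identify the mean. Each outcome $X_i$ is obtained by measuring $G$ on a fresh copy of $\rho$, so $\mathbb{E}X_i=\tr(\rho G)$. Because $G$ is supported on $\Image{P}$, i.e.\ $G=PGP$, cyclicity gives
\[
    \tr(\rho G)=\tr(P\rho P\,G)=\tr\!\left(\rho_{\mathrm{hi}}(-\log\widetilde\rho_{\mathrm{hi}}-P)\right),
\]
which is exactly the target quantity in \eqref{eq:high-block-correction-concentration}. This identification relies on the fact that the projector, the clipped matrix, and hence $G$ are fixed independently of the $n_{\mathrm{hi}}$ fresh samples.

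Finally I conclude with Hoeffding. The outcomes $X_i$ are i.i.d.\ in $[a,b]$, so the corollary after \cref{thm:prelim-hoeffding} gives, with probability at least $0.99$,
\[
    \abs*{\widehat\mu_{\mathrm{hi}}-\tr(\rho G)}\leq \frac{2(b-a)}{\sqrt{n_{\mathrm{hi}}}}\leq\frac{4\log(e/B)}{\sqrt{n_{\mathrm{hi}}}}.
\]
Choosing $C\geq 1600$ ensures $\sqrt{n_{\mathrm{hi}}}\geq 40\log(e/B)/\varepsilon$, and the right-hand side is then at most $\varepsilon/10$.

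There is no substantive obstacle here. The only step needing care is the range computation, in particular absorbing the additive constants into $\log(e/B)$ and handling the zero eigenvalues on $\Image{\Id-P}$.
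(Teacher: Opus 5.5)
Your proposal is correct and follows the paper's proof essentially step for step. Both arguments bound the range of the measurement outcomes of $G$ by $O(\log(e/B))$, identify the mean as $\tr(\rho_{\mathrm{hi}}G)$ because $G$ is supported on $\Image{P}$, and then apply Hoeffding's inequality; the only difference is that you make the constants explicit (interval length at most $2\log(e/B)$, and $C\geq 1600$), where the paper leaves them implicit.
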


\begin{proof}
    By the assumption, 
    for every eigenvalue  $\lambda$ of $\widetilde\rho_{\mathrm{hi}}$, we have $\lambda \in[B,2]$ and
    \[
        \abs*{-\log\lambda-1}
        \leq1+\log2+\log(1/B)
        \leq C\log(e/B).
    \]
    Functional calculus therefore gives
    \[
        \norm{-\log\widetilde\rho_{\mathrm{hi}}-P}_\infty
        \leq C\log(e/B).
    \]

    The spectral measurement produces a real outcome in an interval of length
    at most $2C\log(e/B)$.  Its mean is
    \[
        \tr\!\left(
            \rho(-\log\widetilde\rho_{\mathrm{hi}}-P)
        \right)
        =\tr\!\left(
            \rho_{\mathrm{hi}}
            (-\log\widetilde\rho_{\mathrm{hi}}-P)
        \right),
    \]
    because the measured observable is supported on $\Image{P}$.

    \Cref{thm:prelim-hoeffding} now gives
    \[
        \Pr\!\left[
            \abs*{
                \widehat\mu_{\mathrm{hi}}
                -\tr\!\left(
                    \rho_{\mathrm{hi}}
                    (-\log\widetilde\rho_{\mathrm{hi}}-P)
                \right)
            }
            >\frac{\varepsilon}{10}
        \right]
        \leq
        2\exp\!\left(
            -c\frac{n_{\mathrm{hi}}\varepsilon^2}
                     {\log^2(e/B)}
        \right).
    \]
    The assumed lower bound on $n_{\mathrm{hi}}$, with a sufficiently large
    universal constant $C$, makes the right-hand side at most $0.01$.
\end{proof}

\begin{proposition}
    \label{prop:high-block-accuracy}
    Use the parameter choices in Algorithm~\ref{alg:entropy-estimator}.
    Condition on the success of the algorithm
    $\mathsf{RelativeTomography}(\rho^{\otimes n_{\mathrm{tom}}})$, and denote
    its output by $\widehat\rho$.  Define 
    \[
        P:=\mathbf{1}_{[B,\infty)}(\widehat\rho),
        \qquad
        \rho_{\mathrm{hi}}:=P\rho P.
    \]
    The estimate $\widehat{\S}_{\mathrm{hi}}$ defined in
    Algorithm~\ref{alg:entropy-estimator} satisfies, with probability at least
    $0.99$,
    \begin{equation}
        \abs*{\widehat{\S}_{\mathrm{hi}}-\S(\rho_{\mathrm{hi}})}
        \leq\frac{\varepsilon}{5}.
        \label{eq:high-block-accuracy}
    \end{equation}
    Moreover,
    \[
        n_{\mathrm{hi}}
        =O\!\left(
            \frac{\log^2(d/\varepsilon)}{\varepsilon^2}
        \right).
    \]
\end{proposition}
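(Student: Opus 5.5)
Conditioned on successful tomography, the estimate splits exactly into a bias term plus a statistical fluctuation. If $P=0$, then $\rho_{\mathrm{hi}}=0$ and $\widehat{\S}_{\mathrm{hi}}=0$, so there is nothing to prove. Otherwise, Lemma~\ref{lem:high-block-clipping-inactive} gives $\widetilde\rho_{\mathrm{hi}}=\widehat\rho_{\mathrm{hi}}$, and we write $T:=\widehat\rho_{\mathrm{hi}}$ and $A:=\rho_{\mathrm{hi}}$, both positive definite on $\Image{P}$. For $A$ this follows from Corollary~\ref{cor:learned-decomposition-and-pinching}; for $T$ it holds because $T\succeq BP$.

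With $G_{\mathrm{hi}}=-\log T-P$, the target of the correction is $\mu:=\tr(\rho G_{\mathrm{hi}})=\tr(A G_{\mathrm{hi}})$, since $G_{\mathrm{hi}}$ is supported on $\Image{P}$. We then compute
\[
\S(T)+\mu-\tr(TG_{\mathrm{hi}})=\S(T)-\tr\rbra*{(A-T)(\log T+\Id)}.
\]
By Proposition~\ref{prop:high-block-one-step-identity}, the right-hand side equals $\S(A)+\qKL{A}{T}$. This gives the exact decomposition
\[
\widehat{\S}_{\mathrm{hi}}-\S(\rho_{\mathrm{hi}})=\qKL{\rho_{\mathrm{hi}}}{\widehat\rho_{\mathrm{hi}}}+(\widehat\mu_{\mathrm{hi}}-\mu).
\]

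We bound the two terms separately. For the bias, Corollary~\ref{cor:high-block-deterministic-bias} gives $0\le \qKL{\rho_{\mathrm{hi}}}{\widehat\rho_{\mathrm{hi}}}\le \varepsilon/10$. For the fluctuation, we apply Lemma~\ref{lem:high-block-correction-concentration}. Its hypotheses hold because, after conditioning on the tomography outcome, $P$ and $\widetilde\rho_{\mathrm{hi}}$ are fixed, the spectrum of $\widetilde\rho_{\mathrm{hi}}$ lies in $[B,2]$, and the $n_{\mathrm{hi}}$ samples are fresh and independent of the tomography samples. The lemma then gives $|\widehat\mu_{\mathrm{hi}}-\mu|\le\varepsilon/10$ with probability at least $0.99$. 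Summing the two bounds yields \eqref{eq:high-block-accuracy}.

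It remains to check the sample count. The lemma requires $n_{\mathrm{hi}}\ge C\log^2(e/B)/\varepsilon^2$. With $B=\Theta(\varepsilon K^2/d)$ and $K\ge1$, we have $\log(e/B)=O(\log(d/\varepsilon))$, so the algorithm's choice $n_{\mathrm{hi}}=\Theta(\log^2(d/\varepsilon)/\varepsilon^2)$ meets this requirement once the hidden constant is large enough. The main point needing care is the conditioning argument: the observable is chosen using the tomography output, so the concentration step is valid only because it uses an independent sample batch. Beyond that, the identity step is purely algebraic, with $\log$ taken on $\Image{P}$ and $\Id$ read as $P$.
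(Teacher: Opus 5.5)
Your proposal is correct and follows essentially the same route as the paper. You use the clipping lemma to get $\widetilde\rho_{\mathrm{hi}}=\widehat\rho_{\mathrm{hi}}$, then Proposition~\ref{prop:high-block-one-step-identity} to split the error exactly into $\qKL{\rho_{\mathrm{hi}}}{\widehat\rho_{\mathrm{hi}}}$ plus the fluctuation $\widehat\mu_{\mathrm{hi}}-\mu$, bound each by $\varepsilon/10$ via Corollary~\ref{cor:high-block-deterministic-bias} and Lemma~\ref{lem:high-block-correction-concentration}, and get $n_{\mathrm{hi}}$ from $\log(e/B)=O(\log(d/\varepsilon))$. Your remarks on the positive definiteness of both blocks and on the independence of the fresh $n_{\mathrm{hi}}$ samples from the tomography batch make explicit points the paper leaves implicit.
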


\begin{proof}
    If $P=0$, then $\rho_{\mathrm{hi}}=0$ and Algorithm
    \ref{alg:entropy-estimator} sets $\widehat{\S}_{\mathrm{hi}}=0$.  Assume
    $P\neq0$.

    Corollary~\ref{cor:high-block-deterministic-bias} and
    Lemma~\ref{lem:high-block-clipping-inactive} give
    \[
        \widetilde\rho_{\mathrm{hi}}
        =\widehat\rho_{\mathrm{hi}},
        \qquad
        0\leq
        \qKL{\rho_{\mathrm{hi}}}{\widetilde\rho_{\mathrm{hi}}}
        \leq\frac{\varepsilon}{10}.
    \]

    Proposition~\ref{prop:high-block-one-step-identity} and the definition in
    Algorithm~\ref{alg:entropy-estimator} give
    \begin{align*}
        \widehat{\S}_{\mathrm{hi}}-\S(\rho_{\mathrm{hi}})
        &=\qKL{\rho_{\mathrm{hi}}}{\widetilde\rho_{\mathrm{hi}}}\\
        &\quad+
        \widehat\mu_{\mathrm{hi}}
        -\tr\!\left(
            \rho_{\mathrm{hi}}
            (-\log\widetilde\rho_{\mathrm{hi}}-P)
        \right).
    \end{align*}
    Lemma~\ref{lem:high-block-correction-concentration}, applied to the
    $n_{\mathrm{hi}}$ independent samples used by the high-block measurement,
    bounds the second term by $\varepsilon/10$ in absolute value with
    probability at least $0.99$.  The triangle inequality proves
    \eqref{eq:high-block-accuracy}.

    Finally, $B=\Theta(\varepsilon K^2/d)$ and $K\geq2$ imply
    \[
        \log(e/B)=O(\log(d/\varepsilon)).
    \]
    Substitution in
    Lemma~\ref{lem:high-block-correction-concentration} gives the stated copy
    bound.
\end{proof}

\section{Estimation for Small Eigenvalues by Polynomial Approximation}
\label{sec:polynomial-low-block}

We now analyze the polynomial estimator for the subspace of small eigenvalues. 

\subsection{Polynomial Approximation with Bounded Coefficients}
\label{subsec:coefficient-controlled-entropy-polynomial}

Recall that $T_j$ denotes the
degree-$j$ Chebyshev polynomial of the first kind.  For $K\geq2$, define the
polynomials $s_K$ and $\EntPoly_{K,1}$ on $[0,1]$ by
\begin{align}
    s_K(y)
    &:=\log2-\frac12
      +\left(\log2-\frac34\right)T_1(2y-1)
      +\sum_{j=2}^{K}\frac{(-1)^{j+1}}{j(j^2-1)}T_j(2y-1),
    \label{eq:low-block-chebyshev-truncation}\\
    \EntPoly_{K,1}(y)
    &:=s_K(y)-s_K(0).
    \label{eq:low-block-unit-entropy-polynomial}
\end{align}
The subtraction in \eqref{eq:low-block-unit-entropy-polynomial} removes the
constant term without changing any other monomial coefficient.

\begin{lemma}
    \label{lem:low-block-unit-entropy-polynomial}
    There are universal constants $C>0$ and $C_{\mathrm{poly}}>1$ such that,
    for every integer $K\geq2$, the polynomial
    \[
        \EntPoly_{K,1}(y)=\sum_{k=1}^{K}b_ky^k
    \]
    satisfies the following bounds, where $0\log0:=0$:
    \begin{align}
        \sup_{0\leq y\leq1}
        \abs*{-y\log y-\EntPoly_{K,1}(y)}
        &\leq\frac1{K^2},
        \label{eq:low-block-unit-approximation}\\
        \abs*{b_1}
        &\leq C\log(eK),
        \label{eq:low-block-unit-linear-coefficient}\\
        \abs*{b_k}
        &\leq C_{\mathrm{poly}}^K,
        \qquad 2\leq k\leq K.
        \label{eq:low-block-unit-higher-coefficients}
    \end{align}
\end{lemma}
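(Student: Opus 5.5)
The plan is to treat $s_K$ as the degree-$K$ truncation of the Chebyshev expansion of $f(y):=-y\log y$ on $[0,1]$. With $x=2y-1$ and $g(x):=f((1+x)/2)$, we first verify that $g(x)=\sum_{j\ge0}c_jT_j(x)$ with $c_0=\log2-\tfrac12$, $c_1=\log2-\tfrac34$, and $c_j=(-1)^{j+1}/(j(j^2-1))$ for $j\ge2$.

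To compute these coefficients, substitute $x=\cos\theta$, so that $c_j=\frac{2}{\pi}\int_0^\pi g(\cos\theta)\cos(j\theta)\,d\theta$. Integrate by parts twice. The point is that $g''(x)=-1/(1+x)$, and
\[
\frac{1}{1+\cos\theta}=\frac{1}{2\cos^2(\theta/2)},
\]
whose cosine-type integrals against $\sin$ terms are explicit.

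An alternative check is that $g$ satisfies a simple ODE. Using $T_j'=jU_{j-1}$ and the standard relation $\int T_j=\tfrac12\bigl(T_{j+1}/(j+1)-T_{j-1}/(j-1)\bigr)$, we can integrate twice the known expansion of $-1/(1+x)$, whose Chebyshev coefficients are $(-1)^j\cdot\mathrm{const}$ (divergent but Abel-summable). This yields the $1/(j(j^2-1))$ pattern. A consistency check: the formula gives $g(1)=0$, since $f(1)=0$, and $g(-1)=0$.

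Given the coefficients, (\ref{eq:low-block-unit-approximation}) is immediate. The constant shift $s_K(0)$ is the tail value, because $g(-1)=f(0)=0$ forces $s_K(0)=-\sum_{j>K}c_jT_j(-1)$. Both $\sup|g-s_K|$ and $|s_K(0)|$ are therefore bounded by $\sum_{j>K}|c_j|\le\sum_{j>K}1/(j^3-j)=\frac{1}{2K(K+1)}$, using the telescoping identity $\frac1{j(j^2-1)}=\frac12\bigl(\frac1{j(j-1)}-\frac1{j(j+1)}\bigr)$. The total error is at most $1/(K(K+1))\le1/K^2$.

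For the coefficients we expand in monomials of $y$. The linear coefficient is $b_1=s_K'(0)=2\sum_{j\le K}c_jT_j'(-1)$. Since $T_j'(-1)=(-1)^{j+1}j^2$, we get $b_1=2\bigl(c_1+\sum_{j=2}^K\frac{j^2}{j(j^2-1)}\bigr)$ up to sign bookkeeping, which is $O(\sum_{j\le K}1/j)=O(\log(eK))$, giving (\ref{eq:low-block-unit-linear-coefficient}).

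For $k\ge2$ we use the crude bound on monomial coefficients of $T_j(2y-1)$, the shifted Chebyshev polynomial. Its coefficients sum in absolute value to $|T_j^*(-1)|$-type quantities, at most $(3+2\sqrt2)^j\le 6^j$, which can be seen from the generating recursion $T_{j+1}^*=2(2y-1)T_j^*-T_{j-1}^*$ by bounding the $\ell_1$ norm of coefficient vectors by $a_{j+1}\le6a_j+a_{j-1}$. Since $|c_j|\le1$, this gives $|b_k|\le\sum_{j\le K}7^j\le C_{\mathrm{poly}}^K$.

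The main obstacle is the exact Chebyshev coefficient computation, including the signs and the $c_0,c_1$ values. I will do it with the double integration by parts in $\theta$, expecting boundary terms to vanish because $g(\pm1)$ is finite and $\sin\theta$ vanishes at the endpoints. If that proves messy, a fallback is to verify the identity $g=\sum c_jT_j$ by checking that the series converges uniformly, that it has second derivative $-1/(1+x)$ in the distributional/Abel sense, and that it matches $g$ at $x=\pm1$.
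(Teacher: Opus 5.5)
Your proposal is correct. For the three bounds it follows the paper's argument:
\begin{itemize}
\item The approximation error is written as $R_K(y)-R_K(0)$, with $s_K(0)=-R_K(0)$ forced by $g(-1)=0$, and each piece is bounded by $\sum_{j>K}\frac{1}{j(j^2-1)}=\frac{1}{2K(K+1)}$.
\item $b_1$ comes from $T_j'(-1)=(-1)^{j+1}j^2$, giving $b_1=2(\log 2-\tfrac34)+2\sum_{j=2}^K\frac{j}{j^2-1}$.
\item $|b_k|$ is controlled by the $\ell_1$ recursion $L_{j+1}\le 6L_j+L_{j-1}$, which gives $L_j\le 7^j$.
\end{itemize}

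The only real difference is how you obtain the Chebyshev expansion. The paper first derives the Fourier series $\log(2\cos(\theta/2))=\sum_{j\ge1}\frac{(-1)^{j+1}}{j}\cos(j\theta)$. This series converges only conditionally, so the paper proves pointwise convergence via the Dirichlet test, $L^2$ convergence and continuity, and then multiplies by $1+\cos\theta$. You instead compute $c_j$ directly by two integrations by parts. After the first one you integrate $g'(\cos\theta)=-\log(2\cos(\theta/2))+\log 2-\tfrac12$ against $\cos((j\pm1)\theta)$, which are exactly the paper's Fourier coefficients. The second produces $-\tfrac12\tan(\theta/2)\sin(m\theta)$, which under $\theta\mapsto\pi-\theta$ is the paper's integral $\int_0^\pi\cot(\phi/2)\sin(m\phi)\,d\phi=\pi$. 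So the computations coincide, but your ordering gives a cleaner convergence step. Since $\theta\mapsto g(\cos\theta)$ is continuous and $\sum_j|c_j|<\infty$, the cosine series converges uniformly to a continuous function with the same Fourier coefficients, hence to $g(\cos\theta)$. You should state this uniqueness step explicitly, because it is what turns the coefficient computation into the identity you use.

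Several details need repair when you write it out:
\begin{itemize}
\item $g''(x)=-\frac{1}{2(1+x)}$, not $-\frac{1}{1+x}$.
\item $c_0$ carries the normalization $\frac1\pi$, not $\frac2\pi$.
\item Neither $c_0$ nor $c_1$ comes out of the integration-by-parts scheme. For $c_1$ the first step produces the $m=0$ term $\int_0^\pi g'(\cos\theta)\,d\theta$. Both $c_0$ and $c_1$ need $\int_0^{\pi/2}\log\sin u\,du=-\frac{\pi}{2}\log 2$, which is where $\log 2$ enters.
\item In the second integration by parts the boundary term $g'(\cos\theta)\sin(m\theta)/m$ involves $g'(\cos\theta)\to\infty$ as $\theta\to\pi$. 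It vanishes because the blow-up is logarithmic while $\sin(m\theta)$ vanishes linearly, not because $g(\pm1)$ is finite.
\item Your ``alternative check'' is wrong as stated. The Abel-summed Chebyshev coefficients of $-\frac{1}{2(1+x)}$ are $(-1)^j j$, not $(-1)^j$ times a constant. Constant alternating coefficients Abel-sum to a constant, and integrating them gives vanishing coefficients for $j\ge2$. Drop this check or correct it.
\item The recursion $a_{j+1}\le 6a_j+a_{j-1}$ gives $7^j$, not $6^j$. The sharper $(3+2\sqrt2)^j$ requires sign alternation of the monomial coefficients of $T_j(2y-1)$, which holds since all its roots lie in $(0,1)$; then the absolute coefficient sum equals $|T_j(-3)|=T_j(3)$. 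You only use $7^j$, so nothing breaks.
\end{itemize}
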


We first establish the Fourier expansion used to prove the lemma.

\begin{proposition}
    \label{prop:low-block-chebyshev-expansion}
    For every $y\in[0,1]$,
    \begin{equation}
        -y\log y=\left(\log2-\frac12\right)+\left(\log2-\frac34\right)T_1(2y-1)+\sum_{j=2}^{\infty}\frac{(-1)^{j+1}}{j(j^2-1)}T_j(2y-1),
        \label{eq:low-block-chebyshev-series}
    \end{equation}
    where $0\log0:=0$.  The series converges absolutely and uniformly on
    $[0,1]$.
\end{proposition}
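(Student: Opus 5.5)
Set $y=(1+\cos\theta)/2=\cos^2(\theta/2)$ with $\theta\in[0,\pi]$, so that $T_j(2y-1)=\cos(j\theta)$. The claim then becomes a statement about the Fourier cosine series of the even, continuous, $2\pi$-periodic function
\[
f(\theta):=-\cos^2(\theta/2)\log\cos^2(\theta/2)=-2\cos^2(\theta/2)\log\abs{\cos(\theta/2)}.
\]
\textbf{Step 1 (convergence).} The displayed coefficients satisfy $\abs{(-1)^{j+1}/(j(j^2-1))}=O(j^{-3})$, so the series converges absolutely and uniformly on $[0,1]$ by the Weierstrass M-test, since $\abs{T_j}\leq1$ there. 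Its sum is therefore continuous. Because $f$ is continuous, it is determined by its Fourier coefficients. So it suffices to show that the cosine coefficients $c_j=\frac{2}{\pi}\int_0^\pi f(\theta)\cos(j\theta)\,d\theta$ (with $c_0/2$ as the constant term) match the claimed values. Uniform convergence of a series whose Fourier coefficients equal those of $f$ then gives pointwise equality, including at $y=0$ (that is, $\theta=\pi$).

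\textbf{Step 2 (coefficients).} I would write $f(\theta)=-(1+\cos\theta)\,\log\abs{\cos(\theta/2)}$ and use the classical expansion
\[
\log\abs{\cos(\theta/2)}=-\log2-\sum_{m\geq1}\frac{(-1)^m}{m}\cos(m\theta),
\]
valid in $L^2$ and pointwise for $\theta\neq\pi$. Multiplying by $1+\cos\theta$ and applying $\cos\theta\cos(m\theta)=\tfrac12(\cos((m-1)\theta)+\cos((m+1)\theta))$ lets me collect the coefficients:
\begin{itemize}
\item The constant term is $\log2+\tfrac12\cdot(-1)\cdot(-1)^{1}\cdots$, and should simplify to $\log2-\tfrac12$.
\item The coefficient of $\cos\theta$ should be $\log2-\tfrac34$.
\item For $j\geq2$, the coefficient is $\frac{(-1)^j}{j}+\frac12\frac{(-1)^{j-1}}{j-1}+\frac12\frac{(-1)^{j+1}}{j+1}$, taken with the appropriate overall sign. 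This simplifies to $(-1)^{j+1}\cdot\frac{1}{j(j^2-1)}$, because $\frac12\bigl(\frac1{j-1}+\frac1{j+1}\bigr)-\frac1j=\frac{1}{j(j^2-1)}$.
\end{itemize}
Since the product of an $L^2$-convergent series with the trigonometric polynomial $1+\cos\theta$ still converges in $L^2$, this rearrangement is legitimate at the level of Fourier coefficients.

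\textbf{Main obstacle.} The delicate part is the sign and constant bookkeeping for $j=0,1$, where the shifted terms $m-1=0$ and $m+1$ interact with the $-\log2$ constant. I would double-check these two coefficients directly rather than trust the general formula:
\begin{itemize}
\item For $j=0$, integrate $\int_0^1 -y\log y\,\frac{dy}{\pi\sqrt{y(1-y)}}$ via the substitution above.
\item Alternatively, evaluate the series at $y=1$ ($\theta=0$, where $f=0$) and at $y=0$ ($\theta=\pi$, where $f=0$). This gives two linear checks: $\log2-\tfrac12+\log2-\tfrac34+\sum_{j\ge2}\frac{(-1)^{j+1}}{j(j^2-1)}=0$, and the alternating counterpart. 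Both can be verified using the telescoping partial fractions $\frac{1}{j(j^2-1)}=\frac12\bigl(\frac1{j-1}-\frac2j+\frac1{j+1}\bigr)$ and $\sum(-1)^{j}/j=-\log2$.
\end{itemize}
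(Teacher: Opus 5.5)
Your proposal is correct and follows essentially the paper's route: the substitution $y=\cos^2(\theta/2)$, the Fourier series of $\log\cos(\theta/2)$, and multiplication by $1+\cos\theta$ with product-to-sum. The differences are that you cite the log expansion and get pointwise equality (including at $y=0$) from uniqueness of Fourier coefficients of the continuous function $f$, whose coefficients are absolutely summable, whereas the paper derives the coefficients by integration by parts, proves pointwise convergence of the log series on $(-\pi,\pi)$ via the Dirichlet test, and reaches $y=0$ by continuity; for the bookkeeping you flagged, $f=(1+\cos\theta)\bigl(\log2+\sum_{m\ge1}\tfrac{(-1)^m}{m}\cos(m\theta)\bigr)$ needs no extra overall sign, the constant term is $\log2+\tfrac12\cdot\tfrac{(-1)^1}{1}=\log2-\tfrac12$ (your displayed fragment carries an uncancelled factor $-1$), and the $\cos\theta$ coefficient is $\log2+(-1)+\tfrac12\cdot\tfrac12=\log2-\tfrac34$.
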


\begin{proof}
    Let
    $g(\theta):=\log(2\cos(\theta/2))$ for $\abs{\theta}<\pi$.
    This function is even and belongs to $L^2(-\pi,\pi)$ because its endpoint
    singularities are logarithmic.  Its cosine Fourier coefficients are
    \[
        a_0:=\frac2\pi\int_0^\pi g(\theta)\,\mathrm d\theta=0,
        \qquad
        a_j:=\frac2\pi\int_0^\pi
        g(\theta)\cos(j\theta)\,\mathrm d\theta
        =\frac{(-1)^{j+1}}j,
        \quad j\geq1.
    \]
    To verify these formulas, set
    \[
        I_j:=\int_0^\pi
        \log\!\left(2\sin\frac x2\right)\cos(jx)\,\mathrm dx,
        \qquad j\geq1.
    \]
    Integration by parts on $[\varepsilon,\pi]$ gives
    \begin{align*}
        I_j
        &=\lim_{\varepsilon\downarrow0}
          \left\{
              \left[
                  \frac{\sin(jx)}j
                  \log\!\left(2\sin\frac x2\right)
              \right]_{x=\varepsilon}^{x=\pi}
              -\frac1{2j}\int_\varepsilon^\pi
                  \sin(jx)\cot\frac x2\,\mathrm dx
          \right\}\\
        &=-\frac1{2j}\int_0^\pi
              \sin(jx)\cot\frac x2\,\mathrm dx\\
        &=-\frac1{2j}\int_0^\pi
          \left(1+2\sum_{k=1}^{j-1}\cos(kx)+\cos(jx)\right)\mathrm dx
          =-\frac\pi{2j}.
    \end{align*}
    Here the boundary term vanishes because
    \[
        \sin(j\varepsilon)\log\!\left(2\sin\frac\varepsilon2\right)
        =O(\varepsilon\abs*{\log\varepsilon}),
    \]
    while $\sin(jx)\cot(x/2)\to2j$ as $x\downarrow0$.  The third equality
    uses
    \[
        \sin(jx)\cot\frac x2
        =1+2\sum_{k=1}^{j-1}\cos(kx)+\cos(jx).
    \]
    The substitution $x=\pi-\theta$ therefore yields
    \[
        a_j=\frac2\pi(-1)^j I_j
        =\frac{(-1)^{j+1}}j.
    \]

    For the constant coefficient, let
    $J:=\int_0^{\pi/2}\log(\sin u)\,\mathrm du$.  The substitution
    $u\mapsto\pi/2-u$, followed by $v=2u$, gives
    \[
        2J
        =\int_0^{\pi/2}\log\!\left(\frac12\sin(2u)\right)\mathrm du
        =-\frac\pi2\log2+J,
    \]
    so $J=-(\pi/2)\log2$.  Consequently,
    \[
        \int_0^\pi\log\!\left(2\sin\frac x2\right)\mathrm dx
        =\pi\log2+2J=0,
    \]
    and hence $a_0=0$.

    Consequently, the Fourier series of $g$ is
    \begin{equation}
        g(\theta)
        =\sum_{j=1}^{\infty}\frac{(-1)^{j+1}}j\cos(j\theta),
        \qquad \abs*{\theta}<\pi.
        \label{eq:low-block-log-fourier-series}
    \end{equation}
    We justify the pointwise equality.  On every compact subinterval of
    $(-\pi,\pi)$, the geometric-sum identity gives the uniform bound
    \[
        \sup_{N\geq1}
        \abs*{\sum_{j=1}^N(-1)^{j+1}\cos(j\theta)}
        \leq\frac{2}{\abs*{1-e^{i(\theta+\pi)}}}.
    \]
    The denominator is bounded away from zero on each such compact
    subinterval.  Since $1/j$ decreases monotonically to zero, the uniform
    Dirichlet test shows that the series on the right-hand side of
    \eqref{eq:low-block-log-fourier-series} converges uniformly on every
    compact subinterval.  Its sum is continuous there because each partial
    sum is continuous.  The partial sums also converge to $g$ in
    $L^2(-\pi,\pi)$ by completeness of the trigonometric system.  The two
    limits agree almost everywhere, and hence everywhere on $(-\pi,\pi)$ by
    continuity.  This proves \eqref{eq:low-block-log-fourier-series}.

    For $0<y\leq1$, choose $0\leq\theta<\pi$ such that
    $2y-1=\cos\theta$.  Since $y=\cos^2(\theta/2)$,
    \[
        -y\log y
        =(1+\cos\theta)\left(\log2-
        \sum_{j=1}^{\infty}\frac{(-1)^{j+1}}j\cos(j\theta)\right).
    \]
    Applying the product-to-sum formula and using
    $T_j(2y-1)=\cos(j\theta)$ gives
    \eqref{eq:low-block-chebyshev-series}.
    Finally, $\abs{T_j(x)}\leq1$ on $[-1,1]$ and
    $\sum_{j\geq2}1/(j(j^2-1))<\infty$.  The Weierstrass $M$-test gives
    absolute and uniform convergence on $[0,1]$, so the identity extends to
    $y=0$ by continuity.
\end{proof}

\begin{proof}[Proof of Lemma~\ref{lem:low-block-unit-entropy-polynomial}]
    Define the truncation remainder
    \[
        R_K(y):=-y\log y-s_K(y)
        =\sum_{j=K+1}^{\infty}
          \frac{(-1)^{j+1}}{j(j^2-1)}T_j(2y-1),
        \qquad 0\leq y\leq1,
    \]
    where the equality follows from
    Proposition~\ref{prop:low-block-chebyshev-expansion}.  Since
    $\abs{T_j(2y-1)}\leq1$,
    \[
        \sup_{0\leq y\leq1}\abs*{R_K(y)}
        \leq\sum_{j=K+1}^{\infty}\frac1{j(j^2-1)}
        =\frac1{2K(K+1)},
    \]
    Moreover, $R_K(0)=-s_K(0)$, and hence
    \[
        -y\log y-\EntPoly_{K,1}(y)=R_K(y)-R_K(0).
    \]
    Therefore, by the triangle inequality,
    \begin{align*}
        \sup_{0\leq y\leq1}
        \abs*{-y\log y-\EntPoly_{K,1}(y)}
        &=\sup_{0\leq y\leq1}\abs*{R_K(y)-R_K(0)}\\
        &\leq\sup_{0\leq y\leq1}\abs*{R_K(y)}+\abs*{R_K(0)}\\
        &\leq2\sup_{0\leq y\leq1}\abs*{R_K(y)}\\
        &\leq\frac1{K(K+1)}
        \leq\frac1{K^2}.
    \end{align*}

    For $0\leq k\leq j$, let $\tau_{j,k}$ denote the coefficient of $y^k$ in
    the monomial expansion of $T_j(2y-1)$; that is,
    \[
        T_j(2y-1)=\sum_{k=0}^j\tau_{j,k}y^k.
    \]
    Subtracting $s_K(0)$ changes only the constant coefficient.  The identities
    \[
        T_j'(x)=jU_{j-1}(x),
        \qquad
        U_{j-1}(-1)=(-1)^{j-1}j
    \]
    give
    \[
        \left.
        \frac{\mathrm d}{\mathrm dy}T_j(2y-1)
        \right|_{y=0}
        =2j^2(-1)^{j-1}.
    \]
    Thus $\tau_{j,1}=2j^2(-1)^{j-1}$ and
    \begin{align*}
        b_1
        &=2\left(\log2-\frac34\right)
          +\sum_{j=2}^K
           \frac{(-1)^{j+1}}{j(j^2-1)}\tau_{j,1}\\
        &=2\left(\log2-\frac34\right)
          +\sum_{j=2}^K
           \left(\frac1{j-1}+\frac1{j+1}\right).
    \end{align*}
    Consequently,
    \[
        \abs*{b_1}
        \leq2\abs*{\log2-\frac34}
          +2\sum_{j=1}^K\frac1j
        \leq C\log(eK),
    \]
    which proves
    \eqref{eq:low-block-unit-linear-coefficient}.

    For $2\leq k\leq K$, the coefficient $b_k$ is explicitly
    \[
        b_k=\sum_{j=k}^K
        \frac{(-1)^{j+1}}{j(j^2-1)}\tau_{j,k}.
    \]
    Let $L_j:=\sum_{k=0}^j\abs{\tau_{j,k}}$.  After substituting $x=2y-1$,
    the Chebyshev recurrence becomes
    \[
        T_{j+1}(2y-1)
        =(4y-2)T_j(2y-1)-T_{j-1}(2y-1).
    \]
    Multiplication by $4y-2$ increases the sum of the absolute monomial
    coefficients by at most the factor $\abs{4}+\abs{-2}=6$.  The triangle
    inequality therefore gives
    \[
        L_{j+1}\leq6L_j+L_{j-1},
        \qquad L_0=1,\quad L_1=3.
    \]
    Induction yields $L_j\leq7^j$.  Therefore, for $2\leq k\leq K$,
    \[
        \abs*{b_k}
        \leq\sum_{j=k}^K\frac{\abs*{\tau_{j,k}}}{j(j^2-1)}
        \leq\sum_{j=k}^K L_j
        \leq\frac{7^{K+1}}6
        \leq8^K.
    \]
    Thus one may take $C_{\mathrm{poly}}=8$, which proves
    \eqref{eq:low-block-unit-higher-coefficients}.
\end{proof}

For $0<M<1$, define the scaled polynomial
$\EntPoly_{K,M}:[0,M]\to\mathbb R$ by
\begin{equation}
    \EntPoly_{K,M}(x)
    :=x\log\frac1M
      +M\EntPoly_{K,1}\!\left(\frac{x}{M}\right)
    =\sum_{k=1}^{K}a_kx^k.
    \label{eq:low-block-scaled-entropy-polynomial}
\end{equation}

\begin{lemma}
    \label{lem:low-block-scaled-entropy-polynomial}
    There are universal constants $C>0$ and $C_{\mathrm{poly}}>1$ such that,
    for every integer $K\geq2$ and every $0<M<1$, the polynomial defined in
    \eqref{eq:low-block-scaled-entropy-polynomial}
    satisfies the following bounds, where $0\log0:=0$:
    \begin{align}
        \sup_{0\leq x\leq M}
        \abs*{-x\log x-\EntPoly_{K,M}(x)}
        &\leq\frac{M}{K^2},
        \label{eq:low-block-scaled-approximation}\\
        \abs*{a_1}
        &\leq C\log\frac{eK}{M},
        \label{eq:low-block-scaled-linear-coefficient}\\
        \abs*{a_k}
        &\leq C_{\mathrm{poly}}^K M^{1-k},
        \qquad 2\leq k\leq K.
        \label{eq:low-block-scaled-higher-coefficients}
    \end{align}
    Moreover, every positive semidefinite operator $A$ acting on a space of
    dimension at most $d$ and satisfying $\norm{A}_\infty\leq M$ obeys
    \begin{equation}
        \abs*{
            \S(A)-\tr\!\left(\EntPoly_{K,M}(A)\right)
        }
        \leq\frac{dM}{K^2}.
        \label{eq:low-block-deterministic-bias}
    \end{equation}
\end{lemma}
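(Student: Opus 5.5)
The plan is to reduce everything to \cref{lem:low-block-unit-entropy-polynomial} through the scaling identity for $-x\log x$. For $x\in[0,M]$ write $x=My$ with $y\in[0,1]$. Then
\[
    -x\log x=-My\log(My)=x\log\frac1M+M(-y\log y),
\]
and subtracting the definition \eqref{eq:low-block-scaled-entropy-polynomial} gives
\[
    -x\log x-\EntPoly_{K,M}(x)=M\bigl(-y\log y-\EntPoly_{K,1}(y)\bigr).
\]
Taking the supremum over $y\in[0,1]$ and applying \eqref{eq:low-block-unit-approximation} yields \eqref{eq:low-block-scaled-approximation} directly. This also covers $x=0$, since both sides vanish under $0\log 0=0$.

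Next I will read off the coefficients. Writing $\EntPoly_{K,1}(y)=\sum_k b_k y^k$, we get $M\,\EntPoly_{K,1}(x/M)=\sum_k b_kM^{1-k}x^k$. Hence
\[
    a_1=b_1+\log(1/M),\qquad a_k=b_kM^{1-k}\quad (k\geq 2),
\]
and there is no constant term. For $k\geq2$, \eqref{eq:low-block-unit-higher-coefficients} gives $\abs{a_k}\leq C_{\mathrm{poly}}^KM^{1-k}$. For $k=1$, the bound \eqref{eq:low-block-unit-linear-coefficient} gives
\[
    \abs{a_1}\leq C\log(eK)+\log(1/M)\leq \max\{C,1\}\log\frac{eK}{M},
\]
using $\log(eK)+\log(1/M)=\log(eK/M)$ and $\log(eK)\leq\log(eK/M)$, both valid because $0<M<1$. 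Enlarging $C$ to $\max\{C,1\}$ finishes \eqref{eq:low-block-scaled-linear-coefficient}.

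For the operator statement \eqref{eq:low-block-deterministic-bias}, take the spectral decomposition $A=\sum_i\lambda_i\ketbra{u_i}{u_i}$ with $0\leq\lambda_i\leq M$, at most $d$ terms. Both $\S(A)=\sum_i(-\lambda_i\log\lambda_i)$ and $\tr(\EntPoly_{K,M}(A))=\sum_i\EntPoly_{K,M}(\lambda_i)$ are computed by functional calculus in the same basis, so the difference is a sum of at most $d$ scalar errors. Each scalar error is at most $M/K^2$ by the first part, and the triangle inequality gives the bound $dM/K^2$.

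No step here is a real obstacle; the only care needed is the bookkeeping of the $\log(1/M)$ shift in $a_1$ and the constant adjustment there.
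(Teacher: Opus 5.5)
Your proposal is correct and follows essentially the same route as the paper: the scaling identity $-x\log x = x\log(1/M) + M(-y\log y)$ with $x=My$ reduces the approximation bound to \cref{lem:low-block-unit-entropy-polynomial}, the coefficients are read off as $a_1=b_1+\log(1/M)$ and $a_k=b_kM^{1-k}$, and the operator bound is obtained by summing the scalar error over the at most $d$ eigenvalues. Your explicit treatment of the constant in the $a_1$ bound (enlarging $C$ to $\max\{C,1\}$, using $\log(eK)+\log(1/M)=\log(eK/M)$ with both terms nonnegative) fills in a step the paper leaves implicit.
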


\begin{proof}
    For $x=My$,
    \[
        -x\log x
        =x\log\frac1M-My\log y.
    \]
    Thus \eqref{eq:low-block-unit-approximation} gives
    \eqref{eq:low-block-scaled-approximation}.  If
    $\EntPoly_{K,1}(y)=\sum_{k=1}^{K}b_ky^k$, then
    \[
        a_1=\log(1/M)+b_1,
        \qquad
        a_k=b_kM^{1-k}\quad(2\leq k\leq K).
    \]
    The coefficient bounds follow from
    \eqref{eq:low-block-unit-linear-coefficient} and
    \eqref{eq:low-block-unit-higher-coefficients}.

    Finally, apply \eqref{eq:low-block-scaled-approximation} to each
    eigenvalue of $A$.  There are at most $d$ eigenvalues, including zeros,
    so their errors sum to at most $dM/K^2$.
\end{proof}

\subsection{Error analysis}
\label{subsec:low-block-estimator-accuracy}

\begin{lemma}
    \label{lem:low-block-mass-estimation}
    There is a universal constant $C>0$ such that the following holds.  Fix
    $\rho\in\Dens(\mathbb C^d)$, an orthogonal projector $Q$ on $\mathbb C^d$,
    an integer $K\geq2$,  $0<M<1$,
    $0<\varepsilon\leq1$; set
    $\rho_{\mathrm{lo}}:=Q\rho Q$.  Let
    $\EntPoly_{K,M}$ be the polynomial defined in
    \eqref{eq:low-block-scaled-entropy-polynomial}, measure $\{\Id-Q,Q\}$ on
    $n_{\mathrm{mass}}$ independent samples of $\rho$, and denote the frequency
    of outcome $Q$ by $\widehat p_1$.  If
    \begin{equation}
        n_{\mathrm{mass}}
        \geq
        C\frac{\log^2(K/M)}{\varepsilon^2},
        \label{eq:low-block-mass-copy-count}
    \end{equation}
    then, with probability at least $0.99$,
    \begin{equation}
        \abs*{
            a_1\left(
                \widehat p_1-\tr(\rho_{\mathrm{lo}})
            \right)
        }
        \leq\frac{\varepsilon}{10}.
        \label{eq:low-block-mass-accuracy}
    \end{equation}
\end{lemma}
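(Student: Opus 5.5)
The plan is to combine the coefficient bound on $a_1$ from \cref{lem:low-block-scaled-entropy-polynomial} with Hoeffding's inequality (\cref{thm:prelim-hoeffding}) for the Bernoulli outcomes of the two-outcome measurement $\{\Id-Q,Q\}$.

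\textbf{Step 1: unbiasedness.} Each of the $n_{\mathrm{mass}}$ independent measurements yields the indicator $X_i\in\{0,1\}$ of the outcome $Q$. By Born's rule,
\[
\mathbb{E}X_i=\tr(Q\rho)=\tr(Q\rho Q)=\tr(\rho_{\mathrm{lo}}),
\]
using $Q^2=Q$ and cyclicity of the trace. Hence $\widehat p_1=n_{\mathrm{mass}}^{-1}\sum_i X_i$ is an unbiased empirical mean of i.i.d.\ variables in $[0,1]$.

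\textbf{Step 2: concentration.} The consequence of Hoeffding's inequality stated after \cref{thm:prelim-hoeffding} gives, with probability at least $0.99$,
\[
\abs{\widehat p_1-\tr(\rho_{\mathrm{lo}})}\le \frac{2}{\sqrt{n_{\mathrm{mass}}}}.
\]

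\textbf{Step 3: bounding $a_1$.} By \eqref{eq:low-block-scaled-linear-coefficient}, $\abs{a_1}\le C_1\log(eK/M)$ for a universal constant $C_1$. Since $K\ge2$ and $0<M<1$, we have $K/M\ge2$, so $\log(K/M)\ge\log 2$. Therefore
\[
\log(eK/M)=1+\log(K/M)\le\left(1+\frac{1}{\log 2}\right)\log(K/M)\le 3\log(K/M).
\]

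\textbf{Step 4: combining.} On the event of Step 2,
\[
\abs{a_1(\widehat p_1-\tr\rho_{\mathrm{lo}})}\le \frac{6C_1\log(K/M)}{\sqrt{n_{\mathrm{mass}}}}.
\]
This is at most $\varepsilon/10$ as soon as $n_{\mathrm{mass}}\ge 3600\,C_1^2\log^2(K/M)/\varepsilon^2$. So choosing the constant $C$ in \eqref{eq:low-block-mass-copy-count} to be $3600\,C_1^2$ suffices.

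The only step that needs any care is Step 3: converting $\log(eK/M)$ into a constant multiple of $\log(K/M)$. This works because $K/M$ is bounded away from $1$; if it could approach $1$, the logarithm would vanish and the required sample size would not scale as stated. The remaining steps are routine.
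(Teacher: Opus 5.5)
Your proof is correct and follows the paper's argument: Bernoulli indicators with mean $\tr(\rho_{\mathrm{lo}})$, the $2/\sqrt{n_{\mathrm{mass}}}$ Hoeffding bound, and the coefficient bound $\abs{a_1}\le C_1\log(eK/M)$. Your Step 3, which uses $K/M\ge 2$ to get $\log(eK/M)\le 3\log(K/M)$, is a detail the paper's proof skips. The paper asserts that $C\ge 400C_0^2$ suffices even though the hypothesis is stated with $\log^2(K/M)$ rather than $\log^2(eK/M)$, so your constant $3600C_1^2$ is the one that actually makes the bound go through.
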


\begin{proof}
    For each sample, let $X_i$ be the indicator of obtaining outcome
    $Q$.  By the Born rule, the $X_i$ are independent Bernoulli random
    variables with mean
    \[
        p:=\mathbb{E}X_i
        =\tr(Q\rho)
        =\tr(Q\rho Q)
        =\tr(\rho_{\mathrm{lo}}),
    \]
    where the second equality uses $Q^2=Q$ and cyclicity of the trace.
    Moreover,
    $\widehat p_1=n_{\mathrm{mass}}^{-1}\sum_{i=1}^{n_{\mathrm{mass}}}X_i$.

    Hoeffding's inequality, in the form stated in~\Cref{thm:prelim-hoeffding}, therefore gives, with probability at
    least $0.99$,
    \[
        \abs*{\widehat p_1-p}
        \leq \frac{2}{\sqrt{n_{\mathrm{mass}}}}.
    \]
    Let $C_0$ be the universal constant in
    \eqref{eq:low-block-scaled-linear-coefficient}.  On the same event,
    \begin{align*}
        \abs*{a_1(\widehat p_1-p)}
        &\leq \frac{2\abs*{a_1}}{\sqrt{n_{\mathrm{mass}}}} \\
        &\leq
        \frac{2C_0\log(eK/M)}{\sqrt{n_{\mathrm{mass}}}}
        \leq \frac{\varepsilon}{10}.
    \end{align*}
    The last inequality follows from
    \eqref{eq:low-block-mass-copy-count} once its universal constant $C$ is
    chosen so that $C\geq400C_0^2$.  Since
    $p=\tr(\rho_{\mathrm{lo}})$, this is exactly
    \eqref{eq:low-block-mass-accuracy}.
\end{proof}

\begin{lemma}
    \label{lem:low-block-higher-moment-contribution}
    There is a universal constant $C>0$ such that the following holds.  Fix a
    state $\rho\in\Dens(\mathbb C^d)$ and an orthogonal projector $Q$ on
    $\mathbb C^d$, and define
    \[
        \rho_{\mathrm{lo}}:=Q\rho Q.
    \]
    Fix an integer $K\geq2$,  $0<M<1$, and $0<\zeta\leq1$, with
    $\norm{\rho_{\mathrm{lo}}}_\infty\leq M$.
    Let
    $\EntPoly_{K,M}(x)=\sum_{k=1}^{K}a_kx^k$ be the polynomial in
    \eqref{eq:low-block-scaled-entropy-polynomial}, where
    $C_{\mathrm{poly}}$ is the universal constant in its coefficient bound
    \eqref{eq:low-block-scaled-higher-coefficients}.  Suppose
    \begin{equation}
        n_{\mathrm{mom}}
        \geq \frac{C d}{M\zeta^2},
        \qquad
        K\leq\sqrt{n_{\mathrm{mom}}},
        \qquad
        \frac{\zeta}{\sqrt{Md}}\leq\frac12.
        \label{eq:low-block-moment-admissibility}
    \end{equation}
    Let $(\widehat p_k)_{k=2}^{K}$ be the outputs of
    $\mathsf{ProjectedMoments}(\rho^{\otimes n_{\mathrm{mom}}},
    Q,M,\zeta,K)$.  Then, with probability at least $0.99$,
    \begin{equation}
        \sum_{k=2}^{K}\abs*{a_k}\abs*{\widehat p_k-\tr(\rho_{\mathrm{lo}}^k)}
        \leq C C_{\mathrm{poly}}^K2^{3K}K^{5K}
        \left(\frac{\zeta^2}{d}+MK\zeta\right).
        \label{eq:low-block-higher-moment-bound}
    \end{equation}
\end{lemma}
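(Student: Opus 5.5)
The plan is to invoke \Cref{thm:prelim-projected-moments} with $\Pi=Q$, $\sigma=\rho_{\mathrm{lo}}$, eigenvalue bound $M$ and accuracy parameter $\zeta$. Then I multiply each per-moment error by the coefficient bound \eqref{eq:low-block-scaled-higher-coefficients} and sum the resulting series.

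First, check that the hypotheses of \Cref{thm:prelim-projected-moments} hold. We have $\norm{\rho_{\mathrm{lo}}}_\infty\leq M$, so every eigenvalue of $\sigma$ is at most $M$, and $0<M<1$. The sample requirement $n=O(d/(M\zeta^2))$ is met by $n_{\mathrm{mom}}\geq Cd/(M\zeta^2)$ once $C$ is at least the implicit constant of that lemma, and $K\leq\sqrt{n_{\mathrm{mom}}}$ is assumed. Hence, with probability at least $0.99$, simultaneously for all $k\in[K]$,
\[
\abs*{\widehat p_k-\tr(\rho_{\mathrm{lo}}^k)}\leq 2^{3K}K^{5K}\left(\frac{M^{k/2}\zeta^k}{d^{k/2}}+M^k\zeta\right).
\]

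Next, for $2\leq k\leq K$, multiply by $\abs{a_k}\leq C_{\mathrm{poly}}^K M^{1-k}$. This gives
\[
\abs{a_k}\abs*{\widehat p_k-\tr(\rho_{\mathrm{lo}}^k)}\leq C_{\mathrm{poly}}^K 2^{3K}K^{5K}\left(M\left(\frac{\zeta}{\sqrt{Md}}\right)^k+M\zeta\right).
\]
The second term summed over $k=2,\dots,K$ contributes at most $MK\zeta$. For the first term, set $r:=\zeta/\sqrt{Md}\leq 1/2$ by the third admissibility condition. Then
\[
\sum_{k\geq2}Mr^k\leq 2Mr^2=\frac{2\zeta^2}{d},
\]
which is a geometric series dominated by its leading term. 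Adding the two contributions yields \eqref{eq:low-block-higher-moment-bound} with $C\geq 2$, together with the constant needed for the sample-count condition.

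The only point needing care is the first term. The factor $M^{1-k}$ exactly cancels $M^{k/2}\cdot M^{k/2-1}$, leaving a pure geometric series in $r$. This is why the hypothesis $\zeta/\sqrt{Md}\leq1/2$ is imposed, and it is the step I would verify most carefully. Everything else is bookkeeping of universal constants.
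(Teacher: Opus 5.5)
Your proposal is correct and matches the paper's proof step for step: apply \Cref{thm:prelim-projected-moments} with $\Pi=Q$, $B=M$ and accuracy $\zeta$, then multiply by $\abs{a_k}\leq C_{\mathrm{poly}}^K M^{1-k}$ to obtain $M\bigl[(\zeta/\sqrt{Md})^k+\zeta\bigr]$. The paper then bounds the geometric series in $z=\zeta/\sqrt{Md}\leq 1/2$ by $2z^2$ and uses $Mz^2=\zeta^2/d$, taking $C$ to be the larger of $2$ and the sample-count constant, exactly as you do.
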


\begin{proof}
    \Cref{thm:prelim-projected-moments} and
    \eqref{eq:low-block-scaled-higher-coefficients} give, simultaneously for
    $2\leq k\leq K$,
    \begin{align*}
        \abs*{a_k}
        \abs*{\widehat p_k-\tr(\rho_{\mathrm{lo}}^k)}
        &\leq
        C_{\mathrm{poly}}^K2^{3K}K^{5K}M
        \left[
            \left(
                \frac{\zeta}{\sqrt{Md}}
            \right)^k
            +\zeta
        \right].
    \end{align*}
    Put $z:=\zeta/\sqrt{Md}$.  Since $z\leq1/2$,
    \[
        \sum_{k=2}^{K}z^k
        \leq\frac{z^2}{1-z}
        \leq2z^2.
    \]
    Summing the preceding moment bounds and using
    $Mz^2=\zeta^2/d$ proves
    \eqref{eq:low-block-higher-moment-bound}.
\end{proof}

Using the estimates above, define
\begin{equation}
    \widehat{\S}_{\mathrm{lo}}
    :=a_1\widehat p_1+
      \sum_{k=2}^{K}a_k\widehat p_k,
    \label{eq:low-block-estimator}
\end{equation}
as in Algorithm~\ref{alg:entropy-estimator}.

\begin{corollary}
    \label{cor:low-block-accuracy}
    Use the parameter choices in Algorithm~\ref{alg:entropy-estimator}.
    Condition on successful tomography, denote its output by $\widehat\rho$,
    and define
    \[
        Q:=\Id-\mathbf{1}_{[B,\infty)}(\widehat\rho),
        \qquad
        \rho_{\mathrm{lo}}:=Q\rho Q.
    \]
  If sample sizes satisfy
    \[
        n_{\mathrm{mass}}
        =O\!\left(
            \frac{\log^2(ed/\varepsilon)}{\varepsilon^2}
        \right),
        \qquad
        n_{\mathrm{mom}}
        =O\!\left(
            \frac{d^2}{\varepsilon K^2\zeta^2}
        \right).
    \]
    With probability at least $0.98$, it holds that
    \begin{equation}
        \abs*{\widehat{\S}_{\mathrm{lo}}-\S(\rho_{\mathrm{lo}})}
        \leq\frac{3\varepsilon}{10}.
        \label{eq:low-block-accuracy}
    \end{equation}
\end{corollary}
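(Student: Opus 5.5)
Set $M:=2B$, so that $\EntPoly_{K,2B}=\EntPoly_{K,M}$. Conditioned on successful tomography, \cref{cor:learned-decomposition-and-pinching} gives $\norm{\rho_{\mathrm{lo}}}_\infty\leq 2B=M$. Since $\S(\rho_{\mathrm{lo}})$ is unaffected by restricting to $\Image{Q}$, and $\tr(\rho_{\mathrm{lo}}^k)$ with $\rho_{\mathrm{lo}}$ viewed on $\mathbb{C}^d$ equals the same quantity on $\Image{Q}$ (extra zero eigenvalues contribute nothing because $\EntPoly_{K,M}$ has no constant term), we have $\tr(\EntPoly_{K,M}(\rho_{\mathrm{lo}}))=\sum_k a_k\tr(\rho_{\mathrm{lo}}^k)$. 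The triangle inequality splits the error into three parts:
\[
\abs*{\widehat{\S}_{\mathrm{lo}}-\S(\rho_{\mathrm{lo}})}\leq \abs*{\S(\rho_{\mathrm{lo}})-\tr(\EntPoly_{K,M}(\rho_{\mathrm{lo}}))}+\abs*{a_1(\widehat p_1-\tr\rho_{\mathrm{lo}})}+\sum_{k=2}^K\abs*{a_k}\abs*{\widehat p_k-\tr(\rho_{\mathrm{lo}}^k)}.
\]
The plan is to bound each part by $\varepsilon/10$. The two probabilistic events come from independent sample batches, so a union bound gives success probability at least $0.98$. (If $Q=0$, both sides vanish.)

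\emph{Bias term.} By \eqref{eq:low-block-deterministic-bias}, this term is at most $dM/K^2=2dB/K^2$. With $B=\Theta(\varepsilon K^2/d)$ and a small enough implied constant, this is at most $\varepsilon/10$.

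\emph{Mass term.} Apply \cref{lem:low-block-mass-estimation}. Its sample requirement is $\log^2(K/M)/\varepsilon^2$, and $K/M=\Theta(d/(\varepsilon K))\leq d/\varepsilon$, so $\log(K/M)=O(\log(ed/\varepsilon))$. Hence $n_{\mathrm{mass}}=\Theta(\log^2(ed/\varepsilon)/\varepsilon^2)$ with a large constant suffices, and the term is at most $\varepsilon/10$ with probability $0.99$.

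\emph{Higher moments.} This is the main obstacle. I would apply \cref{lem:low-block-higher-moment-contribution} with $M=2B$, after checking its admissibility conditions.
\begin{itemize}
\item[(i)] $Cd/(M\zeta^2)=\Theta(d^2/(\varepsilon K^2\zeta^2))$, which matches $n_{\mathrm{mom}}$.
\item[(ii)] $K\leq\sqrt{n_{\mathrm{mom}}}$ holds because $n_{\mathrm{mom}}\gtrsim d^2/(\varepsilon K^2)$ is much larger than $K^2$, given $K=O(\log d)$.
\item[(iii)] $\zeta/\sqrt{Md}=\zeta/\Theta(K\sqrt{\varepsilon})\leq 1/2$ holds because $\zeta=\Theta(\min\{1,K\sqrt\varepsilon\})$ with a small constant.
\end{itemize}
The resulting bound is $C\,C_{\mathrm{poly}}^K2^{3K}K^{5K}(\zeta^2/d+MK\zeta)$, and controlling it is the delicate part. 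With $K=c\log d/\log\log d$ for a small constant $c$, we get $(8C_{\mathrm{poly}}K^5)^K\leq e^{6K\log K}=d^{O(c)}$, say at most $d^{1/4}$. Then $\zeta^2/d\leq 1/d$, so the first piece is $\lesssim d^{-3/4}$. The second piece is $MK\zeta=\Theta(\varepsilon K^3\zeta/d)$, which is $\lesssim d^{1/4}\varepsilon\,\mathrm{polylog}(d)/d$, far below $\varepsilon$.

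To make the first piece at most $\varepsilon/20$, I need $d^{-3/4}\lesssim\varepsilon$. If instead $\varepsilon<d^{-1/2}$, the main term $d^2/\varepsilon^2$ dominates and one could fall back to another estimator. More cleanly, the factor $\zeta^2\leq K^2\varepsilon$ gives $\zeta^2/d\leq K^2\varepsilon/d$, so the first piece is at most $d^{1/4}K^2\varepsilon/d\ll\varepsilon$. This removes the case split. Summing the three parts gives $3\varepsilon/10$.
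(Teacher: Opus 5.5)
Your proposal is correct and follows essentially the same route as the paper: the same three-term triangle-inequality split with $M=2B$, the bias bound \eqref{eq:low-block-deterministic-bias}, \cref{lem:low-block-mass-estimation} for the mass term, and \cref{lem:low-block-higher-moment-contribution} combined with the key observation $\zeta^2\le K^2\varepsilon$, which makes the $\zeta^2/d$ piece proportional to $\varepsilon$. The only difference is calibration. The paper defines $K$ as the largest $k$ with $C_{\mathrm{poly}}^k2^{3k}k^{5k+3}\le C_Kd$, so the coefficient blow-up is at most $C_Kd/K^3$ and the $\varepsilon/10$ budget is met with explicit constants; you instead take a small constant in $K=c\log d/\log\log d$, which caps the blow-up at $d^{1/4}$ and meets the budget through asymptotic slack for $d$ beyond a universal threshold (the paper likewise works with $d\ge d_0$).
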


\begin{proof}
For fixed universal constants $C_{\mathrm{poly}},C_K>0$, make the first line
of Algorithm~\ref{alg:entropy-estimator} precise by defining
\begin{equation}
    K:=\max\set{k\geq2}{
        C_{\mathrm{poly}}^k2^{3k}k^{5k+3}\leq C_Kd
    }.
    \label{eq:entropy-estimator-degree}
\end{equation}

    If $Q=0$, then $\rho_{\mathrm{lo}}=0$ and Algorithm
    \ref{alg:entropy-estimator} returns $\widehat{\S}_{\mathrm{lo}}=0$.
    Assume $Q\neq0$.

    Set $M:=2B$.  Corollary~\ref{cor:learned-decomposition-and-pinching} gives
    $\norm{\rho_{\mathrm{lo}}}_\infty\leq M$.  By choosing the constant in
    $B=\Theta(\varepsilon K^2/d)$ sufficiently small, we have $M<1$ and
    \begin{equation}
        \frac{dM}{K^2}\leq\frac{\varepsilon}{10}.
        \label{eq:low-block-bias-budget}
    \end{equation}

    We next verify the hypotheses of
    Lemmas~\ref{lem:low-block-mass-estimation} and
    \ref{lem:low-block-higher-moment-contribution}.  Since
    $M=\Theta(\varepsilon K^2/d)$,
    \[
        \log(eK/M)=O(\log(ed/\varepsilon)),
        \qquad
        \frac{d}{M\zeta^2}
        =\Theta\!\left(
            \frac{d^2}{\varepsilon K^2\zeta^2}
        \right).
    \]
    Thus $n_{\mathrm{mass}}$ and $n_{\mathrm{mom}}$ in
    Algorithm~\ref{alg:entropy-estimator}, with sufficiently large constants,
    satisfy
    \eqref{eq:low-block-mass-copy-count} and the first condition in
    \eqref{eq:low-block-moment-admissibility}.  The choices of $K$ and
    $\zeta$, with a sufficiently small constant in $\zeta$, also ensure
    \[
        K\leq\sqrt{n_{\mathrm{mom}}},
        \qquad
        \frac{\zeta}{\sqrt{Md}}\leq\frac12.
    \]

    To check the higher-moment error, the definition of $K$ in
    \eqref{eq:entropy-estimator-degree} implies
    \[
        C_{\mathrm{poly}}^K2^{3K}K^{5K}
        \leq\frac{C_Kd}{K^3}.
    \]
    Moreover, $\zeta\leq K\sqrt\varepsilon$, $\zeta\leq1$, and
    $M=O(\varepsilon K^2/d)$.  Consequently, the constant choices specified
    below ensure
    \begin{equation}
        C C_{\mathrm{poly}}^K2^{3K}K^{5K}
        \left(\frac{\zeta^2}{d}+MK\zeta\right)
        \leq\frac{\varepsilon}{10},
        \label{eq:low-block-moment-budget}
    \end{equation}
    where $C$ is the constant in
    Lemma~\ref{lem:low-block-higher-moment-contribution}.

    These constant choices are compatible.  To make their order explicit,
    write
    \[
        B=c_B\frac{\varepsilon K^2}{d},
        \qquad
        \zeta=c_\zeta\min\{1,K\sqrt\varepsilon\}.
    \]
    The constants $C$ and $C_{\mathrm{poly}}$ come from earlier lemmas and
    are already fixed.  For example, choose $c_B$ and $C_K$ so that
    \[
        c_B\leq\frac1{20},
        \qquad
        C C_K\left(\frac12+2c_B\right)\leq\frac1{10}.
    \]
    The first condition gives the bias budget.  The second gives the
    higher-moment budget because $K\geq2$, $\zeta^2\leq K^2\varepsilon$, and
    $\zeta\leq1$.

    Next choose
    $c_\zeta\leq\min\{1,\sqrt{c_B/2}\}$.  Since $M=2B$, this gives
    \[
        \frac{\zeta}{\sqrt{Md}}
        \leq\frac{c_\zeta}{\sqrt{2c_B}}
        \leq\frac12.
    \]

    Finally, choose the constants in
    $n_{\mathrm{tom}}$, $n_{\mathrm{mass}}$, and $n_{\mathrm{mom}}$
    sufficiently large to meet their lower bounds.  Thus no later choice
    weakens an earlier requirement.

    Functional calculus and
    \eqref{eq:low-block-scaled-entropy-polynomial} give
    \[
        \tr\!\left(\EntPoly_{K,M}(\rho_{\mathrm{lo}})\right)
        =a_1\tr(\rho_{\mathrm{lo}})
         +\sum_{k=2}^{K}a_k\tr(\rho_{\mathrm{lo}}^k).
    \]
    Subtracting this identity from \eqref{eq:low-block-estimator} yields
    \begin{align*}
        \abs*{
            \widehat{\S}_{\mathrm{lo}}-\S(\rho_{\mathrm{lo}})
        }
        &\leq
        \abs*{
            \S(\rho_{\mathrm{lo}})
            -\tr\!\left(\EntPoly_{K,M}(\rho_{\mathrm{lo}})\right)
        }\\
        &\quad+
        \abs*{a_1}
        \abs*{\widehat p_1-\tr(\rho_{\mathrm{lo}})}\\
        &\quad+
        \sum_{k=2}^{K}\abs*{a_k}
        \abs*{\widehat p_k-\tr(\rho_{\mathrm{lo}}^k)}.
    \end{align*}
    Lemma~\ref{lem:low-block-scaled-entropy-polynomial} and
    \eqref{eq:low-block-bias-budget} bound the first term by
    $\varepsilon/10$.  Lemmas~\ref{lem:low-block-mass-estimation} and
    \ref{lem:low-block-higher-moment-contribution}, together with
    \eqref{eq:low-block-moment-budget}, bound the other two terms by
    $\varepsilon/10$ each.

    Each of the two statistical lemmas succeeds with probability at least
    $0.99$.  By the union bound, both succeed with probability at least
    $0.98$.  On this joint event, the three error contributions sum to
    $3\varepsilon/10$.
\end{proof}

\section{Proof of the Main Theorem}
\label{sec:proof-main-theorem}

\begin{proof}[Proof of Theorem~\ref{thm:main-intro}]
    Consider Algorithm~\ref{alg:entropy-estimator} and condition on successful
    tomography.  Define
    \[
        P:=\mathbf{1}_{[B,\infty)}(\widehat\rho),
        \qquad
        Q:=\Id-P,
    \]
    and set $\rho_{\mathrm{hi}}:=P\rho P$ and
    $\rho_{\mathrm{lo}}:=Q\rho Q$.

    We first bound the entropy loss from pinching.  By
    Corollary~\ref{cor:learned-decomposition-and-pinching},
    \[
        t:=\tr(X^\dagger\rho_{\mathrm{hi}}^{-1}X)
        \leq\frac{Cd}{n_{\mathrm{tom}}B}.
    \]
    Writing
    \[
        n_{\mathrm{tom}}
        =c_{\mathrm{tom}}
        \frac{d^2\log(e/\varepsilon)}{\varepsilon^2K^2},
        \qquad
        B=c_B\frac{\varepsilon K^2}{d},
    \]
    we obtain
    \[
        t\leq
        \frac{C}{c_{\mathrm{tom}}c_B}
        \frac{\varepsilon}{\log(e/\varepsilon)}.
    \]
    Fixing $c_B$, choose $c_{\mathrm{tom}}$ sufficiently large that
    $t\log(e/t)\leq\varepsilon/2$ for every
    $0<\varepsilon\leq1/10$.  Lemma~\ref{lem:pinching-inequality}
    and $\Phi_P(\rho)=\rho_{\mathrm{hi}}\oplus\rho_{\mathrm{lo}}$ then give
    \begin{equation}
        0\leq
        \S(\rho_{\mathrm{hi}})+\S(\rho_{\mathrm{lo}})-\S(\rho)
        \leq\frac{\varepsilon}{2}.
        \label{eq:main-pinching-budget}
    \end{equation}
    This constant-factor increase does not change the asymptotic sample
    complexity.

    Combining Proposition~\ref{prop:high-block-accuracy} and
    Corollary~\ref{cor:low-block-accuracy}, Algorithm
    \ref{alg:entropy-estimator} and \eqref{eq:main-pinching-budget} give
    \begin{align*}
        \abs*{\widehat{\S}-\S(\rho)}
        &\leq
        \abs*{\widehat{\S}_{\mathrm{hi}}-\S(\rho_{\mathrm{hi}})}
        +\abs*{\widehat{\S}_{\mathrm{lo}}-\S(\rho_{\mathrm{lo}})}\\
        &\quad+
        \abs*{\S(\rho_{\mathrm{hi}})+\S(\rho_{\mathrm{lo}})-\S(\rho)}\\
        &\leq
        \frac{\varepsilon}{5}
        +\frac{3\varepsilon}{10}
        +\frac{\varepsilon}{2}
        =\varepsilon.
    \end{align*}

    Tomography succeeds with probability at least $0.99$.  Conditioned on
    successful tomography, high-block estimation succeeds with probability at
    least $0.99$, while the two low-block stages succeed jointly with
    probability at least $0.98$.  A union bound over the four randomized
    stages gives
    \[
        \Pr\!\left[\text{all four stages succeed}\right]
        \geq1-0.01-0.01-0.02
        =0.96
        \geq0.9.
    \]

    We next bound the sample complexity.  The tomography, high-block, and
    low-block mass stages use
    \begin{align*}
        n_{\mathrm{tom}}
        &=O\!\left(
            \frac{d^2\log(1/\varepsilon)}{\varepsilon^2K^2}
        \right),\\
        n_{\mathrm{hi}}+n_{\mathrm{mass}}
        &=O\!\left(
            \frac{\log^2(d/\varepsilon)}{\varepsilon^2}
        \right).
    \end{align*}

    It remains to bound $n_{\mathrm{mom}}$.  We consider two cases.  If
    $K\sqrt\varepsilon\geq1$, then $\zeta=\Theta(1)$ and
    \[
        n_{\mathrm{mom}}
        =O\!\left(\frac{d^2}{\varepsilon K^2}\right)
        =O\!\left(
            \frac{d^2\log(1/\varepsilon)}{\varepsilon^2K^2}
        \right).
    \]
    If $K\sqrt\varepsilon<1$, then
    $\zeta=\Theta(K\sqrt\varepsilon)$ and
    \[
        n_{\mathrm{mom}}
        =O\!\left(\frac{d^2}{\varepsilon^2K^4}\right)
        =O\!\left(
            \frac{d^2\log(1/\varepsilon)}{\varepsilon^2K^2}
        \right).
    \]
    Thus $n_{\mathrm{mom}}$ is absorbed by the tomography term in either
    case.

    For a sufficiently large universal $d_0$, the degree choice
    \eqref{eq:entropy-estimator-degree} gives
    $K=\Theta(\log d/\log\log d)$.  Therefore, the total sample complexity is
    \[
        O\!\left(
            \frac{d^2(\log\log d)^2\log(1/\varepsilon)}
                 {\varepsilon^2(\log d)^2}
            +\frac{\log^2(d/\varepsilon)}{\varepsilon^2}
        \right).
    \]
    This completes the proof.
\end{proof}

\section*{Acknowledgment}

The authors used Large Language Models as AI-assisted research and writing tools throughout the
preparation of this manuscript. The tools were used to help brainstorm ideas and explore proof
strategies. Portions of the manuscript text were redrafted or modified with AI assistance across all
sections. All final mathematical claims, algorithms, proofs, citations, and wording were reviewed,
edited, and validated by the authors. The authors assume responsibility for all content of the
submission.

\addcontentsline{toc}{section}{References}

\bibliographystyle{alphaurl}
\bibliography{main}

\end{document}